\newtheorem{assumption}{Assumption}
\newtheorem*{theorem*}{Theorem}
\newtheorem{theorem}{Theorem}
\newtheorem{lemma}{Lemma}
\newtheorem{remark}{Remark}
\newtheorem*{corollary*}{Corollary}
\newcommand{\RNum}[1]{\uppercase\expandafter{\romannumeral #1\relax}}
\newcommand{\Perp}{\perp\!\!\!\perp}
\newcommand \Ya {Y_i(a)}
\newcommand \Yt {Y_i(1)}
\newcommand \Yc {Y_i(0)}
\newcommand \ind[1] {I(#1)}
\newcommand \diag[1] {\operatorname{diag}\{#1\} }
\newcommand \E[1]{E\{#1\}}
\newcommand \Econd[2]{E\{#1\mid#2\}}
\newcommand \Var[1]{\operatorname{var}\{#1\}}
\newcommand \Varcond[2]{\operatorname{var}\{#1\mid#2\}}
\newcommand \Prob[1]{\operatorname{pr}(#1)}
\newcommand \ProbB[1]{\operatorname{pr}\{#1\}}
\newcommand \inp {\overset{P}{\rightarrow}}
\newcommand \indist {\overset{d}{\rightarrow}}
\newcommand \equaldist {\overset{d}{=}}
\newcommand \ns[1]{n(#1)}
\newcommand \ps[1]{p(#1)}
\newcommand \Dns {D_{n}(s)}
\newcommand \nx[1]{n_{#1}}
\newcommand \px[1]{p_{#1}}
\newcommand \hatpx[1]{\hat p_{#1}}
\newcommand \nax[2]{n_{#1#2}}
\newcommand \nas[2]{n_{#1}(#2)}
\newcommand \tildenas[2] {\tilde{n}_{#1}(#2)}
\newcommand \naxs[3]{n_{#1#2}(#3)}
\newcommand \nxs[2]{n_{. #1}(#2)}
\newcommand \pxs[2]{p_{. #1}(#2)}
\newcommand \muax[2]{\mu_{#1#2}}
\newcommand \Yax[2]{\bar{Y}_{#1#2}}
\newcommand \sigmaSqax[2]{\sigma^2_{#1#2}}
\newcommand \hatsigmaSqax[2]{\hat \sigma^2_{#1#2}}
\newcommand \hatmuas[2]{\hat{\mu}_{n,#1}(#2)}
\newcommand \hatmuaxs[3]{\hat{\mu}_{n,#1#2}(#3)}
\newcommand \zetaStrY[1]{\varsigma_{\tilde{Y}|X=#1}^2(\pi)}
\newcommand \zetaStrA[1]{\varsigma^2_{A, Y|X=#1}(\pi)}
\newcommand \zetaStrH[1] {\varsigma^2_{H, Y|X=#1}}
\newcommand \zetaAddY[1] {\varsigma_{\tilde{r}_#1}^2(\pi)}
\newcommand \zetaAddA[1] {\varsigma_{A,{r_#1}}^2(\pi)}
\newcommand \zetaAddH[1] {\varsigma_{H,{r_#1}}^2}
\newcommand \zetaAddYxy[2] {\varsigma_{\tilde{r}_{#1#2}}(\pi)}
\newcommand \zetaAddAxy[2] {\varsigma_{A,{r_{#1#2}}}(\pi)}
\newcommand \zetaAddHxy[2] {\varsigma_{H,{r_{#1#2}}}}
\newcommand \zetaCheckY[1]{\varsigma_{\check{Y}|X_i = #1}^2(\pi)}
\newcommand \zetaCheckH[1] {\varsigma^2_{H, {Y}^#1|X_i = #1}}
\newcommand \hatzetaStrY[1]{\hat\varsigma_{\tilde{Y}|X=#1}^2(\pi)}
\newcommand \hatzetaStrA[1]{\hat\varsigma^2_{A, Y|X=#1}(\pi)}
\newcommand \hatzetaStrH[1] {\hat\varsigma^2_{H, Y|X=#1}}
\newcommand \hatzetaAddY[1] {\hat\varsigma_{\tilde{r}_#1}^2(\pi)}
\newcommand \hatzetaAddA[1] {\hat\varsigma_{A,{r_#1}}^2(\pi)}
\newcommand \hatzetaAddH[1] {\hat\varsigma_{H,{r_#1}}^2}
\newcommand \hatzetaAddYxy[2] {\hat\varsigma_{\tilde{r}_{#1#2}}(\pi)}
\newcommand \hatzetaAddAxy[2] {\hat\varsigma_{A,{r_{#1#2}}}(\pi)}
\newcommand \hatzetaAddHxy[2] {\hat\varsigma_{H,{r_{#1#2}}}}
\newcommand \hatzetaCheckY[1]{\hat\varsigma_{\check{Y}|X_i = #1}^2(\pi)}
\newcommand \hatzetaCheckH[1] {\hat\varsigma^2_{H, {Y}^#1|X_i = #1}}
\newcommand \tildeYa[1]{\tilde{Y}_i(#1)}
\newcommand \checkYa[1]{\check{Y}_i(#1)}
\newcommand \maxZ[2]{m_{#1#2}(Z_i)}
\newcommand \rax[2]{r_{#2,i}(#1)}
\newcommand \tilderax[2]{\tilde r_{#2,i}(#1)}
\newcommand \tilderaxs[2]{\tilde r^s_{#2,i}(#1)}
\newcommand \raxZ[2]{r_{#1#2}(Z_i)}
\newcommand \Mx[1]{M_{n, #1}}
\newcommand \Rx[1]{R_{n, #1}}
\newcommand \Zx[1]{Z_{n, #1}}
\newcommand \Dax[2]{D_{n,#1#2}}
\newcommand \Ex[1]{E_{n,#1}}
\newcommand \Ux[1]{U_{n,#1}}
\newcommand \Sx[1]{S_{n,#1}}
\newcommand \Lx[1]{L_{n,#1}}
\newcommand \etax[1]{\eta_{#1}}
\newcommand \upsilonx[1]{\upsilon_{#1}}
\newcommand \xix[1]{\xi_{#1}}
\newcommand \tildeYas[1]{\tilde{Y}^s_i(#1)}
\newcommand \Erx[1]{E_{n,r_#1}}
\newcommand \Urx[1]{U_{n,r_#1}}
\newcommand \Srx[1]{S_{n,r_#1}}
\newcommand \etarx[1]{\eta_{r_#1}}
\newcommand \upsilonrx[1]{\upsilon_{r_#1}}
\newcommand \xirx[1]{\xi_{r_#1}}
\newcommand \Scheckx[1]{S_{n, Y^#1}}
\newcommand \Lcheckx[1]{L_{n, Y^#1}}
\newcommand \etacheckx[1]{\eta_{Y^#1}}
\newcommand \xicheckx[1]{\xi_{Y^#1}}
\newcommand \checkYaxs[2]{\check Y^{#2,s}_i(#1)}
\newcommand \zetatold{\varsigma_{t_{old}}^2}
\newcommand \zetat{\varsigma_{t}^2}
\newcommand \zetatprime{\varsigma_{t^\prime}^2}
\begin{document}
\begin{singlespace}
\title{\bf Interaction tests with covariate-adaptive randomization}

\author{
\small
{
Likun Zhang, Wei Ma\thanks{\small{Correspondence: \texttt{mawei@ruc.edu.cn}}}
}
\\ \\
{\small Institute of Statistics and Big Data, Renmin University of China, Beijing, China}
}

\date{}
\maketitle
\end{singlespace}

\thispagestyle{empty}
\vskip -8mm 

\begin{singlespace}
\begin{abstract}
Treatment-covariate interaction tests are commonly applied by researchers to examine whether the treatment effect varies across patient subgroups defined by baseline characteristics. The objective of this study is to explore treatment-covariate interaction tests involving covariate-adaptive randomization. Without assuming a parametric data generating model, we investigate usual interaction tests and observe that they tend to be conservative: specifically, their limiting rejection probabilities under the null hypothesis do not exceed the nominal level and are typically strictly lower than it. To address this problem, we propose modifications to the usual tests to obtain corresponding valid tests. Moreover, we introduce a novel class of stratified-adjusted interaction tests that are simple, more powerful than the usual and modified tests, and broadly applicable to most covariate-adaptive randomization methods. The results are general to encompass two types of interaction tests: one involving stratification covariates and the other involving additional covariates that are not used for randomization. Our study clarifies the application of interaction tests in clinical trials and offers valuable tools for revealing treatment heterogeneity, crucial for advancing personalized medicine.

\vspace{12pt}
\noindent {\bf Key words}: Personalized medicine; Randomized controlled trial; Stratified randomization; Treatment effect heterogeneity; Treatment-covariate interaction.
\end{abstract}

\end{singlespace}

\newpage

\clearpage
\setcounter{page}{1}

\allowdisplaybreaks
\baselineskip=24pt

\begin{singlespace}

\section{Introduction}
	Randomized controlled trials are considered the gold standard for evaluating the effectiveness of treatments. The overall average treatment effect is typically reported to measure the impact of the treatment across the entire population. However, the population often exhibits heterogeneity, leading to individuals experiencing variable responses to the same treatment, which can range from differing levels of benefit to potential harm \citep{gabler2009, vanderweele2014}. According to \cite{schork2015}, top ten highest-grossing drugs in the United States are effective for less than 1 in 4 of the individuals taking them. It is widely encouraged that the analysis of heterogeneity of treatment effects, undertaken to examine the nonrandom variability of treatment effects among individuals within a population, should follow the evaluation of the trial's primary outcome \citep{varadhan2013}. The guideline from European Medicines Agency recommends that ``grouping together patients with similar characteristics to explore variability of response to treatment between different groups of patients constitutes an integral part of the risk/benefit assessment''\citep{EMA2019}. Analyzing treatment effect heterogeneity is also crucial for cost-effective marketing strategies and personalized medicine \citep{gong2021}. The appropriate approach to assessing treatment effect heterogeneity involves using interaction tests to examine contrasting effects across different baseline covariate levels \citep{kent2018, christensen2021}.

 
Extensive research has been conducted on interaction tests in randomized controlled trials, with notable studies by \cite{altman2003}, \cite{royston2004}, \cite{tian2014}, and \cite{imbens2015}. Most of these studies focused primarily on the simple randomization scheme, which independently allocates patients to treatment and control groups at fixed probabilities. 
However, it is common practice to first stratify patients by baseline covariates, known or suspected to influence the outcome, before allocating them through stratified randomization.
Although interaction tests for these stratification covariates are a common practice \citep{wallach2017}, the dependency induced by stratified randomization complicates the identification of an optimal method for conducting valid interaction tests.
Moreover, concerns regarding the credibility of interaction testing have been raised when the covariate of interest is not used for stratification \citep{cui2002, grouin2005, sun2011}.
Therefore, the objective of this study is to investigate interaction tests under stratified randomization, and more broadly, covariate-adaptive randomization.

Covariate-adaptive randomization refers to randomization schemes that aim to balance treatments among baseline covariates. A commonly used scheme is stratified randomization. Stratified randomization groups units into strata based on one or more covariates. Within each stratum, a restricted randomization method such as block randomization \citep{ZELEN1974} or biased-coin design \citep{efron1971} is applied.
\cite{Pocock1975} introduced a minimization procedure to achieve balance over covariates' margins. \cite{Hu2012} generalized this approach to enable the simultaneous control of various types of imbalances. Covariate-adaptive randomization is arguably the most widely used design in randomized controlled trials, as evidenced by a survey of 298 randomized trials published in leading medical journals, over 80\% of which used covariate-adaptive randomization \citep{ciolino2019}. A comprehensive review of covariate-adaptive randomization in clinical trials can be found in \cite{Rosenberger2015}. 

Early studies on statistical inference under covariate-adaptive randomization were primarily based on simulations \citep{birkett1985, forsythe1987, weir2003}. In the last decade, theoretical studies have emerged, laying a solid foundation for the methodologies of conducting statistical inference under covariate-adaptive randomization. \cite{shao2010} highlighted that, provided that the simple linear regression model accurately represents the data generating model, the usual two sample $t$-test is conservative under stratified biased-coin design, meaning its limiting rejection probability under the null hypothesis does not exceed and is typically strictly less than the nominal level. Considering different randomization schemes and model assumptions, \cite{shao2013}, \cite{ma2015}, \cite{ma2020}, and \cite{wang2021} proposed various valid tests in which the limiting rejection probabilities were equal to the nominal level. More recently, from a model-free perspective, \cite{Bugni2018} investigated the two sample $t$-test and the $t$-test with strata fixed effects, without assuming any specific functional form for the data generating model. Built upon this work, several regression-adjusted estimators have been proposed to further improve statistical efficiency \citep{Bugni2019,ma2022,ye2022inference, Liu2023}.

Notably, the above-mentioned studies focused primarily on inferring the overall average treatment effect. Although several researchers have included treatment-covariate interactions in their regression models \citep{Bugni2019,ma2022,ye2022, Liu2023}, their objective was not to infer these interactions but to evaluate the overall average treatment effect more efficiently.
To the best of our knowledge, the inferential properties of interaction tests under covariate-adaptive randomization remain largely unexplored.
Our work aims to fill this gap by addressing the crucial question of whether the usual interaction test is valid and, if not, how to construct valid and more powerful tests.
Additionally, it is conceptually important to distinguish between the two types of covariates under investigation: the stratification covariates that are used for the covariate-adaptive randomization scheme, and the additional baseline covariates that may also exhibit associations with the outcome and possess potential interaction effects. 
Therefore, examining the interaction tests for both covariates is crucial. 
Moreover, we seek to determine whether both types of covariates can be accommodated within the same inference framework, allowing a universal algorithm to be used for testing interaction effects.

Considering these aspects, this study aims to examining the inferential properties of interaction tests under covariate-adaptive randomization. Specifically, we begin by providing a model-free definition of interaction effects. Subsequently, for both stratification covariates and additional baseline covariates, we assess the validity of usual interaction tests and determine that they tend to be conservative under covariate-adaptive randomization. To remedy this issue, we modify these conservative tests to derive valid tests. Additionally, we introduce and explore a novel category of interaction tests that incorporate stratified adjustment and apply for both types of the covariates under investigation. Our findings suggest that stratified-adjusted interaction tests, when applied under covariate-adaptive randomization, provide numerous advantages, including simplicity, high statistical power, and broad applicability.

The remainder of this paper is organized as follows. In Section \ref{section:notation}, we describe our framework and notation.
The study of interaction tests for binary covariates is divided into two parts: Section \ref{section:binary_used} covers stratification covariates and Section \ref{section:binary_add} discusses additional baseline covariates. Section \ref{section:simulation} presents the results of a simulation study on our interaction tests. In Section \ref{section:multi}, we extend the main results from binary to categorical covariates. In Section \ref{section:discussion}, we discuss further applications of our methods. Additional simulations and a clinical trial example are given in the Supplementary Material.

	\section{Framework and notation} \label{section:notation}
  In a covariate-adaptive randomized experiment with two treatments, let $A_i$ denote the $i \text{th}$ treatment assignment, with $A_i = 1$ for the treatment and $A_i = 0$ for the control. We use the Neyman--Rubin model of causal inference \citep{Neyman1990, Rubin1974}. Let $\Yt$ and $\Yc$ denote the two scalar potential outcomes for unit $i$, when exposed to the treatment and to the control, respectively. The observed outcome $Y_i$ is determined as
$$
Y_i = \Yt A_i + \Yc(1-A_i).
$$
Let $A^{(n)}=\left\{A_1, \ldots, A_n\right\}$ and $W^{(n)}=\left\{W_1, \ldots, W_n\right\}$, where $W_i = (\Yt, \Yc, Z_i^{\mathrm{T}})^{\mathrm{T}}$, $i = 1, \dots, n$, are independent and identically distributed samples from the population distribution $W = (Y(1), Y(0), Z^{\mathrm{T}})^{\mathrm{T}}$. $Z$ is a random vector of baseline covariates.
Moreover, let $V^{(n)}=\left\{V_1, \ldots, V_n\right\}$, where $V_i = (Y_i, A_i, Z_i^{\mathrm{T}})^{\mathrm{T}}$, $i = 1, \dots, n$, are observed data. 

The units are stratified into strata based on baseline covariates $Z_i$ using the stratification function $S$: $\text{supp}(Z_i) \rightarrow \mathcal{S}$, where $\mathcal{S}$ is a finite set, and $\text{supp}(Z_i)$ is the support of $Z_i$. In general, units are stratified according to certain stratification covariates, such as gender, grade, or location.  Let  $S^{(n)}=\left\{S_1, \ldots, S_n\right\}$, where $S_i = S(Z_i)$ is the stratum label for unit $i$. For simplicity, we assume that units are assigned to each stratum with positive probability, that is, $p(s) = \Prob{S_i = s} > 0$ for all $s \in \mathcal{S}$. 

We measure the imbalance in stratum $s\in \mathcal{S}$ by 
$$
\Dns=\sum_{i=1}^n\left(A_{i}-\pi\right) \ind{S_i = s}, 
$$
where $\pi \in (0, 1)$ is the target treatment proportion, which is invariant across strata. In stratum $s$, let $\ns{s} = \sum_{i=1}^n \ind{S_i = s}$, $\nas{1}{s} = \sum_{i=1}^n A_i\ind{S_i = s}$, and $\nas{0}{s} = \sum_{i=1}^n (1-A_i)\ind{S_i = s}$ denote the number of units, number of treated units, and number of control units, respectively. 

We use $X_i$ to denote a covariate, the additive treatment-covariate interaction of which is of interest. We assume that $X_i$ is categorical. If $X_i$ is continuous, it can be categorized before the analysis. $\mathcal{X}$ denotes the finite set of all possible levels of $X_i$. We also assume that $\px{x} = \Prob{X_i = x} > 0$ for all $x \in \mathcal{X}$. The additive treatment-covariate interaction can be mathematically expressed as
\begin{align*}
	\Econd{\Yt - \Yc}{X_i = x_1} \neq \Econd{\Yt - \Yc}{X_i = x_2},
\end{align*}
for some $x_1$, $x_2$ $\in \mathcal{X}$. This expression indicates that the treatment effect varies across individuals with different levels of $X_i$. Notably, the definition of interaction often varies with the statistical models and depends on the correct model specification. In this study, we use a model-free definition based on the variation in the conditional treatment effect. In the remainder of this paper, we refer to ``additive treatment-covariate interaction'' as ``interaction'' when there is no potential confusion. Our objective is to test the presence of interaction for covariate $X_i$.

For $x \in \mathcal{X}$, let $\nx{x} = \sum_{i=1}^n \ind{X_i = x}$, $\nax{1}{x}= \sum_{i=1}^n A_i \ind{X_i = x}$, and $\nax{0}{x} = \sum_{i=1}^n (1-A_i)\ind{X_i = x}$. To simplify the notation, we define
\begin{equation*}
	\tau_x = \Econd{\Yt - \Yc}{X_i = x}.
\end{equation*}
Furthermore, we define $\muax{1}{x} = \Econd{\Yt}{X_i = x}$ and $\muax{0}{x} = \Econd{\Yc}{X_i = x}$. A common estimator of $\tau_x$ is the difference-in-means estimator $\Yax{1}{x} - \Yax{0}{x}$, where $\Yax{1}{x} = (1 / \nax{1}{x})\sum_{i=1}^n Y_iA_i \ind{X_i = x}$ and $\Yax{0}{x} = (1/\nax{0}{x}) \sum_{i=1}^n Y_i(1-A_i) \ind{X_i = x}$. Let $\sigmaSqax{1}{x} = \Varcond{\Yt}{X_i = x}$, $\sigmaSqax{0}{x} = \Varcond{\Yc}{X_i = x}$ denote the conditional variances of potential outcomes, given $X_i = x$.  Let $\hatsigmaSqax{1}{x} = (1 / \nax{1}{x})\sum_{i=1}^{n}(Y_i - \Yax{1}{x})^2A_i\ind{X_i = x}$ and  $\hatsigmaSqax{0}{x} = (1 / \nax{0}{x})\sum_{i=1}^n(Y_i - \Yax{0}{x})^2(1-A_i)\ind{X_i = x}$ be sample versions of $\sigmaSqax{1}{x}$ and $\sigmaSqax{0}{x}$, respectively. 

We introduce the following assumptions regarding the data generating process and covariate-adaptive randomization:

\begin{assumption}
	$\E{|\Ya|^2} < \infty$,  and 
$$\max_{s \in \mathcal{S}} \Varcond{\Ya}{S_i = s, X_i = x} > 0,$$ for $a \in \{1, 0\}$, $x \in \mathcal{X}$.
	\label{assumption:nondegenerate}
\end{assumption}

\begin{assumption}
        $W^{(n)} \Perp A^{(n)} \mid S^{(n)}$.
	\label{assumption:cond ind}
\end{assumption}
\begin{assumption}
	$$
	\begin{aligned}
		\left[\left\{\frac{\Dns}{\sqrt{n}}\right\}_{s \in \mathcal{S}} \mid S^{(n)}\right] \stackrel{d}{\rightarrow} N\left(0, \Sigma_{D}\right) \text { a.s., where } 
		\Sigma_{D} &=\diag{p(s) q(s): s \in \mathcal{S}} \\ \text{ with } 0 \leq  q(s)\leq\pi(1-\pi).
	\end{aligned}
	$$
	\label{assumption:normal}
\end{assumption}

\begin{assumption}
	$\frac{\Dns}{\ns{s}} = o_P(1) $ for all $s \in \mathcal{S}$.
	\label{assumption:normal_weaker}
\end{assumption}

The second requirement in Assumption \ref{assumption:nondegenerate} is introduced to exclude degenerate cases. It is still possible that $\Prob{S_i = s, X_i = x} = 0$ for some combinations of $s$ and $x$. In such situations, we simply ignore these combinations, and our inference remains unchanged. Assumption \ref{assumption:cond ind} requires that given the strata, the treatment assignments are conditionally independent of the potential outcomes and baseline covariates. Assumption \ref{assumption:normal}, which delineates the asymptotic behavior of imbalances within strata, was proposed by \cite{Bugni2018} to study statistical inference under covariate-adaptive randomization. This assumption is satisfied by many covariate-adaptive randomization schemes, such as simple randomization, stratified block randomization, and stratified biased-coin design. Assumption \ref{assumption:normal_weaker} is less restrictive than Assumption \ref{assumption:normal}.
Randomization methods that exhibit a dependence structure across strata, such as the minimization methods proposed by \cite{Pocock1975} and class of designs proposed by \cite{Hu2012}, are accommodated by Assumption \ref{assumption:normal_weaker}. 

\section{{Interaction tests for stratification covariates}} \label{section:binary_used}

\subsection{Interaction term t-test}
This section discusses interaction tests for stratification covariates. The following assumption always holds if covariate $X_i$ is a stratification covariate:

\begin{assumption}
	There exists a map $X$: $\mathcal{S} \rightarrow \mathcal{X}$ and $X_i = \sum_{s \in \mathcal{S}}X(s) \ind{S_i = s}$.
	\label{assumption:stratify}
\end{assumption}

Assumption \ref{assumption:stratify} requires that in each stratum $s$, covariate $X_i$ remains constant across all units. In addition to stratification covariates, this assumption holds for any covariate for which the value is uniquely determined by the stratum label $S_i$.

First, we focus on the scenario in which $X_i$ is binary, i.e., $X(s)$ can either be 1 or 0. Although $X_i$ is binary, the number of strata can exceed two. For example, consider a scenario in which both gender (male, female) and discrete severity levels (mild, moderate, or severe) are used for stratification. Then, the total number of strata is six, and the interaction effect is tested for gender, which is binary. In Section \ref{section:multi}, we extend our analysis to categorical covariates that have more than two levels.

When $X_i$ is binary, the null hypothesis of no interaction can be formulated as
\begin{equation}
	\begin{aligned}
		H_0: \Econd{\Yt - \Yc}{X_i = 1} = \Econd{\Yt - \Yc}{X_i = 0}.
	\end{aligned}
	\label{null:t}
\end{equation}
Next, we consider the linear model with an interaction term:
\begin{equation}
	\begin{aligned}
		Y_i = \alpha + \beta A_i + \gamma X_i + \delta A_i X_i + \epsilon_i,
	\end{aligned}
	\label{linear:simple}
\end{equation}
where the interaction effect $\delta = \tau_1 - \tau_0 = \Econd{\Yt - \Yc}{X_i = 1} - \Econd{\Yt - \Yc}{X_i = 0}$.
The  ordinary least squares (OLS) estimator of $\delta$ in (\ref{linear:simple}) is
\begin{equation*}
	\hat \delta = \Yax{1}{1} - \Yax{0}{1} - \Yax{1}{0} + \Yax{0}{0}.
\end{equation*}
First, we show that the usual $t\text{-test}$ for $\delta = 0$ using heteroscedasticity-robust variance estimators is conservative. We refer to this test as the ``interaction term $t\text{-test}$'' because it is aimed at testing the coefficient on the interaction term $A_iX_i$. Next, we modify the interaction term $t\text{-test}$ to establish an valid test. We emphasize that we only use the linear model (\ref{linear:simple}) as a working model and allow the true data generating model to be arbitrarily complex. 

To describe the asymptotic behavior of $\hat \delta$, we introduce the following notation:
\begin{equation*}
\begin{aligned}
	 \zetaStrY{x} &= \px{x}\Big[\frac{1}{\pi}\Varcond{\tildeYa{1}}{X_i = x} + \frac{1}{1-\pi}\Varcond{\tildeYa{0}}{X_i = x} \Big], \\
	\zetaStrA{x} &=  \px{x}E\Big(q(S_i) \Big[\frac{1}{\pi} \Econd{\maxZ{1}{x}} {S_i} + \frac{1}{1-\pi} \Econd{\maxZ{0}{x}}{S_i}\Big]^2 \Bigm| X_i = x\Big), \\
	\zetaStrH{x} &=  \px{x}E\Big( \Big[\Econd{\maxZ{1}{x}}{S_i} -  \Econd{\maxZ{0}{x}}{S_i}\Big]^2 \Bigm| X_i = x\Big), 
    \end{aligned}
\end{equation*}
where $\tildeYa{a} = \Ya - \Econd{\Ya}{ S_i}$, $m_{ax}(Z_i) = \Econd{\Ya}{Z_i} - \mu_{ax}$, for $a \in \{1, 0\}$, $x \in \{1, 0\}$.

\begin{theorem}
	Under Assumptions  \ref{assumption:nondegenerate}, \ref{assumption:cond ind},  \ref{assumption:normal},  and \ref{assumption:stratify},
	\begin{equation}
		\begin{aligned}
			{\sqrt{n}} {\begin{pmatrix}
					\Yax{1}{0} - \Yax{0}{0} - \tau_0 \\
					\Yax{1}{1} - \Yax{0}{1}-  \tau_1 \\
			\end{pmatrix}}
			\overset{d}{\rightarrow}
			N\left({\begin{pmatrix}
					0 \\
					0 \\
			\end{pmatrix}}, {\begin{pmatrix}
					v^2_{0} & 0 \\
					0 & v^2_{1} \\
			\end{pmatrix}}\right), 
		\end{aligned}
	\end{equation}
	where $v^2_{0} = (1/p_0^2)\{\zetaStrY{0} + \zetaStrA{0} + \zetaStrH{0}\}$ and $v^2_{1} = (1/p_1^2)\{\zetaStrY{1} + \zetaStrA{1} + \zetaStrH{1}\}$.
	
	\label{thm:t}
\end{theorem}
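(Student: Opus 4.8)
The plan is to prove the joint central limit theorem by decomposing each difference-in-means estimator into three asymptotically independent pieces, and then assembling the six pieces (three for each covariate level) into a single vector whose limiting covariance is diagonal. The structural fact I would lean on throughout is that Assumption~\ref{assumption:stratify} partitions the strata into the \emph{disjoint} sets $\mathcal{S}_x = \{s \in \mathcal{S} : X(s) = x\}$ for $x \in \{0,1\}$, so that $\ind{X_i = x} = \sum_{s\in\mathcal{S}_x}\ind{S_i = s}$. It is exactly this disjointness, combined with the diagonal structure of $\Sigma_D$ in Assumption~\ref{assumption:normal}, that will ultimately force the off-diagonal entries of the limiting covariance to vanish.

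First I would linearize. Using $\nax{1}{x}/n \inp \pi\px{x}$ and $\nax{0}{x}/n \inp (1-\pi)\px{x}$---which follow from the law of large numbers together with $\Dns/\ns{s} = o_P(1)$ (Assumption~\ref{assumption:normal} implies Assumption~\ref{assumption:normal_weaker})---I would replace the random denominators of $\Yax{1}{x}$ and $\Yax{0}{x}$ by their probability limits at the cost of an $o_P(1)$ error. Writing $\Ya - \muax{a}{x} = \tildeYa{a} + g_{ax}(S_i)$, where $g_{ax}(s) = \Econd{\Ya}{S_i = s} - \muax{a}{x} = \Econd{\maxZ{a}{x}}{S_i = s}$, and substituting $\sum_{i:S_i = s}A_i = \nas{1}{s} = \pi\ns{s} + \Dns$, the centered and scaled estimator rearranges into $\sqrt{n}(\Yax{1}{x} - \Yax{0}{x} - \tau_x) = T_{Y,x} + T_{A,x} + T_{H,x} + o_P(1)$, with
\begin{align*}
T_{Y,x} &= \frac{1}{\px{x}\sqrt{n}}\sum_{i=1}^{n}\Big[\frac{1}{\pi}\tildeYa{1}A_i - \frac{1}{1-\pi}\tildeYa{0}(1-A_i)\Big]\ind{X_i = x}, \\
T_{A,x} &= \frac{1}{\px{x}\sqrt{n}}\sum_{s\in\mathcal{S}_x}\Big[\frac{1}{\pi}g_{1x}(s) + \frac{1}{1-\pi}g_{0x}(s)\Big]\Dns, \\
T_{H,x} &= \frac{1}{\px{x}\sqrt{n}}\sum_{i=1}^{n}\big[g_{1x}(S_i) - g_{0x}(S_i)\big]\ind{X_i = x}.
\end{align*}

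Next I would identify the three marginal limits. For $T_{H,x}$, the summands are i.i.d.\ with mean zero (because $\Econd{g_{ax}(S_i)}{X_i = x} = 0$), so the ordinary CLT delivers a $N(0,\px{x}^{-2}\zetaStrH{x})$ limit. For $T_{A,x}$, conditioning on $S^{(n)}$ and invoking Assumption~\ref{assumption:normal} gives a normal limit with the \emph{non-random} variance $\px{x}^{-2}\sum_{s\in\mathcal{S}_x}[\pi^{-1}g_{1x}(s)+(1-\pi)^{-1}g_{0x}(s)]^2 p(s)q(s) = \px{x}^{-2}\zetaStrA{x}$, where the diagonality of $\Sigma_D$ kills the cross-stratum terms. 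For $T_{Y,x}$, I would condition on $(S^{(n)}, A^{(n)})$: by Assumption~\ref{assumption:cond ind} the treated and control outcomes within each stratum behave as i.i.d.\ draws independent of the assignment mechanism, so a conditional Lindeberg--Feller CLT---the Lindeberg condition supplied by the second-moment bound in Assumption~\ref{assumption:nondegenerate}---yields a $N(0,\px{x}^{-2}\zetaStrY{x})$ limit after a law-of-large-numbers evaluation of the conditional variance using $\nas{1}{s}/n \to \pi p(s)$. Crucially, this conditional limit is non-random.

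The main obstacle is upgrading these marginal statements to the full joint normality of $(T_{Y,0}, T_{A,0}, T_{H,0}, T_{Y,1}, T_{A,1}, T_{H,1})$ with diagonal limiting covariance. I would handle this with a layered characteristic-function argument: write the joint characteristic function, condition first on $(S^{(n)}, A^{(n)})$ to factor out $T_{Y,0}, T_{Y,1}$ (conditionally, sums of independent mean-zero variables over \emph{disjoint} units, whose conditional Gaussian limits are non-random and hence factor cleanly); then condition on $S^{(n)}$ to factor out $T_{A,0}, T_{A,1}$ (asymptotically independent because $\Sigma_D$ is diagonal and $\mathcal{S}_0 \cap \mathcal{S}_1 = \emptyset$); and finally leave the between-stratum terms $T_{H,0}, T_{H,1}$ to the ordinary CLT. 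Since characteristic functions are bounded, dominated convergence passes each limit through the nested expectations, and the joint characteristic function factorizes into six Gaussian factors. Because $\mathcal{S}_0$ and $\mathcal{S}_1$ are disjoint, every cross term between the two levels involves either products of summands supported on disjoint strata---hence on disjoint, independent units, for the $Y$- and $H$-terms---or distinct coordinates of the asymptotically diagonal $\{\Dns/\sqrt{n}\}$ for the $A$-terms, so all cross-covariances vanish. Summing the three within-level variances then gives $v^2_x = \px{x}^{-2}\{\zetaStrY{x} + \zetaStrA{x} + \zetaStrH{x}\}$ together with a zero off-diagonal, which is the claim.
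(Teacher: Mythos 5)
Your proof is correct, and its core --- the linearization followed by the three-way split into a within-stratum residual term, a $\Dns$-imbalance term, and a stratum-proportion term, with the same three variance components $\zetaStrY{x}$, $\zetaStrA{x}$, $\zetaStrH{x}$ --- is exactly the paper's decomposition: in the Appendix the paper writes $\frac{1}{\sqrt n}\Mx{x} = \px{x}\pi(1-\pi)(\Ex{x}+\Ux{x}+\Sx{x})$, and your $T_{Y,x}, T_{A,x}, T_{H,x}$ coincide with $\Ex{x},\Ux{x},\Sx{x}$ up to the $\px{x}$ scaling. Where you genuinely diverge is the device used to get joint normality. The paper reduces to a scalar statement via the Cram\'er--Wold device and then runs the coupling argument of Bugni et al.\ (2018): it constructs stratum-wise i.i.d.\ copies of the centered outcomes independent of $(A^{(n)},S^{(n)})$, identifies $\Ex{c}$ in distribution with a partial-sum process evaluated at random index ranges, replaces those ranges by deterministic ones via weak convergence to Brownian motion, and then factors the joint law through the exact independence of the resulting $E^*_{n,c}$ from $(\Ux{c},\Sx{c})$, including a separate case analysis at discontinuity points when some component of the limit is degenerate (e.g.\ when $q(s)=0$). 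You instead keep the six-dimensional vector and factor the joint characteristic function by nested conditioning, invoking a conditional Lindeberg--Feller CLT for the $T_Y$ block. Both routes are sound; yours avoids the explicit coupling and dispenses with the degeneracy case analysis entirely, since characteristic functions converge pointwise whether or not a limiting factor is a point mass, at the cost of having to verify that the conditional variance and the conditional Lindeberg condition converge in probability --- which they do here, because conditionally on $(A^{(n)},S^{(n)})$ the draws within each stratum are i.i.d.\ with finite second moment and $\nas{1}{s}/n \inp \pi p(s)$. The essential structural point is identical in both arguments: the disjointness of $\{s : X(s)=1\}$ and $\{s : X(s)=0\}$ under Assumption \ref{assumption:stratify}, combined with the diagonality of $\Sigma_D$ in Assumption \ref{assumption:normal}, annihilates every cross term, which is precisely why the limiting covariance matrix is diagonal.
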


Theorem \ref{thm:t} specifies that the joint distribution of simple difference-in-means estimators converges weakly to a bivariate normal distribution. The diagonal covariance matrix implies that $\Yax{1}{1} - \Yax{0}{1}$  and $\Yax{1}{0} - \Yax{0}{0}$ are asymptotically independent. This independence is nontrivial, as we do not assume the linear model (\ref{linear:simple}) to be the true data generating model. As described in the next section, a correlation between $\Yax{1}{1} - \Yax{0}{1}$  and $\Yax{1}{0} - \Yax{0}{0}$ may exist when $X_i$ is an additional covariate.

\begin{remark}
In addition to facilitating interaction testing, Theorem \ref{thm:t} also allows for the inference of conditional treatment effects $\tau_1$ and $\tau_0$. Separate confidence intervals or confidence region can be formed for these conditional treatment effects. However, without further exploration of this aspect, our primary focus in this article is on the contrasting interaction effect $\delta = \tau_1 - \tau_0$.
\end{remark}

\begin{theorem}
	Under Assumptions \ref{assumption:nondegenerate}, \ref{assumption:cond ind}, \ref{assumption:normal},  and \ref{assumption:stratify},
	\begin{equation*}
		\begin{aligned}
			\frac{\Yax{1}{1} - \Yax{0}{1} - \Yax{1}{0} + \Yax{0}{0} - (\tau_1 - \tau_0)}{
				\left(\frac{\hatsigmaSqax{1}{1}}{\nax{1}{1}} + \frac{\hatsigmaSqax{0}{1}}{\nax{0}{1}} + \frac{\hatsigmaSqax{1}{0}}{\nax{1}{0}} + \frac{\hatsigmaSqax{0}{0}}{\nax{0}{0}}\right)^{1/2}
			}
			\stackrel{d}{\rightarrow}
			N(0,\zeta^2_{t\text{-test}}),
		\end{aligned}
	\end{equation*}
	where $\zeta^2_{t\text{-test}} \leq 1$. The inequality is strict unless
	\begin{equation}
		\begin{aligned}
			\left\{\pi(1-\pi) - q(s)\right\}\Big[\frac{1}{\pi}\Econd{\maxZ{1}{1}}{S_i = s} + \frac{1}{1-\pi}\Econd{\maxZ{0}{1}}{S_i = s}\Big] = 0
			\label{cond:1}
		\end{aligned}
	\end{equation}
	for all $s\in \mathcal{S}$ satisfying $X(s) = 1$, and 
	\begin{equation}
		\begin{aligned}
			\left\{\pi(1-\pi) - q(s)\right\}\Big[\frac{1}{\pi}\Econd{\maxZ{1}{0}}{S_i = s} + \frac{1}{1-\pi}\Econd{\maxZ{0}{0}}{S_i = s}\Big] = 0 
			\label{cond:0}
		\end{aligned}
	\end{equation}
	for all $s\in \mathcal{S}$ satisfying $X(s) = 0$.
	\label{thm:t_old}
\end{theorem}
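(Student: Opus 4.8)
The plan is to handle the numerator and the squared denominator of the $t$-statistic separately, determine their limits, and combine them through Slutsky's theorem; the constant $\zeta^2_{t\text{-test}}$ then appears as the ratio of the asymptotic variance of the numerator to the probability limit of the scaled squared denominator, and the bound $\zeta^2_{t\text{-test}} \le 1$ reduces to a termwise comparison driven by the constraint $q(s) \le \pi(1-\pi)$ from Assumption \ref{assumption:normal}.

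First I would treat the numerator. Writing
\[
\hat\delta - (\tau_1 - \tau_0) = (\Yax{1}{1} - \Yax{0}{1} - \tau_1) - (\Yax{1}{0} - \Yax{0}{0} - \tau_0),
\]
Theorem \ref{thm:t} gives joint convergence of the two difference-in-means to a bivariate normal with diagonal covariance, so the displayed linear combination satisfies $\sqrt{n}\{\hat\delta - (\tau_1 - \tau_0)\} \indist N(0, v_0^2 + v_1^2)$; the asymptotic independence is exactly what eliminates any cross term.

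Next I would pin down the denominator. Because $X_i$ is constant within each stratum (Assumption \ref{assumption:stratify}) and $\Dns/\ns{s} = o_P(1)$ forces the treated fraction in every stratum to converge to $\pi$, I would show $\nax{1}{x}/n \inp \pi\px{x}$ and $\nax{0}{x}/n \inp (1-\pi)\px{x}$, while $\hatsigmaSqax{a}{x} \inp \sigmaSqax{a}{x}$ by a law-of-large-numbers argument that conditions on $S^{(n)}$ and uses Assumption \ref{assumption:cond ind}, paralleling the consistency results underlying Theorem \ref{thm:t}. Hence the scaled squared denominator converges in probability to
\[
\Delta^2 = \frac{1}{\px{1}}\Big(\frac{\sigmaSqax{1}{1}}{\pi} + \frac{\sigmaSqax{0}{1}}{1-\pi}\Big) + \frac{1}{\px{0}}\Big(\frac{\sigmaSqax{1}{0}}{\pi} + \frac{\sigmaSqax{0}{0}}{1-\pi}\Big),
\]
and Slutsky's theorem then yields the stated limit with $\zeta^2_{t\text{-test}} = (v_0^2 + v_1^2)/\Delta^2$.

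The hard part, and the crux of the argument, will be establishing $v_0^2 + v_1^2 \le \Delta^2$ with the exact equality conditions; this splits into two inequalities, one per level of $X$, so it suffices to treat $x = 1$. The key step is a law-of-total-variance decomposition conditioning on $S_i$ within $\{X_i = 1\}$: since $\Econd{\tildeYa{a}}{S_i} = 0$ and $\Econd{\maxZ{a}{1}}{S_i} = \Econd{\Ya}{S_i} - \muax{a}{1}$, I obtain
\[
\sigmaSqax{a}{1} = \Varcond{\tildeYa{a}}{X_i = 1} + \Econd{[\Econd{\maxZ{a}{1}}{S_i}]^2}{X_i = 1}, \quad a \in \{1, 0\}.
\]
Inserting this decomposition into the per-level inequality $v_1^2 \le \px{1}^{-1}\{\sigmaSqax{1}{1}/\pi + \sigmaSqax{0}{1}/(1-\pi)\}$, the $\zetaStrY{1}$ contribution matches the $\Varcond{\tildeYa{a}}{X_i=1}$ terms and cancels, reducing it to
\[
\zetaStrA{1} + \zetaStrH{1} \le \px{1}\Big(\frac{1}{\pi}\Econd{g_1^2}{X_i = 1} + \frac{1}{1-\pi}\Econd{h_1^2}{X_i = 1}\Big),
\]
where $g_1 = \Econd{\maxZ{1}{1}}{S_i}$ and $h_1 = \Econd{\maxZ{0}{1}}{S_i}$. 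Expanding the square in $\zetaStrH{1}$ and collecting terms, the right-hand side minus $\zetaStrH{1}$ equals $\px{1}\Econd{\pi(1-\pi)(g_1/\pi + h_1/(1-\pi))^2}{X_i = 1}$, which is exactly $\zetaStrA{1}$ with the weight $q(S_i)$ replaced by its upper bound $\pi(1-\pi)$. The inequality is then immediate from $q(s) \le \pi(1-\pi)$ applied to a nonnegative integrand, with equality if and only if $\{\pi(1-\pi) - q(S_i)\}(g_1/\pi + h_1/(1-\pi)) = 0$ almost surely given $X_i = 1$, i.e.\ condition (\ref{cond:1}); the identical computation at $x = 0$ gives condition (\ref{cond:0}). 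Summing the two levels completes the bound, and since equality in the sum requires equality at each level, the strictness claim follows.
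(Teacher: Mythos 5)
Your proposal is correct and follows essentially the same route as the paper: both establish the probability limit of the scaled squared denominator as the naive variance $\sigmaSqax{1}{1}/(\px{1}\pi)+\sigmaSqax{0}{1}/\{\px{1}(1-\pi)\}+\sigmaSqax{1}{0}/(\px{0}\pi)+\sigmaSqax{0}{0}/\{\px{0}(1-\pi)\}$, invoke Theorem \ref{thm:t} plus Slutsky for the numerator, use the law-of-total-variance identity $\sigma^2_{\tilde Y(a)\mid X=x}=\sigmaSqax{a}{x}-E[\Econd{\maxZ{a}{x}}{S_i}^2\mid X_i=x]$, and reduce the gap to a sum over strata weighted by $\pi(1-\pi)-q(s)$ times a square, from which the equality conditions (\ref{cond:1})--(\ref{cond:0}) follow. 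Your per-level organization of the algebra is only a cosmetic repackaging of the paper's single combined computation.
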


\begin{remark}
 One sufficient condition for $\zeta^2_{t\text{-test}} = 1$ is the use of simple randomization, where $q(s) = \pi(1-\pi) $ for all $s \in \mathcal{S}$. Another condition is that the strata are homogeneous within each level of 
 $X_i$. Specifically, given $a \in \{1,0\}$ and $x \in \{1, 0\}$, $\Econd{\Ya}{S_i = s} = \Econd{\Ya}{X_i = x}$ holds for all $s \in  X^{-1}(x) = \{s \in \mathcal{S}: X(s) = x\}$. This condition is satisfied, for instance, when $X_i$ is the sole covariate used for stratification.
 \label{re:Xonly}
\end{remark}

The usual interaction term $t\text{-test}$ can be defined as
\begin{equation}
	\begin{aligned}
		\phi_n^{t\text{-test}}(V^{(n)}) = I\{|T_n^{t\text{-test}}(V^{(n)})| > z_{1-\alpha/2}\},
	\end{aligned}
 \label{eq:t_test_usual}
\end{equation}
where 
\begin{equation*}
	\begin{aligned}
		T_n^{t\text{-test}}(V^{(n)}) = \frac{\Yax{1}{1} - \Yax{0}{1} - \Yax{1}{0} + \Yax{0}{0}}{
			\left(\frac{\hatsigmaSqax{1}{1}}{\nax{1}{1}} + \frac{\hatsigmaSqax{0}{1}}{\nax{0}{1}} + \frac{\hatsigmaSqax{1}{0}}{\nax{1}{0}} + \frac{\hatsigmaSqax{0}{0}}{\nax{0}{0}}\right)^{1/2}
		},
	\end{aligned}
\end{equation*}
and $z_{1-\alpha/2}$ is the $1-\alpha/2$ quantile of the standard normal distribution.

Then, under the assumptions of Theorem \ref{thm:t_old}, for testing (\ref{null:t}) at level 
$\alpha \in (0,1)$ when the null hypothesis is true, 
$$
\lim_{n \rightarrow \infty}\E{\phi_n^{t\text{-test}}(V^{(n)})} \leq \alpha,
$$ and the inequality is strict unless both (\ref{cond:1}) and (\ref{cond:0}) hold. Thus, the usual interaction term $t$-test is conservative.

Based on Theorem \ref{thm:t}, we can modify the usual interaction term $t$-test and establish an valid test by replacing $\px{x}$, $\zetaStrY{x}$, $\zetaStrA{x}$, and $\zetaStrH{x}$ with their sample versions $ \hatpx{x}$, $\hatzetaStrY{x}$, $\hatzetaStrA{x}$, and $\hatzetaStrH{x}$, respectively. The exact expressions for these sample quantities can be found in the Appendix. The modified interaction term $t$-test can be formulated as
\begin{equation}
	\begin{aligned}
		\phi_n^{t\text{-test,mod}}(V^{(n)}) = I\{|T_n^{t\text{-test,mod}}(V^{(n)})| > z_{1-\alpha/2}\},
	\end{aligned}
	\label{eq:t_test_mod}
\end{equation}
where 
\begin{equation*}
	\begin{aligned}
		T_n^{t\text{-test,mod}}(V^{(n)}) =	\frac{\sqrt{n}(\Yax{1}{1} - \Yax{0}{1} - \Yax{1}{0} + \Yax{0}{0})}{
			(\hat v^2_{1} + \hat v^2_{0})^{1/2}
		},
	\end{aligned}
\end{equation*}
in which $\hat v^2_{0} = (1/\hat p_0^2)\{\hatzetaStrY{0} + \hatzetaStrA{0} + \hatzetaStrH{0}\}$ and $\hat v^2_{1} = (1/\hat p_1^2)\{\hatzetaStrY{1} + \hatzetaStrA{1} + \hatzetaStrH{1}\}$.

\begin{theorem}
	\label{thm:t_mod}
	Under Assumptions  \ref{assumption:nondegenerate}, \ref{assumption:cond ind},
	\ref{assumption:normal},  and \ref{assumption:stratify}, for testing (\ref{null:t}) at level $\alpha \in (0,1)$ when the null hypothesis is true,  $$\lim_{n \rightarrow \infty}\E{\phi_n^{t\text{-test,mod}}(V^{(n)})} = \alpha.$$ 
\end{theorem}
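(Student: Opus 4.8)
The plan is to reduce Theorem \ref{thm:t_mod} to a Slutsky argument anchored on Theorem \ref{thm:t}, with the consistency of the variance estimator as the one substantive ingredient. First I would treat the numerator. Writing $\hat\delta = (\Yax{1}{1} - \Yax{0}{1}) - (\Yax{1}{0} - \Yax{0}{0})$ and subtracting and adding $\tau_1$ and $\tau_0$, under the null hypothesis (\ref{null:t}) we have $\tau_1 - \tau_0 = 0$, so $\sqrt{n}\,\hat\delta = \sqrt{n}\{(\Yax{1}{1} - \Yax{0}{1} - \tau_1) - (\Yax{1}{0} - \Yax{0}{0} - \tau_0)\}$. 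Applying the continuous mapping theorem with the linear map $(u_0, u_1) \mapsto u_1 - u_0$ to the bivariate convergence of Theorem \ref{thm:t}, and using that the limiting covariance matrix is diagonal, yields $\sqrt{n}\,\hat\delta \indist N(0, v^2_0 + v^2_1)$. This is the only place the null enters, and it is precisely what removes the $\sqrt{n}(\tau_1 - \tau_0)$ drift that would otherwise diverge.

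The substance is then to show $\hat v^2_0 + \hat v^2_1 \inp v^2_0 + v^2_1$. Since $v^2_x = (1/p_x^2)\{\zetaStrY{x} + \zetaStrA{x} + \zetaStrH{x}\}$, it suffices by the continuous mapping theorem and $\hatpx{x} \inp \px{x}$ (a law of large numbers for $\nx{x}/n$) to establish consistency of each of the three plug-in components for $x \in \{0,1\}$, noting that Assumption \ref{assumption:nondegenerate} forces $\zetaStrY{x} > 0$ so that the limiting denominator is bounded away from zero. For $\hatzetaStrY{x}$ and $\hatzetaStrH{x}$ I would proceed arm-by-arm and stratum-by-stratum: these are assembled from within-stratum sample means and sample variances of $Y_i$ in the treated and control groups, so their consistency follows from a within-stratum law of large numbers, using Assumption \ref{assumption:cond ind} to ensure that conditioning on the assignment does not perturb the conditional moments of the potential outcomes, together with $\nas{1}{s}/\ns{s}\inp\pi$ and $\nas{0}{s}/\ns{s}\inp 1-\pi$ (which follow from $\Dns/\ns{s} = o_P(1)$, implied by Assumption \ref{assumption:normal}). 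The stratum-level conditional means $\Econd{\maxZ{a}{x}}{S_i = s} = \Econd{\Ya}{S_i = s} - \muax{a}{x}$ entering $\zetaStrH{x}$ are then estimated consistently by differencing arm-specific stratum means from the pooled level-$x$ means.

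The main obstacle is the consistency of $\hatzetaStrA{x}$, because $\zetaStrA{x}$ is weighted by the randomization-induced quantity $q(S_i)$, which is not a moment of the data-generating distribution $W$ but a feature of the limiting law of the within-stratum imbalances in Assumption \ref{assumption:normal}. A single realization of $\Dns$ does not identify $q(s)$: the statistic $\Dns^2/\ns{s}$ converges in distribution to a multiple of a $\chi^2_1$ variable rather than to $p(s)q(s)$, so $\hatzetaStrA{x}$ cannot be a naive empirical plug-in of the observed imbalance. I would therefore verify that the Appendix estimator recovers the $q(s)$-weighted term by a consistent route—either through the design-determined value of $q(s)$ admissible under Assumption \ref{assumption:normal} or through an appropriate averaging device—and show that its probability limit equals $\zetaStrA{x}$, the argument again leaning on Assumption \ref{assumption:cond ind} to decouple the assignment mechanism from the potential-outcome moments. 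Once all components are shown consistent, Slutsky's theorem gives $T_n^{t\text{-test,mod}}(V^{(n)}) \indist N(0,1)$; because the standard normal distribution function is continuous at $\pm z_{1-\alpha/2}$, this convergence transfers to the rejection probability, yielding $\lim_{n\to\infty} \E{\phi_n^{t\text{-test,mod}}(V^{(n)})} = \ProbB{|N(0,1)| > z_{1-\alpha/2}} = \alpha$.
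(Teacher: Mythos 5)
Your proposal is correct and follows essentially the same route as the paper's proof: establish consistency of each plug-in component $\hatzetaStrY{x}$, $\hatzetaStrA{x}$, $\hatzetaStrH{x}$ via within-stratum laws of large numbers (the paper's Lemma 1), then combine with Theorem \ref{thm:t} via the continuous mapping theorem and Slutsky. Your caution about $q(s)$ is well placed and resolves exactly as you suspected: the Appendix estimator $\hatzetaStrA{x}$ plugs in the known, design-determined $q(s)$ rather than attempting to estimate it from the observed imbalances.
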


\subsection{Stratified-adjusted interaction test}
In stratified randomized experiments, the following stratified-adjusted difference-in-means estimator
\begin{equation}
	\begin{aligned}
		\sum_{s \in \mathcal{S}} \frac{\ns{s}}{n}\left\{\hatmuas{1}{s} - \hatmuas{0}{s}\right\},
		\label{eq:Bugni2019}
	\end{aligned}
\end{equation}
is commonly used to estimate the overall treatment effect $\E{\Yt - \Yc}$, where $\hatmuas{a}{s} = \{{1}/{\nas{a}{s}}\}\sum_{i=1}^{n}Y_i \ind{A_i = a, S_i = s}$ denotes the sample mean of $Y_i$ with $A_i = a$ and $S_i = s$ for $a \in \{1,0\}$, $s \in \mathcal{S}$.
The stratified-adjusted estimator (\ref{eq:Bugni2019}) is intuitive given that
\begin{equation*}
	\begin{aligned}
		\E{\Yt - \Yc} = \sum_{s \in \mathcal{S}}p(s)\Econd{\Yt - \Yc}{S_i = s}.
	\end{aligned}
\end{equation*}
In certain stratified randomized experiments, such estimators are more appealing than the simple difference-in-means estimator $\bar{Y}_{n,1} - \bar{Y}_{n,0} $. 
\cite{Rosenberger2015} argued a like-stratified analysis for any clinical trial with a stratified randomized randomization.
The property of estimator  (\ref{eq:Bugni2019}) under stratified randomization has been studied by \cite{imbens2015} and \cite{Bugni2019} from different perspectives. 


When binary covariate $X_i$ satisfies Assumption \ref{assumption:stratify}, 
\begin{equation}
	\begin{aligned}
		\tau_x = \Econd{\Yt - \Yc}{X_i = x} = \frac{1}{\px{x} }\sum_{s,X(s)=x} p(s)	\Econd{\Yt - \Yc}{S_i = s},
	\end{aligned}
\end{equation}
for $x \in \{1,0\}$, we can use the stratified-adjusted estimator 
$$\sum_{s,X(s)=x} \frac{\ns{s}}{\nx{x}}\left\{\hatmuas{1}{s} - \hatmuas{0}{s}\right\}$$
to estimate $\tau_x$  and
\begin{equation*}
	\sum_{s,X(s)=1}  \frac{\ns{s}}{\nx{1}}\left\{ \hatmuas{1}{s} - \hatmuas{0}{s}\right\} -  \sum_{s,X(s)=0}  \frac{\ns{s}}{\nx{0}}\left\{\hatmuas{1}{s} - \hatmuas{0}{s}\right\}
\end{equation*}
to test the null hypothesis of no interaction.

\begin{theorem}
	Under Assumptions  \ref{assumption:nondegenerate}, \ref{assumption:cond ind},  \ref{assumption:normal_weaker}, and \ref{assumption:stratify},
	\begin{equation}
		\begin{aligned}
			{\sqrt{n}} {\begin{bmatrix}
					\sum_{s,X(s)=0} \{\ns{s}/\nx{0}\}\{\hatmuas{1}{s} - \hatmuas{0}{s}\}  - \tau_0\\
					\sum_{s,X(s)=1} \{\ns{s}/\nx{1}\}\{\hatmuas{1}{s} - \hatmuas{0}{s}\}  - \tau_1 \\
			\end{bmatrix}}
			\overset{d}{\rightarrow}
			N\left({\begin{pmatrix}
					0 \\
					0 \\
			\end{pmatrix}}, {\begin{pmatrix}
					u^2_{0} & 0 \\
					0 & u^2_{1} \\
			\end{pmatrix}}\right), 
		\end{aligned}
	\end{equation}
	where $u^2_{0} = (1/\px{0}^2)\{\zetaStrY{0} +  \zetaStrH{0}\}$ and $u^2_{1} =(1/\px{1}^2)\{\zetaStrY{1} +  \zetaStrH{1}\}$.
	\label{thm:t_strata}
\end{theorem}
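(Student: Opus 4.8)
The plan is to mirror the proof of Theorem~\ref{thm:t}, isolating the same within-stratum and between-stratum contributions while showing that the imbalance contribution $\zetaStrA{x}$ is now absent. Write $T(s) = \Econd{\Yt - \Yc}{S_i = s}$ and decompose each potential outcome as $\Ya = \Econd{\Ya}{S_i} + \tildeYa{a}$, so that within stratum $s$
\begin{equation*}
\hatmuas{1}{s} - \hatmuas{0}{s} = T(s) + \frac{1}{\nas{1}{s}}\sum_{i : S_i = s}A_i\tildeYa{1} - \frac{1}{\nas{0}{s}}\sum_{i : S_i = s}(1 - A_i)\tildeYa{0}.
\end{equation*}
Summing over $s$ with weights $\ns{s}/\nx{x}$ splits $\sqrt{n}(\hat\tau_x - \tau_x)$, where $\hat\tau_x$ denotes the stratified-adjusted estimator of $\tau_x$, into a between-stratum term $\sqrt{n}(A_x - \tau_x)$ with $A_x = \nx{x}^{-1}\sum_{i : X_i = x}T(S_i)$, and a within-stratum remainder $\sqrt{n}B_x$.

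First I would dispatch the between-stratum term. Since $A_x$ is the sample mean of $T(S_i)$ over units with $X_i = x$, the ordinary i.i.d.\ central limit theorem gives $\sqrt{n}(A_x - \tau_x) \indist N(0, \px{x}^{-1}\Varcond{T(S_i)}{X_i = x})$. Using the identity $\Econd{\maxZ{a}{x}}{S_i} = \Econd{\Ya}{S_i} - \muax{a}{x}$, one has $T(S_i) - \tau_x = \Econd{\maxZ{1}{x}}{S_i} - \Econd{\maxZ{0}{x}}{S_i}$, so this limiting variance equals $\px{x}^{-2}\zetaStrH{x}$, matching the $H$-component of $u^2_x$.

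The heart of the argument is the within-stratum term. Here I would first invoke Assumption~\ref{assumption:normal_weaker}: since $\nas{1}{s}/\ns{s} \inp \pi$ and $\nas{0}{s}/\ns{s} \inp 1-\pi$, replacing $1/\nas{1}{s}$ by $1/(\pi\ns{s})$ and $1/\nas{0}{s}$ by $1/\{(1 - \pi)\ns{s}\}$ alters $\sqrt{n}B_x$ only by $o_P(1)$, reducing it to $\nx{x}^{-1}\sqrt{n}\sum_{i : X_i = x}\xi_i$ with $\xi_i = \pi^{-1}A_i\tildeYa{1} - (1 - \pi)^{-1}(1 - A_i)\tildeYa{0}$. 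Writing $\xi_i = \{\tildeYa{1} - \tildeYa{0}\} + (A_i - \pi)g_i$ with $g_i = \pi^{-1}\tildeYa{1} + (1 - \pi)^{-1}\tildeYa{0}$, the first summand yields a clean i.i.d.\ mean-zero sum, while within each stratum the imbalance part satisfies $\sum_{i : S_i = s}(A_i - \pi)g_i = \sum_{i : S_i = s}(A_i - \pi)(g_i - \bar g(s)) + \bar g(s)\Dns$, where $\bar g(s)$ is the stratum average of the $g_i$. Because $g_i$ has conditional mean zero given $S_i$, $\bar g(s) = O_P(n^{-1/2})$, so Assumption~\ref{assumption:normal_weaker} forces $n^{-1/2}\bar g(s)\Dns = o_P(1)$. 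This is precisely the step at which the imbalance variance $q(s)$, and hence $\zetaStrA{x}$, drops out, and it explains why only the weaker Assumption~\ref{assumption:normal_weaker} is required.

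It remains to establish the limit law of the surviving within-stratum pieces and to assemble the result. The main obstacle is the central limit theorem for $n^{-1/2}\sum_{i : X_i = x}(A_i - \pi)(g_i - \bar g(S_i))$ under the dependence that covariate-adaptive randomization induces among the $A_i$. The key is that, by Assumption~\ref{assumption:cond ind} together with the i.i.d.\ sampling of $W^{(n)}$, the design is outcome-blind given the strata, so conditionally on the treated counts the treated potential outcomes within a stratum are i.i.d.\ draws from the stratum-conditional law; consequently the within-stratum fluctuations behave as if the $A_i$ were independent Bernoulli($\pi$) variables, with the count fluctuations (which carry $q(s)$) interacting only with the asymptotically vanishing mean $\bar g(s)$. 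I would make this rigorous through the covariate-adaptive central limit theorem of \cite{Bugni2018}, and then evaluate the limiting variance by the law of total variance: the conditional-mean part contributes $\Varcond{\tildeYa{1} - \tildeYa{0}}{X_i = x}$ and the conditional-variance part contributes $\pi(1 - \pi)\Econd{g_i^2}{X_i = x}$; the cross terms in $\Econd{\tildeYa{1}\tildeYa{0}}{X_i = x}$ cancel, leaving $\pi^{-1}\Varcond{\tildeYa{1}}{X_i = x} + (1 - \pi)^{-1}\Varcond{\tildeYa{0}}{X_i = x}$, so that $\sqrt{n}B_x \indist N(0, \px{x}^{-2}\zetaStrY{x})$. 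Finally, joint normality with the stated diagonal covariance follows because the $x = 0$ and $x = 1$ estimators are built from disjoint sets of units, and the only cross-stratum coupling of the $A_i$ resides in the imbalances $\Dns$, which have already been shown to contribute nothing; the two components are therefore asymptotically independent, giving the zero off-diagonal.
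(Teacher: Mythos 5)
Your proposal is correct and follows essentially the same route as the paper: the same decomposition into a within-stratum residual term and a between-stratum term built from $\Econd{\maxZ{1}{x}}{S_i}-\Econd{\maxZ{0}{x}}{S_i}$, the same observation that normalizing by $\nas{a}{s}$ makes the imbalance contribution (and hence $q(s)$ and $\zetaStrA{x}$) vanish under Assumption \ref{assumption:normal_weaker}, and the same reliance on the \cite{Bugni2018}-style coupling to get the joint CLT and asymptotic independence (the paper packages this as Lemmas 2--3, reducing the stratified term $L_{n,c}$ to the $E^*_{n,c}$ construction). Your explicit $(A_i-\pi)g_i$ splitting with the $\bar g(s)\Dns$ remainder is just a more hands-on rendering of the same cancellation, and your variance bookkeeping matches $u_x^2=(1/p_x^2)\{\zetaStrY{x}+\zetaStrH{x}\}$.
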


Recalling the assumptions pertaining to Theorem \ref{thm:t}, we replace the more stringent Assumption \ref{assumption:normal} with the more relaxed Assumption \ref{assumption:normal_weaker}. This adjustment enables the accommodation of randomization methods that have a dependence structure within the strata. Furthermore, the limit distribution in Theorem \ref{thm:t_strata} is free of $q(s)$, solely relying on the underlying data generation model and target treatment proportion $\pi$. The limit distribution does not depend on specific covariate-adaptive randomization methods.

Based on Theorem \ref{thm:t_strata}, we can obtain another valid interaction test. This stratified-adjusted interaction test can be formulated as
\begin{equation}
	\begin{aligned}
		\phi_n^{\text{strat}}(V^{(n)}) = I\{|T_n^{\text{strat}}(V^{(n)})| > z_{1-\alpha/2}\},
	\end{aligned}
	\label{eq:t_test_strata}
\end{equation}
where 
\begin{equation*}
	\begin{aligned}
		T_n^{\text{strat}}(V^{(n)}) =	\frac{\sqrt{n}\left[\sum_{s,X(s)=1} \frac{\ns{s}}{\nx{1}}\{ \hatmuas{1}{s} - \hatmuas{0}{s}\}- \sum_{s,X(s)=0} \frac{\ns{s}}{\nx{0}}\{ \hatmuas{1}{s} - \hatmuas{0}{s}\} \right]}{(\hat u^2_{1} + \hat u^2_{0})^{1/2}
		},
	\end{aligned}
\end{equation*}
in which $\hat u^2_{0} = (1/\hat p_0^2)\{\hatzetaStrY{0} + \hatzetaStrH{0}\}$ and $\hat u^2_{1} = (1/\hat p_1^2)\{\hatzetaStrY{1}  + \hatzetaStrH{1}\}$.

\begin{theorem}
	\label{thm:t_strata_mod}
	Under Assumptions  \ref{assumption:nondegenerate}, \ref{assumption:cond ind}, \ref{assumption:normal_weaker}, and \ref{assumption:stratify}, for testing (\ref{null:t}) at level $\alpha \in (0,1)$ when the null hypothesis is true,
	$$\lim_{n \rightarrow \infty}\E{\phi_n^{\text{strat}}(V^{(n)})} = \alpha.$$ 
\end{theorem}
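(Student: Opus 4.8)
The plan is to combine the joint asymptotic normality established in Theorem \ref{thm:t_strata} with a consistency argument for the variance estimator in the denominator, and then invoke Slutsky's theorem. First, under the null hypothesis (\ref{null:t}) we have $\tau_1 = \tau_0$, so the centering $\tau_1 - \tau_0$ in the numerator of $T_n^{\text{strat}}$ vanishes. Theorem \ref{thm:t_strata} gives that the bivariate vector of stratified-adjusted estimators, centered at $(\tau_0, \tau_1)$ and scaled by $\sqrt{n}$, converges to $N(0, \diag{u^2_0, u^2_1})$. Applying the linear map $(a, b) \mapsto b - a$ (equivalently, the continuous mapping theorem with $c = (-1, 1)^{\mathrm{T}}$) and exploiting the diagonal structure of the limiting covariance, the numerator $\sqrt{n}[\sum_{s, X(s)=1} \{\ns{s}/\nx{1}\}\{\hatmuas{1}{s} - \hatmuas{0}{s}\} - \sum_{s, X(s)=0} \{\ns{s}/\nx{0}\}\{\hatmuas{1}{s} - \hatmuas{0}{s}\}]$ converges in distribution to $N(0, u^2_1 + u^2_0)$ under the null.

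Second, I would establish that the denominator is consistent, i.e., $\hat u^2_1 + \hat u^2_0 \inp u^2_1 + u^2_0$. Since $\hat u^2_x = (1/\hatpx{x}^2)\{\hatzetaStrY{x} + \hatzetaStrH{x}\}$ contains no imbalance-driven $\varsigma^2_{A,\cdot}$ term, this reduces to showing that each building block converges in probability to its population counterpart: $\hatpx{x} \inp \px{x}$ by the law of large numbers for the i.i.d.\ covariates, together with $\hatzetaStrY{x} \inp \zetaStrY{x}$ and $\hatzetaStrH{x} \inp \zetaStrH{x}$. The crucial input is that Assumption \ref{assumption:normal_weaker} forces the within-stratum treatment fraction to stabilize, $\nas{1}{s}/\ns{s} = \pi + o_P(1)$ via $\Dns/\ns{s} = o_P(1)$, so that every treatment-by-stratum-by-covariate cell has size growing proportionally to $n$; combined with Assumption \ref{assumption:cond ind}, which makes the outcomes within each cell behave like an i.i.d.\ sample given the strata, the relevant sample means $\hatmuas{a}{s}$ and sample second moments converge to their population analogues. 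The finiteness of $\mathcal{S}$ and $\mathcal{X}$ and the positivity in Assumption \ref{assumption:nondegenerate} guarantee that all cell-level estimators are well defined for large $n$.

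Finally, Slutsky's theorem yields $T_n^{\text{strat}} \indist N(0, 1)$ under the null. Since $z_{1-\alpha/2}$ is a continuity point of the limiting distribution, $\lim_{n\to\infty} \E{\phi_n^{\text{strat}}(V^{(n)})} = \ProbB{|N(0,1)| > z_{1-\alpha/2}} = \alpha$, as claimed.

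The main obstacle is the consistency of the variance estimator, and within it the between-stratum heterogeneity term $\hatzetaStrH{x}$, which estimates the variability of the stratum-level conditional means $\Econd{\maxZ{a}{x}}{S_i}$ across strata. Unlike a single within-cell variance, this term must be assembled from estimated stratum means $\hatmuas{a}{s}$ and estimated stratum proportions, so the argument requires uniformly controlling the estimation error of these quantities over $s \in \mathcal{S}$ and verifying that the cross-stratum aggregation converges; the random, dependent cell sizes induced by covariate-adaptive randomization are precisely what make this step nontrivial, although reducing the problem to within-cell law-of-large-numbers statements via Assumption \ref{assumption:cond ind} keeps it tractable.
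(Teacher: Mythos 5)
Your proposal is correct and follows essentially the same route as the paper: the paper's proof of Theorem \ref{thm:t_strata_mod} simply invokes Theorem \ref{thm:t_strata} together with the variance-estimator consistency already established in the proof of Theorem \ref{thm:t_mod} (namely $\hatzetaStrY{x} \inp \zetaStrY{x}$ and $\hatzetaStrH{x} \inp \zetaStrH{x}$, proved via the covariate-adaptive weak law of large numbers in Lemma \ref{lemma:weak_law}), then concludes by the continuous mapping theorem and Slutsky's theorem. Your identification of the stabilization of within-stratum treatment fractions under Assumption \ref{assumption:normal_weaker} as the key input to the consistency step matches exactly the mechanism of that lemma.
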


\begin{remark}
	The asymptotic variances derived using Theorems \ref{thm:t_strata} and \ref{thm:t} satisfy $u_1^2 + u_0^2 \leq v_1^2 + v_0^2$, and this inequality is strict unless $\zetaStrA{1} = \zetaStrA{0} = 0$. A sufficient condition for the equality to hold is that the randomization method satisfies $q(s) = 0$ for all $s \in \mathcal{S}$. Such randomization methods are considered to achieve ``strong balance'', as indicated by \cite{Bugni2018}. Consequently, the stratified-adjusted test (\ref{eq:t_test_strata}) is asymptotically more powerful than the modified test (\ref{eq:t_test_mod}). Through simulations  (Section \ref{section:simulation}), we demonstrate that the power of the stratified-adjusted test can be much higher than that of the modified test.
 \label{re:used_strat}
\end{remark}

\section{{Interaction tests for additional covariates}} \label{section:binary_add}

\subsection{Interaction term t-test}
As discussed in Section \ref{section:binary_used}, we explored interaction tests for binary stratification covariates. Now, we extend our study to include interaction tests for binary additional covariates that are not used for stratification. These additional covariates typically do not meet Assumption \ref{assumption:stratify}. However, we continue to analyze the linear model (\ref{linear:simple}) and introduce new notation to facilitate our analysis:
\begin{equation*}	
\begin{aligned}
	\zetaAddY{x}
	&= \frac{1}{\pi}\Var{\tilde{r}_{x,i}(1)} + \frac{1}{1-\pi}\Var{\tilde{r}_{x,i}(0)}, \\	
	\zetaAddA{x}
	&= E\Big(q(S_i) \Big[\frac{1}{\pi} \Econd{\raxZ{1}{x}}{S_i} + \frac{1}{1-\pi} \Econd{\raxZ{0}{x}}{S_i}\Big]^2 \Big),\\	
	\zetaAddH{x}
	&=  E\Big( \Big[\Econd{\raxZ{1}{x}}{S_i} -  \Econd{\raxZ{0}{x}}{S_i}\Big]^2\Big),\\
	\zetaAddYxy{x}{y}
	&= -E\Big[\frac{1}{\pi}\Econd{\raxZ{1}{x}}{S_i}\Econd{\raxZ{1}{y}}{S_i} + \frac{1}{1-\pi} \Econd{\raxZ{0}{x}}{S_i}\Econd{\raxZ{0}{y}}{S_i}\Big], \\
	\zetaAddAxy{x}{y}
	&=E\Big(q(S_i) \Big[\frac{1}{\pi}\Econd{\raxZ{1}{x}}{S_i} + \frac{1}{1-\pi} \Econd{\raxZ{0}{x}}{S_i}\Big] \\
    &\quad\qquad \Big[\frac{1}{\pi} \Econd{\raxZ{1}{y}}{S_i} + \frac{1}{1-\pi} \Econd{\raxZ{0}{y}}{S_i}\Big]\Big),\\
	\zetaAddHxy{x}{y}
	&=  E\Big(\Big[ \Econd{\raxZ{1}{x}}{S_i}- \Econd{\raxZ{0}{x}}{S_i}\Big]\Big[ \Econd{\raxZ{1}{y}}{S_i}- \Econd{\raxZ{0}{y}}{S_i}\Big]\Big),\\
 \end{aligned}
\end{equation*}
where $r_{x,i}(a) = \{\Ya - \mu_{ax}\}\ind{X_i = x}$, $\tilde{r}_{x,i}(a) = r_{x,i}(a) - \Econd{r_{x,i}(a)}{S_i}$, and ${r}_{ax}(Z_i) =  \Econd{r_{x,i}(a)}{Z_i}$, for $a \in \{1, 0\}$, $x, y \in \{1, 0\}$.

\begin{remark}
    Compared with the notation used in Section \ref{section:binary_used}, the notation used in this analysis is more complex. This increased complexity arises from the fact that the value of covariate $X_i$ is no longer uniquely determined by the stratum label $S_i$. Consequently, the stratum-specific effect can influence both $\Yax{1}{1} - \Yax{0}{1}$ and $\Yax{1}{0} - \Yax{0}{0}$, introducing additional complexity to the inference. Fortunately, when we later consider the stratified-adjusted test, this complexity is alleviated, leading to more concise results.
\end{remark}
\begin{theorem}
	Under Assumptions  \ref{assumption:nondegenerate}, \ref{assumption:cond ind}, and \ref{assumption:normal},
	\begin{equation*}
		\begin{aligned}
			{\sqrt{n}} {\begin{pmatrix}
					\Yax{1}{0} - \Yax{0}{0} - \tau_0 \\
					\Yax{1}{1} - \Yax{0}{1}-  \tau_1 \\
			\end{pmatrix}}
			\overset{d}{\rightarrow}
			N\left({\begin{pmatrix}
					0 \\
					0 \\
			\end{pmatrix}}, {\begin{pmatrix}
					w^2_{0} & w_{10} \\
					w_{10} & w^2_{1} \\
			\end{pmatrix}}\right), 
		\end{aligned}
	\end{equation*}
	where $w^2_{0} = (1/p_0^2)\{\zetaAddY{0} + \zetaAddA{0} + \zetaAddH{0}\}$, $w^2_{1} = (1/p_1^2)\{\zetaAddY{1} + \zetaAddA{1} + \zetaAddH{1}\}$, and $w_{10} =(\px{1}\px{0})^{-1}\{\zetaAddYxy{1}{0} + \zetaAddAxy{1}{0} + \zetaAddHxy{1}{0}\} $.
	\label{thm:t_add}
\end{theorem}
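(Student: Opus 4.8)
The plan is to derive a joint asymptotic linear representation for the pair of difference-in-means estimators and then invoke a central limit theorem tailored to covariate-adaptive randomization, exactly as in the proof of Theorem \ref{thm:t}. The only genuinely new feature is that $X_i$ is no longer constant within strata, and this is precisely what produces the nonzero off-diagonal entry $w_{10}$.

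First I would linearize. Writing $\rax{1}{x} = \{\Yt - \muax{1}{x}\}\ind{X_i = x}$ and $\rax{0}{x}$ analogously, and using $\tau_x = \muax{1}{x} - \muax{0}{x}$, one has
\begin{equation*}
\sqrt{n}\{\Yax{1}{x} - \Yax{0}{x} - \tau_x\} = \frac{n}{\nax{1}{x}}\frac{1}{\sqrt{n}}\sum_{i=1}^n A_i \rax{1}{x} - \frac{n}{\nax{0}{x}}\frac{1}{\sqrt{n}}\sum_{i=1}^n (1-A_i)\rax{0}{x}.
\end{equation*}
Using Assumption \ref{assumption:cond ind} together with the invariance of the target proportion across strata, I would show $\nax{1}{x}/n \inp \pi\px{x}$ and $\nax{0}{x}/n \inp (1-\pi)\px{x}$, so by Slutsky the prefactors may be replaced by $1/(\pi\px{x})$ and $1/\{(1-\pi)\px{x}\}$ once the two sums are shown to be $O_P(1)$.

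The core step is a three-way decomposition of the bracketed sum $(1/\pi)n^{-1/2}\sum_i A_i\rax{1}{x} - \{1/(1-\pi)\}n^{-1/2}\sum_i(1-A_i)\rax{0}{x}$ obtained after extracting the factor $1/\px{x}$. Splitting $\rax{a}{x} = \tilderax{a}{x} + \Econd{\rax{a}{x}}{S_i}$ and collecting terms, I would obtain, up to $o_P(1)$, the sum of (i) a within-stratum sampling term $n^{-1/2}\sum_i\{(A_i/\pi)\tilderax{1}{x} - ((1-A_i)/(1-\pi))\tilderax{0}{x}\}$, (ii) a heterogeneity term $n^{-1/2}\sum_i\{\Econd{\rax{1}{x}}{S_i} - \Econd{\rax{0}{x}}{S_i}\}$, and (iii) an imbalance term $\sum_{s}\{(1/\pi)\Econd{\rax{1}{x}}{S_i=s} + (1/(1-\pi))\Econd{\rax{0}{x}}{S_i=s}\}\Dns/\sqrt{n}$. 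Term (ii) is an i.i.d.\ centered average and term (iii) is driven by the imbalances, to which Assumption \ref{assumption:normal} supplies the conditional normal limit $N(0,\diag{p(s)q(s): s \in \mathcal{S}})$. The delicate point, handled as in Theorem \ref{thm:t}, is that term (i) is asymptotically unaffected by the dependence structure of $A^{(n)}$: conditioning on $(S^{(n)}, W^{(n)})$, its contribution coincides asymptotically with the one obtained under simple randomization because $\tilderax{a}{x}$ has mean zero within each stratum, so it is orthogonal to the stratum-level imbalances. Establishing this orthogonality, and the joint asymptotic normality of (i)--(iii) with (iii) independent of (i)--(ii) given the strata, is the main obstacle.

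Finally I would read off the covariance matrix. Each of the three terms contributes additively to both the variances and the cross-covariance, yielding $w_x^2 = \px{x}^{-2}\{\zetaAddY{x} + \zetaAddA{x} + \zetaAddH{x}\}$ and $w_{10} = (\px{1}\px{0})^{-1}\{\zetaAddYxy{1}{0} + \zetaAddAxy{1}{0} + \zetaAddHxy{1}{0}\}$. The essential observation explaining why $w_{10}\neq 0$, in contrast with the diagonal limit of Theorem \ref{thm:t}, is that although the raw products vanish since $\ind{X_i=1}\ind{X_i=0}=0$, the stratum-conditional means $\Econd{\rax{a}{1}}{S_i}$ and $\Econd{\rax{a}{0}}{S_i}$ are both generically nonzero within a common stratum; the cross terms in (i)--(iii) therefore survive, producing the stated off-diagonal entry. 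Under Assumption \ref{assumption:stratify} each stratum carries a single value of $X_i$, these cross terms vanish, and one recovers the diagonal structure of Theorem \ref{thm:t}.
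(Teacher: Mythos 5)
Your proposal follows essentially the same route as the paper: the same linearization via $\rax{a}{x}=\{\Ya-\muax{a}{x}\}\ind{X_i=x}$, the same three-way decomposition of the main term into a within-stratum sampling piece built from $\tilderax{a}{x}$, a stratum-heterogeneity piece, and an imbalance piece driven by $\Dns/\sqrt{n}$ (the paper's $\Erx{x}$, $\Srx{x}$, $\Urx{x}$ in Lemma 5), with the joint normality and mutual independence of the three components established by the Cram\'er--Wold and coupling argument of Lemma 6. Your identification of the key technical obstacle and your explanation of why $w_{10}\neq 0$ (the raw cross-products vanish but the stratum-conditional means of $\rax{a}{1}$ and $\rax{a}{0}$ do not) both match the paper's treatment, so the proposal is correct and not materially different.
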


Theorem \ref{thm:t_add} generalizes Theorem \ref{thm:t}, which additionally requires Assumption \ref{assumption:stratify}. However, the asymptotic covariance matrix in Theorem \ref{thm:t_add} is more complex than that in Theorem \ref{thm:t}. Specifically, $\Yax{1}{1} - \Yax{0}{1}$ and $\Yax{1}{0} - \Yax{0}{0}$ are not necessarily asymptotically independent. A sufficient condition for the asymptotic covariance $w_{10}$ to be zero is the use of simple randomization, where $q(s) = \pi(1-\pi)$ holds for all $s \in \mathcal{S}$. Conversely, under other randomization methods where $q(s) < \pi(1-\pi)$ for some $s \in \mathcal{S}$, the simple difference-in-means estimators $\Yax{1}{1} - \Yax{0}{1}$ and $\Yax{1}{0} - \Yax{0}{0}$ may be correlated. 


\begin{theorem}
	\label{thm:t_add_old}
	Under Assumptions  \ref{assumption:nondegenerate}, \ref{assumption:cond ind}, and \ref{assumption:normal},
	\begin{equation*}
		\begin{aligned}
			\frac{\Yax{1}{1} - \Yax{0}{1} - \Yax{1}{0} + \Yax{0}{0} - (\tau_1 - \tau_0)}{
				\left(\frac{\hatsigmaSqax{1}{1}}{\nax{1}{1}} + \frac{\hatsigmaSqax{0}{1}}{\nax{0}{1}} + \frac{\hatsigmaSqax{1}{0}}{\nax{1}{0}} + \frac{\hatsigmaSqax{0}{0}}{\nax{0}{0}}\right)^{1/2}
			}
			\stackrel{d}{\rightarrow}
			N(0,\zeta^2_{t\text{-test}^\prime}),
		\end{aligned}
	\end{equation*}
	where $\zeta^2_{t\text{-test}^\prime}\leq 1$. The inequality is strict unless
	\begin{equation}
		\begin{aligned}
			\{\pi(1-\pi)-q(s)\}&\Big(\frac{1}{\px{1}}\Big[\frac{1}{\pi} \Econd{\raxZ{1}{1}}{S_i=s} + \frac{1}{1-\pi} \Econd{\raxZ{0}{1}}{S_i=s}\Big]\\ 
            &\quad - \frac{1}{\px{0}}\Big[\frac{1}{\pi} \Econd{\raxZ{1}{0}}{S_i=s} + \frac{1}{1-\pi} \Econd{\raxZ{0}{0}}{S_i=s}\Big]\Big) = 0
		\end{aligned}
            \label{cond:1_add}
	\end{equation}
	for all $s \in \mathcal{S}$.
	\label{theorem:t_add}
\end{theorem}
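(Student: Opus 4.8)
The plan is to treat the statistic as a ratio, establish the limits of its numerator and denominator separately, and combine them by Slutsky's theorem, following the template of Theorem~\ref{thm:t_old} but now carrying the nonzero cross term $w_{10}$. Writing $\hat\delta = \Yax{1}{1} - \Yax{0}{1} - \Yax{1}{0} + \Yax{0}{0}$ and $\delta = \tau_1 - \tau_0$, the numerator $\sqrt{n}(\hat\delta - \delta)$ is the difference of the two coordinates in Theorem~\ref{thm:t_add}, so by the continuous mapping theorem it converges to $N(0, w_1^2 + w_0^2 - 2w_{10})$. For the denominator I would show that $n$ times the squared robust standard error converges in probability to
\[
\sigma_\star^2 := \frac{\sigmaSqax{1}{1}}{\pi\px{1}} + \frac{\sigmaSqax{0}{1}}{(1-\pi)\px{1}} + \frac{\sigmaSqax{1}{0}}{\pi\px{0}} + \frac{\sigmaSqax{0}{0}}{(1-\pi)\px{0}}.
\]
Slutsky's theorem then yields the stated limit with $\zeta^2_{t\text{-test}^\prime} = (w_1^2 + w_0^2 - 2w_{10})/\sigma_\star^2$, so the theorem reduces to proving $w_1^2 + w_0^2 - 2w_{10} \le \sigma_\star^2$ together with the equality characterization.

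For the denominator I would first argue $\nax{1}{x}/n \inp \pi\px{x}$ and $\nax{0}{x}/n \inp (1-\pi)\px{x}$, and then that $\hatsigmaSqax{a}{x} \inp \sigmaSqax{a}{x}$. The latter is the one genuinely new ingredient: conditional on the strata, Assumption~\ref{assumption:cond ind} makes the treated, $\{X_i=x\}$ subsample within each stratum $s$ an i.i.d.\ draw from the law of $\Yt \mid S_i = s, X_i = x$, while the stratum mixing weights converge to $\Prob{S_i = s \mid X_i = x}$ because under Assumption~\ref{assumption:normal} the within-stratum treated fraction converges to $\pi$ and is asymptotically independent of $X_i$. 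The mixture over $s$ reconstructs the law of $\Yt \mid X_i = x$, so the sample mean and (using $\E{|\Ya|^2}<\infty$) the sample variance of this subsample converge to $\muax{1}{x}$ and $\sigmaSqax{a}{x}$; the control arm is symmetric.

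The crux is the variance comparison. I would rewrite $\sigma_\star^2$ in the component notation of Theorem~\ref{thm:t_add} using $\sigmaSqax{a}{x} = \Var{r_{x,i}(a)}/\px{x}$ and the law of total variance conditional on $S_i$, namely $\Var{r_{x,i}(a)} = \Var{\tilde r_{x,i}(a)} + \Var{\Econd{\raxZ{a}{x}}{S_i}}$, where $\Econd{\raxZ{a}{x}}{S_i}$ has mean zero. The within-stratum pieces collapse to $\zetaAddY{1}$ and $\zetaAddY{0}$ and cancel against the matching pieces of $w_1^2$ and $w_0^2$. Writing $g_x(s) = (1/\pi)\Econd{\raxZ{1}{x}}{S_i = s} + \{1/(1-\pi)\}\Econd{\raxZ{0}{x}}{S_i = s}$, the pointwise identity
\[
\frac{\Econd{\raxZ{1}{x}}{S_i=s}^2}{\pi} + \frac{\Econd{\raxZ{0}{x}}{S_i=s}^2}{1-\pi} - \{\Econd{\raxZ{1}{x}}{S_i=s} - \Econd{\raxZ{0}{x}}{S_i=s}\}^2 = \pi(1-\pi)\,g_x(s)^2,
\]
together with its polarized cross-term analogue handling $\zetaAddYxy{1}{0}$ and $\zetaAddHxy{1}{0}$, converts the remaining between-stratum and $H$ terms into the $\pi(1-\pi)$ part, while the $q(s)$-weighted $A$ terms supply the rest, so that everything assembles into
\[
\sigma_\star^2 - (w_1^2 + w_0^2 - 2w_{10}) = \sum_{s \in \mathcal{S}} p(s)\{\pi(1-\pi) - q(s)\}\Big(\frac{g_1(s)}{\px{1}} - \frac{g_0(s)}{\px{0}}\Big)^2.
\]
Matching these contributions without sign errors — in particular keeping the $1/\px{1}^2$, $1/\px{0}^2$, and $1/(\px{1}\px{0})$ factors consistent so the sum completes the square $\{g_1(s)/\px{1} - g_0(s)/\px{0}\}^2$ — is the main obstacle; the stratification-covariate computation of Theorem~\ref{thm:t_old} is the degenerate special case $g_0(s)=0$ on $\{X(s)=1\}$ and $g_1(s)=0$ on $\{X(s)=0\}$, explaining why that result split into the two separate conditions~(\ref{cond:1}) and~(\ref{cond:0}).

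Finally, Assumption~\ref{assumption:normal} forces $q(s) \le \pi(1-\pi)$ for every $s$, so each summand above is nonnegative, giving $\sigma_\star^2 \ge w_1^2 + w_0^2 - 2w_{10}$ and hence $\zeta^2_{t\text{-test}^\prime} \le 1$. Since $p(s)>0$ and $\pi(1-\pi)-q(s)\ge 0$, the difference vanishes exactly when, for each $s$, either $q(s)=\pi(1-\pi)$ or $g_1(s)/\px{1} = g_0(s)/\px{0}$, which is precisely condition~(\ref{cond:1_add}); this delivers the strict-inequality claim and completes the proof.
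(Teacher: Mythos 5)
Your proposal is correct and follows essentially the same route as the paper: the numerator's limit comes from Theorem \ref{thm:t_add} plus the continuous mapping theorem, the denominator's limit $\sigma_\star^2$ is the paper's $\zetatold$ established via Lemma 1 (the conditional weak law), and the variance comparison uses exactly the paper's decomposition $\Var{r_{x,i}(a)} = \Var{\tilde r_{x,i}(a)} + \Var{\Econd{\raxZ{a}{x}}{S_i}}$ together with the completion-of-the-square identity to arrive at $\zetatold - \zetatprime = \sum_{s}p(s)\{\pi(1-\pi)-q(s)\}\{g_1(s)/\px{1} - g_0(s)/\px{0}\}^2 \ge 0$. The equality characterization and the observation that Theorem \ref{thm:t_old} is the degenerate case also match the paper.
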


\begin{remark}
 One sufficient condition for $\zeta^2_{t\text{-test}^\prime} = 1$ is the use of simple randomization. Another condition is that the strata are homogeneous within each level of $X_i$. Specifically, given $a \in \{1,0\}$ and $x \in \{1,0\}$, $\Econd{\Ya}{S_i = s, X_i = x} = \Econd{\Ya}{X_i = x}$ holds for all $s \in \mathcal{S}$. This condition is satisfied, for instance, in the absence of stratification, resulting in a single stratum that includes all units.
  \label{re:noStratum}
\end{remark}	

Consider the usual interaction term $t\text{-test}$ specified in (\ref{eq:t_test_usual}). Under the assumptions of Theorem \ref{theorem:t_add}, for testing (\ref{null:t}) at level 
$\alpha \in (0,1)$ when the null hypothesis is true, 
$$
\lim_{n \rightarrow \infty}\E{\phi_n^{t\text{-test}}(V^{(n)})} \leq \alpha,
$$ and the inequality is strict unless (\ref{cond:1_add}) holds.

Similarly, based on Theorem \ref{thm:t_add}, we can modify the usual interaction term $t$-test and obtain an valid test. The modified interaction term $t$-test can be expressed as
\begin{equation}
	\begin{aligned}
		\phi_n^{t\text{-test,mod}^\prime}(V^{(n)}) = I\{| T_n^{t\text{-test,mod}^\prime}(V^{(n)})| > z_{1-\alpha/2}\},
	\end{aligned}
	\label{eq:t_test_mod_add}
\end{equation}
where 
\begin{equation*}
	\begin{aligned}
		T_n^{t\text{-test,mod}^\prime}(V^{(n)}) =	\frac{\sqrt{n}(\Yax{1}{1} - \Yax{0}{1} - \Yax{1}{0} + \Yax{0}{0})}{
			(\hat w^2_{1} + \hat w^2_{0} - 2\hat w_{10})^{1/2}
		},
	\end{aligned}
\end{equation*}
in which $\hat w^2_{0} = (1/\hat p_0^2)\{\hatzetaAddY{0} + \hatzetaAddA{0} + \hatzetaAddH{0}\}$, $\hat w^2_{1} = (1/\hat p_1^2)\{\hatzetaAddY{1} + \hatzetaAddA{1} + \hatzetaAddH{1}\}$, and $\hat w_{10} =(\hatpx{1}\hatpx{0})^{-1}\{\hatzetaAddYxy{1}{0} + \hatzetaAddAxy{1}{0} + \hatzetaAddHxy{1}{0}\} $.

\begin{theorem}
	\label{thm:t_add_mod}
	Under Assumptions \ref{assumption:nondegenerate}, \ref{assumption:cond ind}, and \ref{assumption:normal}, for testing (\ref{null:t}) at level $\alpha \in (0,1)$ when the null hypothesis is true,
	$$\lim_{n \rightarrow \infty}\E{\phi_n^{t\text{-test,mod}^\prime}(V^{(n)})} = \alpha.$$ 
\end{theorem}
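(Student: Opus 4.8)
The plan is to combine the joint central limit theorem of Theorem \ref{thm:t_add} for the numerator with a consistency argument for the variance estimator in the denominator, and then invoke Slutsky's theorem. First I would rewrite the numerator under the null, where $\tau_1 = \tau_0$, as
\begin{equation*}
\sqrt{n}(\Yax{1}{1} - \Yax{0}{1} - \Yax{1}{0} + \Yax{0}{0}) = \sqrt{n}\big[(\Yax{1}{1} - \Yax{0}{1} - \tau_1) - (\Yax{1}{0} - \Yax{0}{0} - \tau_0)\big].
\end{equation*}
This is the linear contrast $(-1,1)$ applied to the bivariate vector in Theorem \ref{thm:t_add}, so by the continuous mapping theorem it converges in distribution to $N(0,\sigma^2)$ with $\sigma^2 = w^2_{1} + w^2_{0} - 2w_{10}$. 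Assumption \ref{assumption:nondegenerate} guarantees $\sigma^2 > 0$, so the limiting standard deviation is bounded away from zero, which is what the Slutsky step below will require.

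It then remains to show that the denominator is consistent, that is, $\hat w^2_{1} + \hat w^2_{0} - 2\hat w_{10} \inp \sigma^2$. Given this, Slutsky's theorem yields $T_n^{t\text{-test,mod}^\prime}(V^{(n)}) \indist N(0,1)$ under the null, and since the standard normal distribution is continuous, the limiting rejection probability is $\ProbB{|Z| > z_{1-\alpha/2}} = \alpha$ for $Z \sim N(0,1)$, giving $\lim_{n\to\infty}\E{\phi_n^{t\text{-test,mod}^\prime}(V^{(n)})} = \alpha$, which is the claim. The whole proof thus reduces to the consistency of the plug-in variance estimator.

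Because $w^2_{1}$, $w^2_{0}$, and $w_{10}$ are fixed continuous (ratio) functions of $\px{x}$, $\zetaAddY{x}$, $\zetaAddA{x}$, $\zetaAddH{x}$ and their cross analogues $\zetaAddYxy{1}{0}$, $\zetaAddAxy{1}{0}$, $\zetaAddHxy{1}{0}$, it suffices by the continuous mapping theorem to establish $\hatpx{x} \inp \px{x}$ and the convergence in probability of each $\hat\varsigma$ to its population target. The proportion $\hatpx{x} = \nx{x}/n$ is handled by the weak law of large numbers. For the variance pieces, the building blocks are the residuals $r_{x,i}(a) = \{\Ya - \mu_{ax}\}\ind{X_i = x}$, whose unknown centering $\mu_{ax}$ is replaced by the consistent estimator $\Yax{a}{x}$, and the stratum-level conditional expectations $\Econd{\raxZ{a}{x}}{S_i = s}$, which are estimated by within-stratum sample means. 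Assumption \ref{assumption:cond ind} together with the positivity $p(s) > 0$ ensures that each stratum has, with probability tending to one, a growing number of treated and control units, so these within-stratum averages converge to the correct conditional expectations.

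The main obstacle will be the terms that carry the randomization-specific weight $q(S_i)$, namely $\zetaAddA{x}$ and the cross-covariance $\zetaAddAxy{1}{0}$ entering $\hat w_{10}$. Unlike the $\tilde r$ and $H$ components, these cannot be recovered from outcome residuals alone; their consistent estimation hinges on the structure imposed by Assumption \ref{assumption:normal}, which pins down $q(s)$ as the almost-sure limiting conditional variance of $\Dns/\sqrt{n}$. I would verify that the Appendix estimators correctly target the $q(s)$-weighted averages and that the cross term $\hat w_{10}$, which is built from products of stratum-level averages taken across the two levels of $X_i$, converges jointly with the marginal terms; controlling the interaction between the plug-in centering $\Yax{a}{x}$ and the $q(s)$ weighting, and establishing this joint convergence, is the delicate step. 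Once every component is shown to converge, the continuous mapping and Slutsky arguments above close the proof.
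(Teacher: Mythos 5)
Your proposal takes essentially the same route as the paper's proof: the numerator's null limit is obtained by applying the contrast $(-1,1)$ to the bivariate normal limit of Theorem \ref{thm:t_add}, the plug-in variance estimator is shown consistent component by component (the paper does this via Lemma \ref{lemma:weak_law} and the identity $\sigma^2_{\tilde r_x(a)} = \px{x}\sigmaSqax{a}{x} - \sum_{s\in\mathcal{S}} p(s)\Econd{\raxZ{a}{x}}{S_i=s}^2$, after first establishing $\hatmuaxs{a}{x}{s}\inp\Econd{\Ya}{S_i=s,X_i=x}$), and Slutsky's theorem closes the argument. The one step you flag as delicate is in fact routine: $q(s)$ is a known constant of the chosen randomization scheme that is plugged directly into $\hatzetaAddA{x}$ and $\hatzetaAddAxy{1}{0}$ rather than estimated, and since every component converges in probability to a constant, the joint convergence you worry about is automatic.
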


\subsection{Stratified-adjusted interaction test}

For $x \in \mathcal{X}$ and $s \in \mathcal{S}$, let $\nxs{x}{s} = \sum_{i=1}^n \ind{S_i=s, X_i = x}$, $\naxs{1}{x}{s}= \sum_{i=1}^n A_i \ind{S_i=s, X_i = x}$, $\naxs{0}{x}{s} = \sum_{i=1}^n (1-A_i)\ind{S_i=s, X_i = x}$; let $\hatmuaxs{1}{x}{s} = \{{1}/{\naxs{1}{x}{s}}\}\sum_{i=1}^{n}Y_i A_i \ind{S_i=s, X_i = x}$ and $\hatmuaxs{0}{x}{s} = \{{1}/{\naxs{0}{x}{s}}\}\sum_{i=1}^{n}Y_i (1-A_i) \ind{S_i=s, X_i = x}$. Given that
$\tau_x$ equals 
\begin{equation*}
	\begin{aligned}
		\frac{1}{\px{x}}\sum_{s \in \mathcal{S}}\Prob{S_i = s, X_i=x}\Econd{\Yt - \Yc}{S_i = s,X_i=x} ,
	\end{aligned}
\end{equation*}
we can use the stratified-adjusted estimator $\sum_{s \in \mathcal{S}} \{{\nxs{x}{s}}/{\nx{x}}\}\{\hatmuaxs{1}{x}{s} - \hatmuaxs{0}{x}{s}\}$ to estimate $\tau_x$ and
\begin{equation}
	\begin{aligned}
		\sum_{s \in \mathcal{S}} \frac{\nxs{1}{s}}{\nx{1}}\{\hatmuaxs{1}{1}{s} - \hatmuaxs{0}{1}{s}\} - \sum_{s \in \mathcal{S}} \frac{\nxs{0}{s}}{\nx{0}}\{\hatmuaxs{1}{0}{s} - \hatmuaxs{0}{0}{s}\}
	\end{aligned}
	\label{eq:strata_add}
\end{equation}
to test the null hypothesis of no interaction.

To clarify the asymptotic behavior of the quantity in (\ref{eq:strata_add}), we define
\begin{equation*}	
\begin{aligned}
	\zetaCheckY{x} 
	&= \px{x}\Big[\frac{1}{\pi}\Varcond{\checkYa{1}}{X_i = x} + \frac{1}{1-\pi}\Varcond{\checkYa{0}}{X_i = x} \Big], \\
	\zetaCheckH{x} 
	&=  \px{x}E\Big[ \Big(\Econd{\maxZ{1}{x}}{S_i, X_i} -  \Econd{\maxZ{0}{x}}{S_i, X_i}\Big)^2 \Bigm| X_i = x\Big],
\end{aligned}
\end{equation*}
where $\checkYa{a} = \Ya - \Econd{\Ya}{S_i, X_i}$, for $a \in \{1, 0\}$, $x \in \{1, 0\}$.

\begin{theorem}
	\label{thm:t_strata_add}
	Under Assumptions  \ref{assumption:nondegenerate}, \ref{assumption:cond ind}, and \ref{assumption:normal_weaker},
	\begin{equation*}
		\begin{aligned}
			{\sqrt{n}} {\begin{bmatrix}
					\sum_{s \in \mathcal{S}} \{{\nxs{0}{s}}/{\nx{0}}\}\{\hatmuaxs{1}{0}{s} - \hatmuaxs{0}{0}{s}\}  - \tau_0\\
					\sum_{s \in \mathcal{S}} \{{\nxs{1}{s}}/{\nx{1}}\}\{\hatmuaxs{1}{1}{s} - \hatmuaxs{0}{1}{s}\} - \tau_1\\
			\end{bmatrix}}
			\overset{d}{\rightarrow}
			N\left({\begin{pmatrix}
					0 \\
					0 \\
			\end{pmatrix}}, {\begin{pmatrix}
					s^2_{0} & 0 \\
					0 & s^2_{1} \\
			\end{pmatrix}}\right), 
		\end{aligned}
	\end{equation*}
where $s^2_{0} = ({1}/{\px{0}^2})\{\zetaCheckY{0}  + \zetaCheckH{0}\}$, and $s^2_{1} = ({1}/{\px{1}^2})\{\zetaCheckY{1}  + \zetaCheckH{1}\}$.
\end{theorem}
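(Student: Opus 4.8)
The plan is to mirror the proof of Theorem~\ref{thm:t_strata}, but to carry the argument out cell-by-cell over the refined partition indexed by the pairs $(s,x)$ rather than by aggregating strata with a common value of $X$; this refinement is forced because, without Assumption~\ref{assumption:stratify}, $X_i$ is no longer constant within a stratum. Writing $\hat\tau_x$ for the estimator $\sum_{s}\{\nxs{x}{s}/\nx{x}\}\{\hatmuaxs{1}{x}{s} - \hatmuaxs{0}{x}{s}\}$ and $\tau_{sx} = \Econd{\Yt - \Yc}{S_i = s, X_i = x}$, I would substitute $\Ya = \Econd{\Ya}{S_i, X_i} + \checkYa{a}$ into each cell mean to obtain the exact identity $\hatmuaxs{1}{x}{s} - \hatmuaxs{0}{x}{s} = \tau_{sx} + \{1/\naxs{1}{x}{s}\}\sum_i \checkYa{1}A_i\ind{S_i=s,X_i=x} - \{1/\naxs{0}{x}{s}\}\sum_i \checkYa{0}(1-A_i)\ind{S_i=s,X_i=x}$. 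This splits $\sqrt{n}(\hat\tau_x - \tau_x)$ into a heterogeneity term $T_{1,x} = \sqrt{n}\{\sum_s(\nxs{x}{s}/\nx{x})\tau_{sx} - \tau_x\}$ and a within-cell term $T_{2,x}$ collecting the two $\check Y$ sums.

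For $T_{1,x}$, a ratio linearization of $\nxs{x}{s}/\nx{x}$ about $\Prob{S_i=s,X_i=x}/\px{x}$ combined with the tower identity $\Econd{\maxZ{1}{x}-\maxZ{0}{x}}{S_i,X_i} = \tau_{S_i,x}-\tau_x$ (valid since $S_i=S(Z_i)$ and $X_i$ are $Z_i$-measurable) yields the i.i.d.\ representation $T_{1,x} = \{1/(\px{x}\sqrt n)\}\sum_i\ind{X_i=x}\{\Econd{\maxZ{1}{x}}{S_i,X_i}-\Econd{\maxZ{0}{x}}{S_i,X_i}\} + o_P(1)$, whose variance tends to $\zetaCheckH{x}/\px{x}^2$. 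For $T_{2,x}$, I would replace $\naxs{1}{x}{s}$ and $\naxs{0}{x}{s}$ by $\pi\nxs{x}{s}$ and $(1-\pi)\nxs{x}{s}$; the resulting error is negligible once I show the sub-stratum imbalance $\sum_i(A_i-\pi)\ind{S_i=s,X_i=x}$ is $o_P(n)$. Conditioning on $S^{(n)}$ and the within-stratum assignments and using that, by Assumption~\ref{assumption:cond ind}, the labels $X_i$ are independent of $A^{(n)}$ given $S^{(n)}$, this sub-stratum imbalance has conditional mean $\Prob{X_i=x\mid S_i=s}\Dns=o_P(n)$ by Assumption~\ref{assumption:normal_weaker} and conditional variance $O_P(n)$, hence is $o_P(n)$. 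This reduces $T_{2,x}$ to $\{1/(\px{x}\sqrt n)\}\sum_i\ind{X_i=x}\{(1/\pi)\checkYa{1}A_i-(1/(1-\pi))\checkYa{0}(1-A_i)\}+o_P(1)$, whose per-unit second moment equals $\zetaCheckY{x}/\px{x}^2$ (the unidentified counterfactual product $\checkYa{1}\checkYa{0}$ cancels).

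The hardest step will be the central limit theorem for the $A$-dependent term under covariate-adaptive randomization, where the $A_i$ are dependent across units within a stratum and a naive variance would involve the design-specific quantity $q(s)$. The structural fact that rescues the argument is that $\check Y$ is centered at the $(S_i,X_i)$-conditional mean, so that both $\ind{X_i=x}(\checkYa{1}-\checkYa{0})$ and the coefficient $\ind{X_i=x}\{(1/\pi)\checkYa{1}+(1/(1-\pi))\checkYa{0}\}$ multiplying $A_i-\pi$ have conditional expectation zero given $S_i$. Conditioning on $(S^{(n)},W^{(n)})$, invoking Assumption~\ref{assumption:cond ind}, and using that distinct $W_i$ are independent given $S^{(n)}$, every off-diagonal cross term ($i\neq j$) factorizes through one of these vanishing conditional expectations and so drops out regardless of how the randomization correlates $A_i$ and $A_j$. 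Consequently the limiting variance is free of $q(s)$---which is exactly why Assumption~\ref{assumption:normal_weaker} can replace Assumption~\ref{assumption:normal}---and the surviving diagonal contributions reproduce the independent-assignment values $\zetaCheckY{x}$ and $\zetaCheckH{x}$; here Assumption~\ref{assumption:normal_weaker} is again used to show the average within-stratum propensity converges to $\pi$.

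Finally I would obtain joint asymptotic normality by the Cram\'er--Wold device applied to $a_0\sqrt n(\hat\tau_0-\tau_0)+a_1\sqrt n(\hat\tau_1-\tau_1)$: the $(S^{(n)},W^{(n)})$-measurable pieces are i.i.d.\ mean-zero sums to which the Lindeberg central limit theorem applies, while the $A$-dependent pieces are conditionally asymptotically normal given $(S^{(n)},W^{(n)})$ and asymptotically independent of the former, with Assumption~\ref{assumption:nondegenerate} ensuring nondegeneracy. The covariance between the $x=0$ and $x=1$ estimators vanishes because every summand carries one of the mutually exclusive factors $\ind{X_i=0}$ or $\ind{X_i=1}$, which annihilates all same-unit contributions, while the conditional-mean-zero property annihilates the cross-unit contributions; this delivers the stated bivariate normal limit with diagonal covariance entries $s^2_0$ and $s^2_1$.
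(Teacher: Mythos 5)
Your decomposition is exactly the paper's: your $T_{1,x}$ and $T_{2,x}$ are, up to the factor $1/\px{x}$, the terms $\Scheckx{x}$ and $\Lcheckx{x}$ in the paper's proof of this theorem; your reduction of $T_{2,x}$ via the sub-stratum imbalance being $o_P(n)$ is what the paper obtains from Lemma \ref{lemma:weak_law}; and your variance bookkeeping (cancellation of the counterfactual product since $A_i(1-A_i)=0$, and vanishing of all cross terms through the $(S_i,X_i)$-conditional centering of $\checkYa{a}$) correctly reproduces $\zetaCheckY{x}$, $\zetaCheckH{x}$ and the zero off-diagonal entry. The gap is in the step you yourself single out as the hardest. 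You propose to obtain the limit law of the $A$-dependent piece by asserting that it is ``conditionally asymptotically normal given $(S^{(n)},W^{(n)})$.'' That assertion is false under the stated assumptions: given $(S^{(n)},W^{(n)})$ the only remaining randomness is $A^{(n)}$, and Assumption \ref{assumption:normal_weaker} imposes no distributional structure on $A^{(n)}$ at all --- for a design whose assignments are a deterministic function of the covariate sequence (e.g.\ strict alternation within each stratum), the conditional law of $\sum_i(A_i-\pi)c_i$ given $(S^{(n)},W^{(n)})$ is a point mass, where $c_i$ denotes your coefficient $\ind{X_i=x}\{\checkYa{1}/\pi+\checkYa{0}/(1-\pi)\}$. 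A correct variance computation (your cross-term cancellation) does not substitute for a limit theorem for this dependent sum, and the same objection applies to your claimed asymptotic independence from the $(S^{(n)},W^{(n)})$-measurable pieces.

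The repair is to condition the other way around, which is what the paper's Lemma \ref{lemma:thm_strata_joint_add} does: conditionally on $(A^{(n)},S^{(n)},X^{(n)})$, Assumption \ref{assumption:cond ind} together with the i.i.d.\ structure of $W^{(n)}$ makes the outcomes within each cell $(s,x)$ and arm an i.i.d.\ sample from the cell-conditional law, so $\Lcheckx{c}$ is equal in distribution to a partial sum of an i.i.d.\ mean-zero array with random block lengths $\naxs{a}{x}{s}$. A Donsker/Brownian-motion argument then replaces the random block boundaries by their deterministic limits --- this is precisely where $\naxs{a}{x}{s}/n\inp\pi\pxs{x}{s}$, hence Assumption \ref{assumption:normal_weaker}, enters --- after which an ordinary i.i.d.\ CLT applies, and independence from $\Scheckx{c}$ is automatic because the approximating sum is built from variables constructed independently of $(A^{(n)},S^{(n)},X^{(n)})$. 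Equivalently, in your formulation one can apply a conditional Lindeberg CLT to $\sum_i(A_i-\pi)c_i$ given $(A^{(n)},S^{(n)})$, treating $A_i-\pi$ as fixed coefficients and the $c_i$ as conditionally independent mean-zero variables. With the conditioning flipped in this way, the rest of your argument goes through and matches the paper.
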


Compared with the covariance matrix in Theorem \ref{thm:t_add}, the covariance matrix of the stratified-adjusted estimators in Theorem \ref{thm:t_strata_add} has a much simpler structure. Specifically, the covariance matrix is always diagonal, regardless of whether Assumption \ref{assumption:stratify} is satisfied. This implies that the stratified-adjusted estimators are always asymptotically independent. Furthermore, the covariance matrix does not rely on $q(s)$ and thus does not depend on specific randomization methods.

The valid stratified-adjusted interaction test is given by
\begin{equation}
	\begin{aligned}
		\phi_n^{\text{strat}^\prime}(V^{(n)}) = I\{|  T_n^{\text{strat}^\prime}(V^{(n)})| > z_{1-\alpha/2}\},
	\end{aligned}
	\label{eq:t_test_strata_add}
\end{equation}
where 
\begin{equation*}
	\begin{aligned}
		T_n^{\text{strat}^\prime}(V^{(n)}) =	\frac{\sqrt{n}\left[\sum_{s \in \mathcal{S}} \frac{\nxs{1}{s}}{\nx{1}}\{\hatmuaxs{1}{1}{s} - \hatmuaxs{0}{1}{s}\}- \sum_{s \in \mathcal{S}} \frac{\nxs{0}{s}}{\nx{0}}\{\hatmuaxs{1}{0}{s} - \hatmuaxs{0}{0}{s}\}\right]}{(\hat s^2_{1} + \hat s^2_{0})^{1/2}
		},
	\end{aligned}
\end{equation*}
in which $\hat s^2_{0} = ({1}/{\hatpx{0}^2})\{\hatzetaCheckY{0}  + \hatzetaCheckH{0}\}$ and $\hat s^2_{1} = ({1}/{\hatpx{1}^2})\{\hatzetaCheckY{1}  + \hatzetaCheckH{1}\}$.
\begin{theorem}
	\label{thm:add_strata}
	Under Assumptions  \ref{assumption:nondegenerate}, \ref{assumption:cond ind}, and \ref{assumption:normal_weaker}, for testing (\ref{null:t}) at level $\alpha \in (0,1)$ when the null hypothesis is true,
	$$\lim_{n \rightarrow \infty}\E{ \phi_n^{\text{strat}^\prime}(V^{(n)})} = \alpha.$$ 
\end{theorem}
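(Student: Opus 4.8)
The plan is to reduce the statement to a standard Slutsky argument, mirroring the structure used for Theorem \ref{thm:t_strata_mod}: establish asymptotic normality of the numerator of $T_n^{\text{strat}^\prime}(V^{(n)})$ under the null, establish that the denominator converges in probability to the correct standard deviation, and then combine. The numerator asymptotics are already supplied by Theorem \ref{thm:t_strata_add}, so the genuine work lies entirely in the consistency of the variance estimator.

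First I would treat the numerator. Theorem \ref{thm:t_strata_add} gives that $\sqrt{n}$ times the vector of the two stratified-adjusted estimators, centered at $\tau_0$ and $\tau_1$, converges jointly to $N(0, \operatorname{diag}(s_0^2, s_1^2))$. Applying the continuous linear functional $(a,b)\mapsto b-a$ and using that under $H_0$ in (\ref{null:t}) we have $\tau_1=\tau_0$, so the centering constants cancel in the difference, the numerator of $T_n^{\text{strat}^\prime}(V^{(n)})$ converges in distribution to $N(0, s_0^2+s_1^2)$. The nondegeneracy in Assumption \ref{assumption:nondegenerate} guarantees $s_0^2+s_1^2>0$, which is needed to divide by the denominator in the limit.

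Second, I would prove $\hat s_0^2+\hat s_1^2 \inp s_0^2+s_1^2$. Since $\hat s_x^2 = (1/\hatpx{x}^2)\{\hatzetaCheckY{x}+\hatzetaCheckH{x}\}$, it suffices, by the continuous mapping theorem, to show $\hatpx{x}\inp\px{x}$ (the law of large numbers), $\hatzetaCheckY{x}\inp\zetaCheckY{x}$, and $\hatzetaCheckH{x}\inp\zetaCheckH{x}$ for $x\in\{1,0\}$. The core ingredient is cellwise consistency: for every $(s,x)$ with $\Prob{S_i=s,X_i=x}>0$, the counts $\naxs{1}{x}{s}$ and $\naxs{0}{x}{s}$ diverge and the cell means $\hatmuaxs{a}{x}{s}$ converge to $\Econd{\Ya}{S_i=s,X_i=x}$, with the analogous statement for cell sample variances. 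Here I would condition on $S^{(n)}$ and invoke Assumption \ref{assumption:cond ind}, so that within each cell the potential outcomes behave like an i.i.d. sample, together with Assumption \ref{assumption:normal_weaker}, which yields $\Dns/\ns{s}=o_P(1)$ and hence forces the within-cell treated fraction to $\pi$; this delivers consistency of the within-cell term $\hatzetaCheckY{x}$. For $\zetaCheckH{x}$, I would use the identity $\Econd{\maxZ{1}{x}}{S_i,X_i}-\Econd{\maxZ{0}{x}}{S_i,X_i}=\Econd{\Yt-\Yc}{S_i,X_i}-\tau_x$, which shows that $\zetaCheckH{x}$ is the $\px{x}$-weighted between-cell variance of the cell-specific treatment effects within level $X_i=x$, estimated by aggregating the squared cell contrasts $\hatmuaxs{1}{x}{s}-\hatmuaxs{0}{x}{s}$. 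Combining these pieces via Slutsky's theorem gives $T_n^{\text{strat}^\prime}(V^{(n)})\indist N(0,1)$ under the null, so that $\E{\phi_n^{\text{strat}^\prime}(V^{(n)})}=\Prob{|T_n^{\text{strat}^\prime}(V^{(n)})|>z_{1-\alpha/2}}\to\Prob{|N(0,1)|>z_{1-\alpha/2}}=\alpha$.

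I expect the main obstacle to be the consistency of $\hatzetaCheckH{x}$. The between-cell term is estimated by plugging noisy cell-level contrasts into a quadratic, and $E[(\hatmuaxs{1}{x}{s}-\hatmuaxs{0}{x}{s})^2]$ overshoots its target $(\Econd{\Yt-\Yc}{S_i=s,X_i=x})^2$ by the sampling variance of the contrast, of order $1/\naxs{a}{x}{s}$. Consequently $\hatzetaCheckH{x}$ must incorporate a variance correction (as detailed in the Appendix), and the delicate step is to verify that this correction is itself consistently estimated and that the corrected aggregate converges in probability to $\zetaCheckH{x}$ under the dependent assignments permitted by Assumption \ref{assumption:normal_weaker}.
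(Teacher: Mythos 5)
Your proposal follows essentially the same route as the paper: the numerator asymptotics come directly from Theorem \ref{thm:t_strata_add}, the denominator is handled by showing $\hatpx{x}\inp\px{x}$, $\hatzetaCheckY{x}\inp\zetaCheckY{x}$, and $\hatzetaCheckH{x}\inp\zetaCheckH{x}$ via cellwise consistency of the counts and cell means (the paper's Lemma 1), and the conclusion follows from the continuous mapping theorem and Slutsky. The one place you misjudge the difficulty is the anticipated need for a variance correction in $\hatzetaCheckH{x}$: the paper's estimator is a pure plug-in, $\sum_{s}\{\nxs{x}{s}/n\}[\{\hatmuaxs{1}{x}{s}-\Yax{1}{x}\}-\{\hatmuaxs{0}{x}{s}-\Yax{0}{x}\}]^2$, and since $\mathcal{S}$ is finite and each cell count grows proportionally to $n$, the sampling-variance ``overshoot'' you describe is $O_P(1/n)$ and vanishes, so consistency follows from the continuous mapping theorem with no correction needed.
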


\begin{remark}
	 Lemma 8 (Appendix) shows that $ s^2_{1} + s^2_{0} \leq w^2_{1} + w^2_{0} - 2 w_{10}$, indicating that the stratified-adjusted test (\ref{eq:t_test_strata_add}) is asymptotically more powerful than the modified test (\ref{eq:t_test_mod_add}). However, unlike the case in which $X_i$ is a stratification covariate, the randomization method achieving ``strong balance'' cannot ensure that the equality holds. In our simulations (Section \ref{section:simulation}), we demonstrate that the power of the stratified-adjusted test may be higher than that of the modified test, even when the randomization method achieves ``strong balance''.
 \label{re:add_strat}
\end{remark}

To conclude this section, we emphasize that all inferential properties and procedures discussed herein (Section \ref{section:binary_add}) will reduce to their counterparts in Section \ref{section:binary_used} when Assumption \ref{assumption:stratify} is met. Therefore, while it is conceptually crucial to distinguish between stratification covariates and additional baseline covariates, the framework utilized in this section is applicable to both types. A significant implication, from a computational standpoint, is that the same algorithmic procedure can be employed for testing interaction effects, irrespective of the type of covariate under investigation. This advantage offers practical convenience for clinical researchers and statisticians, as it allows for the reuse of the same programming code.

\section{{Simulation study}}\label{section:simulation}
We examine the type \RNum{1} error and power of tests for all three interaction tests, i.e.,  the usual test, modified test, and stratified-adjusted test. We consider three randomization methods: simple randomization (SR), stratified block randomization (SBR), and stratified biased-coin design (SBCD). The block size is 6 and the biased-coin probability is  0.75. The nominal level $\alpha$ is $5\%$, the  number of units $n$ is 800, and simulations are based on 10,000 runs. In this section, we focus on scenarios in which the binary covariate $X$ is tested for interaction. 
	
Model 1: Linear model:
$$
\begin{cases}
	\Yt = \mu_1 +(\beta_{1} + \delta_{1})X_{i}   + \alpha_{1}W^*_{i} + \gamma_{1}X_{i}W^*_{i}+  \sigma_1 \epsilon_{1,i}, \\
	\Yc  = \mu_0 +\beta_{1} X_{i} + \alpha_{1}W^*_{i} + \sigma_0 \epsilon_{0,i}, \\
\end{cases}
$$
where $X$ is a Bernoulli random variable with a probability of 0.5, $W^* \sim N(0, 3^2)$ independent of $X$ is categorized to $W = \ind{W^* > 0}$. We assume that $\mu_1 = 4$, $\mu_0 = 1$, $\beta_{1} =3$, $\alpha_{1} = -2$, $\gamma_{1} = 4$, $\sigma_1 = 1$, $\sigma_0 = 0.5$, $ \epsilon_{1,i} \sim N(0, 1)$,  and $ \epsilon_{0,i} \sim N(0, 1)$. $\delta_1$ equals $0$ (null) or $1.5$ (alternative).

Model 2: Nonlinear model:
$$
\begin{cases}
	\Yt = \mu_1 + \exp\big\{(\beta_{1} + \delta_{1})X^*_{i}\big\}   + \alpha_{1}W^*_{i}+ \gamma_{1}X^*_{i}W^*_{i}+  \sigma_1(X^*_{i})\epsilon_{1,i}, \\
	\Yc  = \mu_0 +\exp\big(\beta_{1} X^*_{i}\big) + \alpha_{1}W^*_{i} + \sigma_0(X^*_{i})\epsilon_{0,i}, \\
\end{cases}
$$
where $X^* \sim \text{Unif}(-1, 1)$ is categorized to $X = \ind{X^* > 0}$, and $W^* \sim N(0, 2^2)$ independent of $X^*$ is categorized to $W = \ind{W^* > 0}$. We assume that $\mu_1 = 5$, $\mu_0 = 4$, $\beta_{1} =0.5$,  $\alpha_{1} = 2$, $\gamma_{1} = 6$, $\sigma_1(X^*_{i}) = \exp(0.5X^*_{i})$, $\sigma_0(X^*_{i}) = 0.5\exp(0.5X^*_{i})$, $ \epsilon_{1,i} \sim N(0, 1)$, and $ \epsilon_{0,i} \sim N(0, 1)$. $\delta_1$ equals $0$ (null) or $1.2$ (alternative).

Model 3: Binary outcome model with no overall treatment effect:
$$
\begin{cases}
	\Yt = I\big\{\mu_1 + (\beta_{1} + \delta_{1})(X_i-2/3)+ \alpha_{1}W_{i} + \gamma_{1}X_{i}W_{i} > U_{1,i}\big\}, \\
	\Yc = I\big\{\mu_0 + \beta_{1} (X_i-2/3) +\alpha_{1}W_{i}  > U_{0,i}\big\}, \\
\end{cases}
$$
where $X$ is a Bernoulli random variable with a probability of $2/3$, and $W$ equals $1$ or $-1$ with equal probabilities. We assume that $\mu_1 = \mu_0 = 4$, $\beta_{1} =1$, $\alpha_{1} = -3$, $\gamma_{1} = 6$, $U_{1,i} \sim \text{Unif}(0, 10)$, and $U_{0,i} \sim \text{Unif}(0, 10)$. $\delta_1$ equals $0$ (null) or $1.5$ (alternative).

Table \ref{tab:simulation_equal} presents the empirical type \RNum{1} error under equal allocation, $\pi = {1}/{2}$. First, we consider the case in which $X$ is used for stratification. When only $X$ is used to form the strata, the usual test is not conservative, as mentioned in Remark \ref{re:Xonly}. When both $X$ and $W$ are used for stratification, under SBR and SBCD, the rejection probability of the usual test is strictly less than the nominal level. Next, we consider the case in which $X$ is an additional covariate. When no covariate is used for stratification, resulting in a single stratum that includes all units, the usual test is not conservative, as mentioned in Remark \ref{re:noStratum}. When only $W$ is used to form the strata, the usual test is conservative under SBR and SBCD. Regardless of the stratification covariates and randomization method used, both the modified test and stratified-adjusted test exhibit rejection probabilities that closely match the nominal level.

Table \ref{tab:simulation_equal} presents the power of tests under the alternative hypothesis. First, we consider the case in which $X$ is used for stratification. When only $X$ is used for stratification, all three tests demonstrate similar levels of power regardless of the randomization method used. When both $X$ and $W$ are used for stratification, the stratified-adjusted test is the most powerful test. Under SBR and SBCD, two randomization methods that achieve ``strong balance'', the modified test is as powerful as the stratified-adjusted test, as mentioned in Remark \ref{re:used_strat}. Subsequently, we consider the case in which $X$ is an additional covariate. When no covariate is used to form the strata, all three tests exhibit the same power.  When only $W$ is used for stratification, the stratified-adjusted test still has the greatest power. However, even among the two randomization methods that achieves ``strong balance'', the modified test is less powerful than the stratified-adjusted test, as mentioned in Remark \ref{re:add_strat}.

Table \ref{tab:simulation_unequal} presents the empirical type \RNum{1} error and power of tests under unequal allocation, $\pi = {2}/{3}$. The results are similar to those presented in Table \ref{tab:simulation_equal}. 

\begin{table}[]
\caption{Rejection probabilities (in percentage points) for three interaction tests (on binary $X$), stratification mechanisms, and randomization methods under equal allocation, $\pi = 1/2$, and sample size $n = 800$}{
\tabcolsep=4.25pt
\renewcommand\arraystretch{0.5}
\small
\begin{tabular}{cccccccc}
\hline
              &           & \multicolumn{2}{c}{Stratification covariates} &                      & \multicolumn{3}{c}{Randomization methods}                        \\ \cline{3-4} \cline{6-8}
\multicolumn{1}{l}{} &\multicolumn{1}{c}{Model} & $X$ stratified         & $S(\cdot)$             &                      & SR                & SBR                  & SBCD                  \\ \hline 

$H_0$         & 1         & Yes                  & $X$                    &                      & 4.8 / 4.8 / 4.8 & 5.3 / 5.3 / 5.3 & 5.3 / 5.3 / 5.3   \\
              &           &                      & $X \times W$         &                      &  5.3 / 5.4 / 5.4 & 2.2 / 5.4 / 5.3 & 2.0 / 5.8 / 5.6 \\ 
              &           & No                   & --                     &                      & 5.4 / 5.5 / 5.5 & 4.9 / 4.9 / 4.9 & 5.2 / 5.2 / 5.2 \\ 
              &           &                      & $W$                  &                      & 5.4 / 5.4 / 5.3 & 3.7 / 5.3 / 5.5 & 3.5 / 5.3 / 5.5 \\ 
              &           &                      &                        & \multicolumn{1}{l}{} &                    & \multicolumn{1}{l}{} & \multicolumn{1}{l}{} \\
              & 2         & Yes                  & $X$                    &                      & 5.2 / 5.3 / 5.3 & 5.2 / 5.2 / 5.2 & 4.6 / 4.6 / 4.6 \\ 
              &           &                      & $X \times W$         &                      & 4.8 / 4.9 / 5.2 & 1.3 / 5.4 / 5.4 & 1.3 / 5.2 / 5.2 \\ 
              &           & No                   & --                     &                      & 5.0 / 5.0 / 5.0 & 5.5 / 5.5 / 5.5 & 5.3 / 5.3 / 5.3 \\ 
              &           &                      & $W$                  &                      & 5.2 / 5.3 / 5.7 & 3.3 / 4.9 / 5.1 & 3.6 / 5.1 / 4.9 \\ 
              &           &                      &                        & \multicolumn{1}{l}{} &                    & \multicolumn{1}{l}{} & \multicolumn{1}{l}{} \\
              & 3         & Yes                  & $X$                    &                      & 5.3 / 5.4 / 5.4 & 5.1 / 5.1 / 5.1 & 4.9 / 4.9 / 4.9 \\ 
              &           &                      & $X \times W$         &                      & 4.7 / 4.8 / 5.0 & 2.5 / 5.3 / 5.4 & 2.4 / 5.5 / 5.5 \\ 
              &           & No                   & --                     &                      & 5.5 / 5.5 / 5.5 & 5.1 / 5.1 / 5.1 & 4.8 / 4.8 / 4.8 \\ 
              &           &                      & $W$                  &                      & 5.3 / 5.3 / 5.5 & 4.3 / 5.5 / 5.1 & 4.3 / 5.4 / 5.2 \\  
                &           &                      &                        & \multicolumn{1}{l}{} &                    & \multicolumn{1}{l}{} & \multicolumn{1}{l}{} \\  
$H_1$         & 1         & Yes                  & $X$                    &                      & 42.2 / 42.3 / 42.3 & 41.9 / 41.9 / 41.9 & 41.9 / 41.9 / 41.9 \\ 
              &           &                      & $X \times W$         &                      & 41.7 / 41.8 / 55.8 & 40.8 / 57.1 / 56.9 & 40.8 / 56.6 / 56.4 \\ 
              &           & No                   & --                     &                      & 41.9 / 42.0 / 42.0 & 41.9 / 41.9 / 41.9 & 42.2 / 42.2 / 42.2 \\ 
              &           &                      & $W$                  &                      & 42.3 / 42.4 / 56.6 & 42.2 / 49.2 / 56.6 & 41.4 / 48.4 / 56.9 \\   
              &           &                      &                        & \multicolumn{1}{l}{} &                    & \multicolumn{1}{l}{} & \multicolumn{1}{l}{} \\
              & 2         & Yes                  & $X$                    &                      & 43.5 / 43.4 / 43.4 & 43.9 / 43.9 / 43.9 & 43.8 / 43.8 / 43.8 \\ 
              &           &                      & $X \times W$         &                      & 44.2 / 44.3 / 63.9 & 42.9 / 64.2 / 64.2 & 43.0 / 63.8 / 64.0 \\ 
              &           & No                   & --                     &                      & 44.1 / 44.1 / 44.1 & 44.1 / 44.2 / 44.2 & 43.8 / 43.9 / 43.9 \\ 
              &           &                      & $W$                  &                      & 44.3 / 44.3 / 64.6 & 44.1 / 49.9 / 63.5 & 44.7 / 50.6 / 64.8 \\ 
              &           &                      &                        & \multicolumn{1}{l}{} &                    & \multicolumn{1}{l}{} & \multicolumn{1}{l}{} \\
              & 3         & Yes                  & $X$                    &                      & 37.7 / 37.8 / 37.8 & 37.6 / 37.6 / 37.6 & 37.4 / 37.4 / 37.4 \\ 
              &           &                      & $X \times W$         &                      & 37.5 / 37.6 / 47.1 & 35.9 / 46.9 / 46.7 & 36.0 / 46.6 / 46.7 \\ 
              &           & No                   & --                     &                      & 38.2 / 38.2 / 38.2 & 37.8 / 37.9 / 37.9 & 36.9 / 36.9 / 36.9 \\ 
              &           &                      & $W$                  &                      & 37.4 / 37.5 / 46.9 & 36.1 / 39.5 / 46.7 & 37.3 / 40.5 / 47.7 \\
              \hline
\end{tabular}}
\begin{tablenotes} 
\small
\item Notes: $H_0$ and $H_1$ denote the null and alternative conditions, respectively. $X$ stratified indicates whether $X$ is used for stratification. $S(\cdot)$ are the covariates used for stratification. $X \times W$ indicates that both $X$ and $W$ are used to form the strata. -- indicates that no covariate is used to form the strata. SR, simple randomization; SBR, stratified block randomization; SBCD, stratified biased-coin design. The three numbers in each column correspond to the rejection probabilities for the usual test, modified test, and stratified-adjusted test.
\end{tablenotes}
\label{tab:simulation_equal}
\end{table}

\begin{table}[]
\caption{Rejection probabilities (in percentage points) for three interaction tests (on binary $X$), stratification mechanisms, and randomization methods under unequal allocation, $\pi = 2/3$, and sample size $n = 800$}{
\tabcolsep=4.25pt
\renewcommand\arraystretch{0.5}
\small
\begin{tabular}{cccccccc}
\hline
              &           & \multicolumn{2}{c}{Stratification covariates} &                      & \multicolumn{3}{c}{Randomization methods}                    \\ \cline{3-4} \cline{6-8}
              
\multicolumn{1}{l}{} &\multicolumn{1}{c}{Model}  & $X$ stratified         & $S(\cdot)$             &                      & SR                & SBR                  & SBCD                  \\
\hline
$H_0$         & 1         & Yes                  & $X$                    &                      &  4.8 / 5.0 / 5.0 & 5.6 / 5.6 / 5.6 & 5.6 / 5.6 / 5.6 \\ 
              &           &                      & $X \times W$         &                      & 5.4 / 5.5 / 5.7 & 1.6 / 5.7 / 5.7 & 1.6 / 5.4 / 5.4 \\ 
              &           & No                   & --                     &                      & 5.2 / 5.2 / 5.2 & 4.9 / 4.9 / 4.9 & 5.4 / 5.4 / 5.4 \\ 
              &           &                      & $W$                  &                      & 5.4 / 5.4 / 5.5 & 4.7 / 5.5 / 5.8 & 4.4 / 5.3 / 5.2 \\ 
              &           &                      &                        & \multicolumn{1}{l}{} &                    & \multicolumn{1}{l}{} & \multicolumn{1}{l}{} \\
              & 2         & Yes                  & $X$                    &                      & 5.4 / 5.4 / 5.4 & 5.2 / 5.2 / 5.2 & 5.1 / 5.1 / 5.1 \\ 
              &           &                      & $X \times W$         &                      & 4.7 / 4.7 / 5.2 & 1.5 / 5.5 / 5.4 & 1.3 / 5.1 / 5.0 \\
              &           & No                   & --                     &                      & 5.0 / 5.1 / 5.1 & 5.2 / 5.2 / 5.2 & 4.6 / 4.7 / 4.7 \\ 
              &           &                      & $W$                  &                      & 5.5 / 5.6 / 5.2 & 4.1 / 4.9 / 5.0 & 4.3 / 5.0 / 5.4 \\ 
              &           &                      &                        & \multicolumn{1}{l}{} &                    & \multicolumn{1}{l}{} & \multicolumn{1}{l}{} \\
              & 3         & Yes                  & $X$                    &                      & 5.1 / 5.2 / 5.2 & 5.8 / 5.8 / 5.8 & 5.2 / 5.1 / 5.1 \\
              &           &                      & $X \times W$         &                      & 5.4 / 5.5 / 5.5 & 2.3 / 5.4 / 5.3 & 2.4 / 5.2 / 5.1 \\ 
              &           & No                   & --                     &                      & 5.3 / 5.3 / 5.3 & 4.6 / 4.6 / 4.6 & 5.2 / 5.1 / 5.1 \\ 
              &           &                      & $W$                  &                      & 5.3 / 5.3 / 5.6 & 4.7 / 5.2 / 5.3 & 4.9 / 5.4 / 5.2 \\ 
              &           &                      &                        & \multicolumn{1}{l}{} &                    & \multicolumn{1}{l}{} & \multicolumn{1}{l}{} \\
$H_1$         & 1         & Yes                  & $X$                    &                      & 37.9 / 38.0 / 38.0 & 38.2 / 38.2 / 38.2 & 38.4 / 38.2 / 38.2 \\ 
              &           &                      & $X \times W$         &                      & 37.7 / 37.8 / 54.3 & 36.1 / 53.8 / 53.8 & 36.0 / 53.6 / 53.9 \\ 
              &           & No                   & --                     &                      & 38.4 / 38.5 / 38.5 & 38.0 / 38.1 / 38.1 & 39.1 / 39.1 / 39.1 \\ 
              &           &                      & $W$                  &                      & 39.0 / 39.2 / 54.3 & 38.5 / 41.4 / 54.2 & 38.0 / 40.8 / 54.4 \\ 
              &           &                      &                        & \multicolumn{1}{l}{} &                    & \multicolumn{1}{l}{} & \multicolumn{1}{l}{} \\
              & 2         & Yes                  & $X$                    &                      &  47.4 / 47.5 / 47.5 & 48.9 / 49.0 / 49.0 & 48.3 / 48.3 / 48.3 \\ 
              &           &                      & $X \times W$         &                      & 48.1 / 48.2 / 69.1 & 47.5 / 68.5 / 68.8 & 47.2 / 68.8 / 68.8 \\ 
              &           & No                   & --                     &                      & 48.8 / 48.8 / 48.8 & 48.5 / 48.6 / 48.6 & 47.5 / 47.4 / 47.4 \\ 
              &           &                      & $W$                  &                      & 48.4 / 48.4 / 69.5 & 48.4 / 51.7 / 68.7 & 48.6 / 51.7 / 68.8 \\ 
              &           &                      &                        & \multicolumn{1}{l}{} &                    & \multicolumn{1}{l}{} & \multicolumn{1}{l}{} \\
              & 3         & Yes                  & $X$                    &                      & 33.5 / 33.6 / 33.6 & 33.7 / 33.7 / 33.7 & 34.1 / 34.0 / 34.0 \\ 
              &           &                      & $X \times W$         &                      & 33.2 / 33.3 / 43.2 & 31.2 / 43.6 / 43.5 & 31.6 / 43.6 / 43.5 \\ 
              &           & No                   & --                     &                      & 33.7 / 33.8 / 33.8 & 33.4 / 33.5 / 33.5 & 34.0 / 34.0 / 34.0 \\ 
              &           &                      & $W$                  &                      & 33.1 / 33.2 / 43.6 & 32.0 / 33.6 / 43.0 & 33.8 / 35.2 / 43.7 \\
              \hline
\end{tabular}}
\begin{tablenotes} 
\small
\item Notes: $H_0$ and $H_1$ denote the null and alternative conditions, respectively. $X$ stratified indicates whether $X$ is used for stratification. $S(\cdot)$ are the covariates used for stratification. $X \times W$ indicates that both $X$ and $W$ are used to form the strata. -- indicates that no covariate is used to form the strata. SR, simple randomization; SBR, stratified block randomization; SBCD, stratified biased-coin design. The three numbers in each column correspond to the rejection probabilities for the usual test, modified test, and stratified-adjusted test.
\end{tablenotes}
\label{tab:simulation_unequal}
\end{table}

\section{{Extension to categorical covariates}} \label{section:multi}
\subsection{Interaction term Wald test}
In Sections \ref{section:binary_used} and \ref{section:binary_add}, we assume that covariate $X_i$ is binary. As discussed in this section, we extend the previous results to a categorical $X_i$. Details of the simulations are presented in the Appendix. 

Suppose that $X_i$ is a $(K+1)$-level categorical variable with $\mathcal{X} = \{0,1,\ldots, K\}$. Then, we can formulate the null hypothesis of no interaction as
\begin{equation}
	\begin{aligned}
		H_0: \Econd{\Yt - \Yc}{X_i = k} = \Econd{\Yt - \Yc}{X_i = 0}, \text{ for } k = 1, 2, \ldots, K.
	\end{aligned}
	\label{null:multi}
\end{equation}
In this section, we directly study the general case in which Assumption \ref{assumption:stratify} is not implemented.	

Consider the linear model
\begin{equation}
	\begin{aligned}
		Y_i = \alpha + \beta A_i + \sum_{k=1}^{K}\gamma_k X_{ki} +  \sum_{k=1}^{K}\delta_k A_{i} X_{ki}  + \epsilon_i,
	\end{aligned}
	\label{eq:multi_model_1}
\end{equation}
where $X_{ki}$ is the $k \text{th} $ dummy variable for $X_i$. The OLS estimator of $\delta_k$ is $\hat{\delta}_k = \Yax{1}{k} - \Yax{0}{k} - \Yax{1}{0} + \Yax{0}{0}$ for $k = 1, 2, \ldots, K$. The heteroscedasticity-robust Wald statistic for testing $\delta = (\delta_1, \delta_2, \dots,   \delta_k)^{\mathrm{T}} =  0$ can be expressed as 
\begin{equation*}
	\begin{aligned}
		W_n^{\text{Wald}}(V^{(n)}) = n(	{R}\hat{{\tau}})^{\mathrm{T}}(	{R}\hat{\Sigma}_{hc}	{R}^{\mathrm{T}})^{-1}(	{R}\hat{{\tau}}),
	\end{aligned}
\end{equation*}
where 
\begin{equation}
		{R}= \begin{pmatrix}
			-1 & 1 & 0&\cdots &0\\
			-1 & 0& 1 & &0 \\
			\vdots & \vdots& &\ddots &\vdots \\
			-1 & 0& \cdots& 0  & 1 \\
		\end{pmatrix}_{K \times (K+1)}, \quad
        \hat{{\tau}} =  {\begin{pmatrix}
		\Yax{1}{0} - \Yax{0}{0}\\
		\Yax{1}{1} - \Yax{0}{1}\\
		\vdots \\
		\Yax{1}{K} - \Yax{0}{K}
		\end{pmatrix}_{(K+1) \times 1}},
	\label{eq:linear_form}
\end{equation}
and 
\begin{equation*}
	\begin{aligned}
		\hat{\Sigma}_{hc} =  \operatorname{diag}\left(\frac{n\hatsigmaSqax{1}{x}}{\nax{1}{x}} + \frac{n\hatsigmaSqax{0}{x}}{\nax{0}{x}}: x \in \mathcal{X}\right).
	\end{aligned}
\end{equation*}
The corresponding interaction term Wald test can be formulated as
\begin{equation}
	\begin{aligned}
		\phi_n^{ \text{Wald}}(V^{(n)}) = I\{W_n^{ \text{Wald}}(V^{(n)}) > \chi^2_{K,1-\alpha}\},
	\end{aligned}
	\label{test:heter_wald}
\end{equation}
where $\chi^2_{K,1-\alpha}$ is the $1-\alpha$ quantile of the $\chi^2$ random variable with $K$ degrees of freedom.

\begin{remark}
	Instead of using dummy variables, we can equivalently formulate linear model (\ref{eq:multi_model_1}) as 
	\begin{equation*}
		\begin{aligned}
			Y_i =  \sum_{k=0}^{K}\beta_k \ind{X_i = k} +  \sum_{k=0}^{K}\tau_k A_{i} \ind{X_i = k}  + \epsilon_i.
		\end{aligned}
	\end{equation*}
	The OLS estimator of ${\tau} =  (\tau_0, \tau_1, \dots, \tau_K)^{\mathrm{T}}$ is the same as that in (\ref{eq:linear_form}).
\end{remark}

\begin{theorem}
	Under Assumptions  \ref{assumption:nondegenerate}, \ref{assumption:cond ind}, and \ref{assumption:normal},
	\begin{equation}
		\begin{aligned}
			{\sqrt{n}} {(\hat{\tau} - \tau)}
			\overset{d}{\rightarrow}
			N\left({\begin{pmatrix}
					0 \\
					0 \\
					\vdots\\
					0
			\end{pmatrix}}, 
			{\begin{pmatrix}
					w^2_{0} & w_{10} & \cdots & w_{K0}\\
					w_{10} &  w^2_{1}& \cdots &w_{K1} \\
					\vdots & \vdots& \ddots &\vdots \\
					w_{K0} & w_{K1} & \cdots &w^2_{K}
			\end{pmatrix}}\right), 
		\end{aligned}
		\label{eq:wald_add_cov}
	\end{equation}
	where $w^2_{x} = (1/\px{x}^2)\{\zetaAddY{x}+ \zetaAddA{x}+ \zetaAddH{x}\}$ and  $w_{xy} =(\px{x}\px{y})^{-1} \{\zetaAddYxy{x}{y}+ \zetaAddAxy{x}{y} + \zetaAddHxy{x}{y}\} $, $x, y \in \mathcal{X}$.
	\label{thm:wald_add}
\end{theorem}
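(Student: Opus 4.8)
The plan is to treat $\sqrt{n}(\hat\tau-\tau)$ as a vector of difference-in-means statistics indexed by the levels $x\in\mathcal{X}$, and to note that the claim is the $(K+1)$-level generalization of Theorem \ref{thm:t_add}, which is its restriction to $\mathcal{X}=\{0,1\}$. Joint asymptotic normality I would obtain via the Cram\'er--Wold device: for an arbitrary fixed $c=(c_0,\dots,c_K)^{\mathrm{T}}$, the scalar $\sum_{x\in\mathcal{X}}c_x\sqrt{n}(\Yax{1}{x}-\Yax{0}{x}-\tau_x)$ is again a difference-in-means-type linear statistic, so establishing that it converges to $N(0,c^{\mathrm{T}}\Sigma c)$ for every $c$ yields the stated multivariate limit with covariance $\Sigma$.

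For a single coordinate I would first invoke $\nax{1}{x}/n\inp\pi\px{x}$ and $\nax{0}{x}/n\inp(1-\pi)\px{x}$ (consequences of Assumptions \ref{assumption:cond ind} and \ref{assumption:normal} with the law of large numbers) to linearize, obtaining up to $o_P(1)$
\begin{equation*}
\sqrt{n}(\Yax{1}{x}-\Yax{0}{x}-\tau_x)=\frac{1}{\px{x}}\Big[\sum_{i=1}^n\frac{r_{x,i}(1)-r_{x,i}(0)}{\sqrt{n}}+\sum_{i=1}^n\frac{A_i-\pi}{\sqrt{n}}\Big(\frac{r_{x,i}(1)}{\pi}+\frac{r_{x,i}(0)}{1-\pi}\Big)\Big].
\end{equation*}
Splitting $r_{x,i}(a)=\tilde r_{x,i}(a)+\Econd{r_{x,i}(a)}{S_i}$ and using $\sum_{i:S_i=s}(A_i-\pi)=\Dns$, each coordinate becomes a sum of three pieces, exactly as in the proof of Theorem \ref{thm:t_add}: (i) a within-stratum term $n^{-1/2}\sum_i\{A_i\tilde r_{x,i}(1)/\pi-(1-A_i)\tilde r_{x,i}(0)/(1-\pi)\}$; (ii) a between-stratum term $n^{-1/2}\sum_i\{\Econd{r_{x,i}(1)}{S_i}-\Econd{r_{x,i}(0)}{S_i}\}$; and (iii) an imbalance term $n^{-1/2}\sum_{s\in\mathcal{S}}\Dns\{\Econd{r_{x,i}(1)}{S_i=s}/\pi+\Econd{r_{x,i}(0)}{S_i=s}/(1-\pi)\}$. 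By Assumption \ref{assumption:cond ind} these pieces are asymptotically mutually independent, and their limiting variances are $\zetaAddY{x}$, $\zetaAddH{x}$, and $\zetaAddA{x}$: piece (i) through a conditional Lindeberg CLT given $S^{(n)}$, piece (ii) through an i.i.d.\ central limit theorem, and piece (iii) through Assumption \ref{assumption:normal}, which supplies the normal limit of $\{\Dns/\sqrt{n}\}_{s\in\mathcal{S}}$ with variances $p(s)q(s)$; the finite second moments in Assumption \ref{assumption:nondegenerate} license these CLTs. Summing yields the diagonal entries $w^2_x=(1/\px{x}^2)\{\zetaAddY{x}+\zetaAddA{x}+\zetaAddH{x}\}$.

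The new content relative to Theorem \ref{thm:t_add} is the off-diagonal bookkeeping, which is where I expect the main care. For $x\neq y$ one has $r_{x,i}(a)r_{y,i}(a)=0$ pointwise, yet the within-stratum pieces remain correlated through their shared stratum means, with $\operatorname{cov}\{\tilde r_{x,i}(a),\tilde r_{y,i}(a)\}=-\E{\Econd{\raxZ{a}{x}}{S_i}\Econd{\raxZ{a}{y}}{S_i}}$, so piece (i) contributes the cross term $\zetaAddYxy{x}{y}$; piece (ii) contributes $\zetaAddHxy{x}{y}$; and the shared imbalances $\Dns$ couple the two instances of piece (iii) to give $\zetaAddAxy{x}{y}$, using the diagonal form of $\Sigma_D$. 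Assembling these produces $w_{xy}=(\px{x}\px{y})^{-1}\{\zetaAddYxy{x}{y}+\zetaAddAxy{x}{y}+\zetaAddHxy{x}{y}\}$. The delicate step throughout is confirming that the asymptotic mutual independence of the three component types---particularly between the within-stratum piece (i) and the imbalance piece (iii), both driven by $A^{(n)}$---persists jointly across all $K+1$ levels; once the scalar combination is shown normal for every $c$, the Cram\'er--Wold device delivers \eqref{eq:wald_add_cov}.
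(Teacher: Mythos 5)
Your proposal matches the paper's argument essentially step for step: the paper proves this theorem by generalizing Theorem \ref{thm:t_add}, whose proof uses exactly your Cram\'er--Wold reduction and the same three-piece decomposition of the transformed outcomes $r_{x,i}(a)$ into the within-stratum term $E_{n,r_x}$, the imbalance term $U_{n,r_x}$, and the stratum-proportion term $S_{n,r_x}$ (Lemmas 5--6 in the Appendix), with the off-diagonal entries arising from $\operatorname{cov}\{\tilde r_{x,i}(a),\tilde r_{y,i}(a)\}=-\E{\Econd{\raxZ{a}{x}}{S_i}\Econd{\raxZ{a}{y}}{S_i}}$ just as you compute. The one step you defer --- joint asymptotic independence of the within-stratum and imbalance pieces, both driven by $A^{(n)}$ --- is precisely what the paper's Lemma 6 supplies via the Bugni-style coupling to i.i.d.\ within-stratum copies rather than a conditional Lindeberg argument, so your sketch is correct but would need that (or an equivalent conditioning argument given $(A^{(n)},S^{(n)})$) to be complete.
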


The subsequent theorem shows that the usual interaction term Wald test is conservative.

\begin{theorem}
	Under Assumptions  \ref{assumption:nondegenerate}, \ref{assumption:cond ind}, and \ref{assumption:normal},
	for testing  (\ref{null:multi}) at level $\alpha \in (0,1)$ when the null hypothesis is true,
	$$\lim_{n \rightarrow \infty}\E{\phi_n^{\text{Wald}}(V^{(n)})} \leq \alpha,$$ 
	and the inequality is strict unless 
	\begin{equation*}
		\begin{aligned}
			\{\pi(1-\pi)-q(s)\}&\Big(\frac{1}{\px{x}}\Big[\frac{1}{\pi} \Econd{\raxZ{1}{x}}{S_i=s} + \frac{1}{1-\pi} \Econd{\raxZ{0}{x}}{S_i=s}\Big]\\ 
            &\quad - \frac{1}{\px{0}}\Big[\frac{1}{\pi} \Econd{\raxZ{1}{0}}{S_i=s} + \frac{1}{1-\pi} \Econd{\raxZ{0}{0}}{S_i=s}\Big]\Big) = 0
		\end{aligned}
	\end{equation*}
	for all $x \in \mathcal{X}$, $s \in \mathcal{S}$.
	\label{thm:chi_test}
\end{theorem}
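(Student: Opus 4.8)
Write $\Sigma$ for the limiting covariance matrix of $\sqrt{n}(\hat\tau-\tau)$ supplied by Theorem \ref{thm:wald_add}, and set $\Omega = R\Sigma R^{\mathrm{T}}$. The plan is to reduce the claim to a Loewner ordering of two covariance matrices. Under the null hypothesis (\ref{null:multi}) we have $R\tau = 0$, so $R\hat\tau = R(\hat\tau-\tau)$ and Theorem \ref{thm:wald_add} gives $\sqrt{n}\,R\hat\tau \indist N(0,\Omega)$. For the denominator, I would first show that $\hat\Sigma_{hc}$ converges in probability to the diagonal matrix $\Sigma_{hc} = \diag{\px{x}^{-1}(\sigmaSqax{1}{x}/\pi + \sigmaSqax{0}{x}/(1-\pi)): x \in \mathcal{X}}$, using $\nax{1}{x}/n \inp \pi\px{x}$, $\nax{0}{x}/n \inp (1-\pi)\px{x}$, and $\hatsigmaSqax{a}{x}\inp\sigmaSqax{a}{x}$; these follow from Assumptions \ref{assumption:cond ind} and \ref{assumption:normal} by the same arguments used in the proof of Theorem \ref{thm:wald_add}. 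By Assumption \ref{assumption:nondegenerate} the entries of $\Sigma_{hc}$ are positive, so $\Omega_{hc} := R\Sigma_{hc}R^{\mathrm{T}}$ is invertible, and Slutsky's theorem together with the continuous mapping theorem yields $W_n^{\text{Wald}} \indist Z^{\mathrm{T}}\Omega_{hc}^{-1}Z$ with $Z\sim N(0,\Omega)$.

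The heart of the argument is to prove $\Sigma \preceq \Sigma_{hc}$. Writing $P = \diag{\px{x}: x\in\mathcal{X}}$, both matrices factor as $\Sigma = P^{-1}\Xi P^{-1}$ and $\Sigma_{hc} = P^{-1}\Lambda P^{-1}$, where $\Xi$ collects the $\varsigma$-terms of Theorem \ref{thm:wald_add} and $\Lambda$ is diagonal with entries $\E{r_{x,i}(1)^2}/\pi + \E{r_{x,i}(0)^2}/(1-\pi)$, so it suffices to show $\Lambda-\Xi\succeq 0$. Using $\tilde r_{x,i}(a) = r_{x,i}(a) - \Econd{r_{x,i}(a)}{S_i}$, the decomposition $\Var{r_{x,i}(a)} = \Var{\tilde r_{x,i}(a)} + E[(\Econd{\raxZ{a}{x}}{S_i})^2]$, and $\ind{X_i=x}\ind{X_i=y}=0$ for $x\neq y$, I would verify the single closed form, valid for all $x,y\in\mathcal{X}$,
\begin{equation*}
(\Lambda-\Xi)_{xy} = E\Big[\{\pi(1-\pi)-q(S_i)\}\psi_x(S_i)\psi_y(S_i)\Big], \quad \psi_x(s) = \frac{\Econd{\raxZ{1}{x}}{S_i=s}}{\pi} + \frac{\Econd{\raxZ{0}{x}}{S_i=s}}{1-\pi}.
\end{equation*}
The computation behind this identity is the pointwise algebraic fact $a^2/\pi + b^2/(1-\pi) - q(a/\pi+b/(1-\pi))^2 - (a-b)^2 = \{\pi(1-\pi)-q\}(a/\pi+b/(1-\pi))^2$, which I expect to be the main obstacle, since it requires patiently expanding and cancelling the within-stratum, between-stratum, and imbalance contributions (the diagonal and off-diagonal cases collapse into one formula). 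Because $q(s)\le\pi(1-\pi)$ by Assumption \ref{assumption:normal}, this exhibits $\Lambda-\Xi = \sum_{s}p(s)\{\pi(1-\pi)-q(s)\}\psi(s)\psi(s)^{\mathrm{T}}$ as a nonnegative combination of rank-one matrices, hence positive semidefinite, so $\Omega\preceq\Omega_{hc}$.

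Conservativeness then follows from stochastic domination. Since $\Theta := \Omega_{hc}^{-1/2}\Omega\,\Omega_{hc}^{-1/2}\preceq I$, its eigenvalues $\lambda_1,\dots,\lambda_K$ lie in $[0,1]$ and $Z^{\mathrm{T}}\Omega_{hc}^{-1}Z \equaldist \sum_{j=1}^K\lambda_j\chi^2_{1,j}$ for independent $\chi^2_{1,j}$, a variable stochastically dominated by $\chi^2_K$. Evaluating the tail at $\chi^2_{K,1-\alpha}$ gives $\lim_{n}\E{\phi_n^{\text{Wald}}} = \Prob{\sum_j\lambda_j\chi^2_{1,j} > \chi^2_{K,1-\alpha}} \le \alpha$.

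For the strictness claim, equality forces $\lambda_j=1$ for every $j$, i.e. $\Theta = I$, equivalently $\Omega = \Omega_{hc}$, i.e. $R(\Sigma_{hc}-\Sigma)R^{\mathrm{T}} = 0$. Writing $\tilde R = RP^{-1}$ and invoking the rank-one decomposition, $\tilde R(\Lambda-\Xi)\tilde R^{\mathrm{T}} = \sum_s p(s)\{\pi(1-\pi)-q(s)\}(\tilde R\psi(s))(\tilde R\psi(s))^{\mathrm{T}} = 0$ holds if and only if $\tilde R\psi(s) = 0$ for every $s$ with $\pi(1-\pi)-q(s)>0$ (recall $p(s)>0$). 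Since the $k$th coordinate of $\tilde R\psi(s)$ equals $\psi_k(s)/\px{k} - \psi_0(s)/\px{0}$, this unwinds precisely to the displayed condition holding for all $x\in\mathcal{X}$ and $s\in\mathcal{S}$, which completes the plan.
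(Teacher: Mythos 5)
Your proposal is correct and follows essentially the same route as the paper: the core of both arguments is the rank-one decomposition $\Sigma_{hc}-\Sigma=\sum_{s}p(s)\{\pi(1-\pi)-q(s)\}\psi(s)\psi(s)^{\mathrm{T}}$ (the paper's equation for ${\Sigma}_{hc}-\Sigma$ in its proof of this theorem), from which $R\Sigma_{hc}R^{\mathrm{T}}\succeq R\Sigma R^{\mathrm{T}}$ and the strictness characterization follow identically. The only difference is that you spell out the final stochastic-domination step (diagonalizing $\Omega_{hc}^{-1/2}\Omega\,\Omega_{hc}^{-1/2}$ and comparing $\sum_j\lambda_j\chi^2_{1,j}$ with $\chi^2_K$), which the paper leaves implicit under ``elementary properties of the multivariate normal distribution''; this is a welcome but not substantively different addition.
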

We can modify the usual interaction term Wald test and obtain an valid test based on Theorem \ref{thm:wald_add}. Let 
\begin{equation*}
	\begin{aligned}
		\hat{\Sigma}_{mod} &= \begin{pmatrix}
			\hat w^2_{0} & \hat w_{10} & \cdots & \hat w_{K0}\\
			\hat w_{10}& \hat w^2_{1}& \cdots & \hat w_{K1} \\
			\vdots & \vdots& \ddots &\vdots \\
			\hat w_{K0} & \hat w_{K1} & \cdots & \hat w^2_{K}
		\end{pmatrix}\\
	\end{aligned}
\end{equation*}
where $\hat w^2_{x} = (1/\hatpx{x}^2)\{\hatzetaAddY{x}+ \hatzetaAddA{x}+ \hatzetaAddH{x}\}$ and  $\hat w_{xy} =(\hatpx{x}\hatpx{y})^{-1} \{\hatzetaAddYxy{x}{y}+ \hatzetaAddAxy{x}{y} + \hatzetaAddHxy{x}{y}\} $, $x, y \in \mathcal{X}$. The modified interaction term Wald statistic can be written as 
\begin{equation}
	\begin{aligned}
		W_n^{\text{Wald,mod}}(V^{(n)}) = n(	{R}\hat{\tau})^{\mathrm{T}}({R}\hat{\Sigma}_{mod}	{R}^{\mathrm{T}})^{-1}(	{R}\hat{\tau}),
	\end{aligned}
	\label{eq:wald_statistic_mod}
\end{equation}
where $R$ is defined as in (\ref{eq:linear_form}). The modified interaction term Wald test can be formulated as
\begin{equation}
	\begin{aligned}
		\phi_n^{\text{Wald,mod}}(V^{(n)}) = I\{W_n^{ \text{Wald,mod}}(V^{(n)}) > \chi^2_{K,1-\alpha}\}.
	\end{aligned}
	\label{eq:wald_test_modified}
\end{equation}

\begin{theorem}
	\label{thm:chi_test_mod}
	Under Assumptions  \ref{assumption:nondegenerate}, \ref{assumption:cond ind}, and \ref{assumption:normal}, for testing (\ref{null:multi}) at level $\alpha \in (0,1)$ when the null hypothesis is true 
	$$\lim_{n \rightarrow \infty}\E{\phi_n^{\text{Wald,mod}}(V^{(n)})} = \alpha.$$ 
\end{theorem}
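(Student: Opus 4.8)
The plan is to reduce the claim to the standard fact that a Gaussian quadratic form $Z^{\mathrm{T}}V^{-1}Z$ with $Z \sim N(0, V)$ and $V$ nonsingular of size $K$ is distributed as $\chi^2_K$, after supplying two ingredients: the joint asymptotic normality already furnished by Theorem \ref{thm:wald_add}, and the consistency of the plug-in covariance estimator $\hat\Sigma_{mod}$. Under the null (\ref{null:multi}) all contrasts vanish, so $R\tau = 0$, and Theorem \ref{thm:wald_add} together with the continuous mapping theorem gives
\begin{equation*}
\sqrt{n}\,R\hat\tau = \sqrt{n}\,R(\hat\tau - \tau) \indist N(0, R\Sigma R^{\mathrm{T}}),
\end{equation*}
where $\Sigma$ denotes the limiting covariance matrix displayed in (\ref{eq:wald_add_cov}) and $R$ is as in (\ref{eq:linear_form}).

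The core of the argument is to show $\hat\Sigma_{mod} \inp \Sigma$ entrywise. This amounts to proving $\hatpx{x} \inp \px{x}$ together with the consistency of each variance component, $\hatzetaAddY{x} \inp \zetaAddY{x}$, $\hatzetaAddA{x} \inp \zetaAddA{x}$, $\hatzetaAddH{x} \inp \zetaAddH{x}$, and the three cross-term analogues built from $\hatzetaAddYxy{x}{y}$, $\hatzetaAddAxy{x}{y}$, and $\hatzetaAddHxy{x}{y}$. First I would establish $\hatpx{x} \inp \px{x}$ and the consistency of the stratum-wise cell sizes and difference-in-means quantities by the law of large numbers. The within-stratum and heterogeneity pieces $\hatzetaAddY{x}$ and $\hatzetaAddH{x}$ (and their cross versions) then follow from weak laws applied within each $(s,x)$ cell, with Assumption \ref{assumption:nondegenerate} guaranteeing that the relevant cell proportions stay bounded away from zero in the limit.

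The main obstacle is the consistency of the randomization-dependent components $\hatzetaAddA{x}$ and $\hatzetaAddAxy{x}{y}$, since these involve both the stratum imbalance factor $q(S_i)$ and the conditional expectations $\Econd{\raxZ{a}{x}}{S_i}$, all estimated from data collected under the dependent assignment mechanism. This is precisely the difficulty already resolved in the binary case, Theorem \ref{thm:t_add_mod} (the instance $K=1$); the same stratum-by-stratum arguments, valid under Assumptions \ref{assumption:nondegenerate}, \ref{assumption:cond ind}, and \ref{assumption:normal}, apply componentwise to every entry of $\hat\Sigma_{mod}$, so I would invoke that machinery rather than repeat it.

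Finally I would verify that $R\Sigma R^{\mathrm{T}}$ is positive definite; equivalently, that $\Sigma$ is nondegenerate on the contrast subspace $\{v : \sum_{x \in \mathcal{X}} v_x = 0\} = \operatorname{range}(R^{\mathrm{T}})$, which I would obtain from the within-stratum variance contribution being strictly positive under Assumption \ref{assumption:nondegenerate}. Given this, $R\hat\Sigma_{mod}R^{\mathrm{T}}$ is invertible with probability tending to one and $(R\hat\Sigma_{mod}R^{\mathrm{T}})^{-1} \inp (R\Sigma R^{\mathrm{T}})^{-1}$, so Slutsky's theorem yields
\begin{equation*}
W_n^{\text{Wald,mod}}(V^{(n)}) = n(R\hat\tau)^{\mathrm{T}}(R\hat\Sigma_{mod}R^{\mathrm{T}})^{-1}(R\hat\tau) \indist Z^{\mathrm{T}}(R\Sigma R^{\mathrm{T}})^{-1}Z \sim \chi^2_K,
\end{equation*}
with $Z \sim N(0, R\Sigma R^{\mathrm{T}})$. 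The limiting rejection probability is therefore $\ProbB{\chi^2_K > \chi^2_{K,1-\alpha}} = \alpha$, as claimed.
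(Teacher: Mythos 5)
Your proposal is correct and follows essentially the same route as the paper: the paper's proof of this theorem simply invokes the convergence $W_n^{\text{Wald,mod}}(V^{(n)}) \indist \chi^2_K$ established in the proof of Theorem \ref{thm:chi_test}, which in turn rests on Theorem \ref{thm:wald_add}, the consistency $\hat{\Sigma}_{mod} \inp \Sigma$ obtained by generalizing the proof of Theorem \ref{thm:t_add_mod}, and Slutsky's theorem. Your only addition is the explicit check that $R\Sigma R^{\mathrm{T}}$ is positive definite, a point the paper leaves implicit under the phrase ``elementary properties of the multivariate normal distribution.''
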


Although the modified interaction term Wald test is valid, it has two apparent drawbacks. First, to estimate the covariance matrix in (\ref{eq:wald_add_cov}), it is necessary to estimate $(K^2 + K)/{2}$ pairwise covariances. Second, the calculation of the modified Wald statistic (\ref{eq:wald_statistic_mod}) involves inverting a nonsparse matrix ${R}\hat{\Sigma}_{mod}{R}^{\mathrm{T}}$. These calculations are cumbersome and unstable, particularly in scenarios in which $K$ is large.

These drawbacks are attributable to the complex asymptotic covariance structure of $\hat{\tau}$. To address these issues, we can use the same technique as that used in Section \ref{section:binary_add}. We construct the stratified-adjusted estimator $\hat{{\tau}}_{strat}$ that possesses a simpler asymptotic covariance structure. Consequently, we can formulate a simpler test based on $\hat{{\tau}}_{strat}$. Furthermore, this technique allows us to replace the more stringent Assumption \ref{assumption:normal} with the more relaxed Assumption \ref{assumption:normal_weaker}.

\subsection{Stratified-adjusted interaction iest}
The stratified-adjusted estimator of ${\tau}$ can be expressed as
\begin{equation}
	\begin{aligned}
		\hat{{\tau}}_{strat} =  {\begin{pmatrix}
				\sum_{s \in \mathcal{S}} \{{\nxs{0}{s}}/{\nx{0}}\}\{\hatmuaxs{1}{0}{s} - \hatmuaxs{0}{0}{s}\}\\
				\sum_{s \in \mathcal{S}} \{{\nxs{1}{s}}/{\nx{1}}\}\{\hatmuaxs{1}{1}{s} - \hatmuaxs{0}{1}{s}\}\\
				\vdots \\
				\sum_{s \in \mathcal{S}} \{{\nxs{K}{s}}/{\nx{K}}\}\{\hatmuaxs{1}{K}{s} - \hatmuaxs{0}{K}{s}\}
		\end{pmatrix}},
	\end{aligned}
\end{equation}
and we can derive the following results: 

\begin{theorem}
	Under Assumptions  \ref{assumption:nondegenerate}, \ref{assumption:cond ind}, and \ref{assumption:normal_weaker},
	\begin{equation*}
		\begin{aligned}
			{\sqrt{n}} {(\hat{\tau}_{strat} - \tau)}
			\overset{d}{\rightarrow}
			N\left({\begin{pmatrix}
					0 \\
					0 \\
					\vdots\\
					0
			\end{pmatrix}}, {\begin{pmatrix}
					s^2_{0} & 0 & \cdots &0\\
					0 &s^2_{1}& \cdots &0 \\
					\vdots & \vdots& \ddots &\vdots \\
					0 & 0 & \cdots &s^2_{K}
			\end{pmatrix}}\right), 
		\end{aligned}
	\end{equation*}
	where $s^2_{x} = ({1}/{\px{x}^2})\{\zetaCheckY{x}  + \zetaCheckH{x}\}$, $x \in \mathcal{X}$.
	\label{thm:wald_add_strata}
\end{theorem}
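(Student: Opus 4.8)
The plan is to prove Theorem~\ref{thm:wald_add_strata} by reducing it, one coordinate at a time, to the bivariate statement already established in Theorem~\ref{thm:t_strata_add}, and then assembling the joint limit with the Cram\'er--Wold device. Write $\hat\tau_{strat,x} = \sum_{s \in \mathcal{S}}\{\nxs{x}{s}/\nx{x}\}\{\hatmuaxs{1}{x}{s} - \hatmuaxs{0}{x}{s}\}$ for the $x$-th coordinate of $\hat\tau_{strat}$, and let $\tau_{s,x} = \Econd{\Yt - \Yc}{S_i = s, X_i = x}$. The first step is to record, for each fixed $x \in \mathcal{X}$, the asymptotic linear representation underlying Theorem~\ref{thm:t_strata_add}, namely
\begin{equation*}
\sqrt{n}\{\hat\tau_{strat,x} - \tau_x\} = \frac{1}{\sqrt{n}\,\px{x}}\sum_{i=1}^n \ind{X_i = x}\Big[\frac{A_i}{\pi}\checkYa{1} - \frac{1-A_i}{1-\pi}\checkYa{0} + \tau_{S_i,x} - \tau_x\Big] + o_P(1),
\end{equation*}
where $\checkYa{a} = \Ya - \Econd{\Ya}{S_i, X_i}$. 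Establishing this expansion proceeds exactly as in the binary case: a Taylor-type expansion of $\hatmuaxs{a}{x}{s}$ around the cell means $\Econd{\Ya}{S_i = s, X_i = x}$, the law of large numbers for the cell counts, and the bound $\Dns/\ns{s} = o_P(1)$ from Assumption~\ref{assumption:normal_weaker} to absorb the within-cell imbalance into the remainder. This is the step that lets us weaken Assumption~\ref{assumption:normal} to Assumption~\ref{assumption:normal_weaker}.

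The second step is the joint central limit theorem. Fixing arbitrary constants $c_0,\dots,c_K$, I would study the scalar $\sum_{x \in \mathcal{X}} c_x\sqrt{n}\{\hat\tau_{strat,x} - \tau_x\}$. Conditionally on $S^{(n)}$, the leading term is a sum over a triangular array whose summands have finite second moments by Assumption~\ref{assumption:nondegenerate}; the conditional central limit argument of \cite{Bugni2018} then delivers conditional asymptotic normality almost surely, and passing to the unconditional law through the almost-sure convergence of the conditional distribution gives asymptotic normality of the linear combination. By the Cram\'er--Wold device this establishes joint asymptotic normality of $\sqrt{n}(\hat\tau_{strat} - \tau)$, and the limiting variance of the combination identifies the full covariance matrix.

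The third and decisive step is to show that this covariance matrix is diagonal with entries $s_x^2 = (1/\px{x}^2)\{\zetaCheckY{x} + \zetaCheckH{x}\}$. The diagonal entries reproduce the binary-case variance of Theorem~\ref{thm:t_strata_add} verbatim, with the within-cell term $\zetaCheckY{x}$ arising from the treatment-dependent part and the heterogeneity term $\zetaCheckH{x}$ from the fluctuation of the empirical stratum proportions among level-$x$ units. The off-diagonal entries vanish, which I would argue by combining two facts for $x \neq y$. The same-unit contributions are identically zero because $\ind{X_i = x}\ind{X_i = y} = 0$, so the two coordinates share no summand. For distinct units $i \neq j$, the pure heterogeneity terms carry no treatment indicator and the $W_i$ are unconditionally independent, so they contribute nothing; every remaining cross term carries a treatment indicator, and after conditioning on $S^{(n)}$ the relation $A^{(n)} \Perp W^{(n)} \mid S^{(n)}$ lets the assignment factor out while $\Econd{\checkYa{a}\ind{X_i = x}}{S_i} = 0$ forces the residual expectation to zero.

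The main obstacle is precisely this last step. The terms that carry a treatment indicator are also the only terms through which the within-stratum randomization covariance---and hence the design-dependent quantity $q(s)$---could enter, since two units of different levels lying in the same stratum have correlated assignments under covariate-adaptive randomization. The crux is therefore to verify that the conditional-mean-zero property $\Econd{\checkYa{a}\ind{X_i = x}}{S_i} = 0$ annihilates every such cross term; the same mechanism simultaneously explains why each diagonal variance $s_x^2$ is free of $q(s)$, so that the limit depends only on the data-generating distribution and the target proportion $\pi$.
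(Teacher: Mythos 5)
Your proposal is correct and follows essentially the same route as the paper: the paper proves Theorem~\ref{thm:wald_add_strata} by remarking that the proof of Theorem~\ref{thm:t_strata_add} (together with Lemma~7) generalizes to categorical $X_i$, and that proof uses exactly your decomposition into the cell-centered residuals $\checkYa{a}$ plus the heterogeneity term $\tau_{S_i,x}-\tau_x$, the Cram\'er--Wold device, and the property $\Econd{\checkYa{a}}{S_i,X_i}=0$ to kill the cross-covariances and the dependence on $q(s)$. The only cosmetic difference is that the paper formalizes the conditional CLT step via the coupling/partial-sum construction of \cite{Bugni2018} conditional on $(A^{(n)},S^{(n)},X^{(n)})$ rather than on $S^{(n)}$ alone, which is the rigorous version of the triangular-array argument you sketch.
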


Let
\begin{equation*}
         \hat{\Sigma}_{strat} =\operatorname{diag}\left(\hat s^2_x: x \in \mathcal{X}\right),
\end{equation*}
where $\hat s^2_{x} = ({1}/{\hatpx{x}^2})\{\hatzetaCheckY{x}  + \hatzetaCheckH{x}\}$, $x \in \mathcal{X}$.
The stratified-adjusted Wald statistic can be written as 
\begin{equation*}
	\begin{aligned}
		W_n^{\text{strat}}(V^{(n)}) = n(	{R}\hat{\tau}_{strat})^{\mathrm{T}}({R}\hat{\Sigma}_{strat}	{R}^{\mathrm{T}})^{-1}(	{R}\hat{\tau}_{strat}),
	\end{aligned}
\end{equation*}
and the stratified-adjusted Wald test can be formulated as
\begin{equation}
	\begin{aligned}
		\phi_n^{\text{Wald,strat}}(V^{(n)}) = I\{W_n^{ \text{strat}}(V^{(n)}) > \chi^2_{K,1-\alpha}\},
	\end{aligned}
	\label{eq:wald_test_strata}
\end{equation}

\begin{theorem}
	\label{thm:chi_test_strata}
	Under Assumptions  \ref{assumption:nondegenerate}, \ref{assumption:cond ind}, and \ref{assumption:normal_weaker}, for testing (\ref{null:multi}) at level $\alpha \in (0,1)$ when the null hypothesis is true,
	$$\lim_{n \rightarrow \infty}\E{\phi_n^{\text{Wald,strat}}(V^{(n)})} = \alpha.$$ 
\end{theorem}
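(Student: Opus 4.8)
The plan is to reduce the statement to a standard Wald-test limit argument built on Theorem \ref{thm:wald_add_strata}. First I would use that under the null hypothesis (\ref{null:multi}) the contrast vanishes, $R\tau = 0$, so that $\sqrt{n}\,R\hat{\tau}_{strat} = \sqrt{n}\,R(\hat{\tau}_{strat} - \tau)$. Theorem \ref{thm:wald_add_strata} supplies $\sqrt{n}(\hat{\tau}_{strat} - \tau) \indist N(0, \Sigma_{strat})$ with $\Sigma_{strat} = \diag{s_x^2 : x \in \mathcal{X}}$, and by the linear-image property of the multivariate normal this gives $\sqrt{n}\,R\hat{\tau}_{strat} \indist N(0, R\Sigma_{strat}R^{\mathrm{T}})$.

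Next I would establish consistency of the plug-in covariance estimator, $\hat{\Sigma}_{strat} \inp \Sigma_{strat}$, which reduces to $\hat s_x^2 \inp s_x^2$ for each $x \in \mathcal{X}$. Since $\hat s_x^2 = (1/\hatpx{x}^2)\{\hatzetaCheckY{x} + \hatzetaCheckH{x}\}$, it suffices to show that $\hatpx{x}$, $\hatzetaCheckY{x}$, and $\hatzetaCheckH{x}$ converge in probability to $\px{x}$, $\zetaCheckY{x}$, and $\zetaCheckH{x}$, respectively. Consistency of $\hatpx{x} = \nx{x}/n$ is immediate from the law of large numbers, and the variance components are handled exactly as in the binary additional-covariate case underlying Theorem \ref{thm:add_strata}: one decomposes each cell-level sample moment into a term converging by the law of large numbers within cells $\{S_i = s,\, X_i = x,\, A_i = a\}$ and a randomization-induced term controlled through Assumptions \ref{assumption:cond ind} and \ref{assumption:normal_weaker}, the main device being that $\naxs{a}{x}{s}/\nxs{x}{s} \inp \pi$ or $1-\pi$ because $A^{(n)}$ is conditionally independent of $X^{(n)}$ given $S^{(n)}$ and the within-stratum treated fraction converges to $\pi$ under Assumption \ref{assumption:normal_weaker}.

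With consistency secured, I would verify that the limiting matrix $R\Sigma_{strat}R^{\mathrm{T}}$ is invertible. Assumption \ref{assumption:nondegenerate} forces $\zetaCheckY{x} > 0$, hence $s_x^2 > 0$ for every $x$, so $\Sigma_{strat}$ is positive definite; since $R$ has full row rank $K$, the $K \times K$ matrix $R\Sigma_{strat}R^{\mathrm{T}}$ is positive definite and thus nonsingular. The continuous mapping theorem then yields $(R\hat{\Sigma}_{strat}R^{\mathrm{T}})^{-1} \inp (R\Sigma_{strat}R^{\mathrm{T}})^{-1}$, and combining this with the weak convergence of $\sqrt{n}\,R\hat{\tau}_{strat}$ through Slutsky's theorem and the continuity of the quadratic form gives $W_n^{\text{strat}}(V^{(n)}) \indist \chi^2_K$. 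Finally, because the distribution function of $\chi^2_K$ is continuous at $\chi^2_{K,1-\alpha}$, we conclude $\E{\phi_n^{\text{Wald,strat}}(V^{(n)})} = \Prob{W_n^{\text{strat}}(V^{(n)}) > \chi^2_{K,1-\alpha}} \to \Prob{\chi^2_K > \chi^2_{K,1-\alpha}} = \alpha$.

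The main obstacle is the consistency of the variance-component estimators $\hatzetaCheckY{x}$ and $\hatzetaCheckH{x}$, since covariate-adaptive randomization induces dependence across units and these estimators involve nested conditional expectations given both $S_i$ and $X_i$; one must show that empirical cell means recover $\Econd{\maxZ{a}{x}}{S_i, X_i}$ even though the allocation is balanced rather than independent. This is precisely where Assumptions \ref{assumption:cond ind} and \ref{assumption:normal_weaker} are essential, and the argument mirrors, though it is notationally heavier than, the one already used for Theorem \ref{thm:add_strata}.
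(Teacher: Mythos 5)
Your proposal is correct and follows essentially the same route as the paper: the paper's proof of this theorem simply invokes Theorem \ref{thm:wald_add_strata} together with ``elementary properties of the multivariate normal distribution'' to get $W_n^{\text{strat}}(V^{(n)}) \indist \chi^2_K$, with the consistency $\hat{\Sigma}_{strat} \inp \Sigma_{strat}$ carried over from the (generalized) proof of Theorem \ref{thm:add_strata} exactly as you describe. You merely make explicit two steps the paper leaves implicit — the positive definiteness of $R\Sigma_{strat}R^{\mathrm{T}}$ via Assumption \ref{assumption:nondegenerate} and the full row rank of $R$, and the final passage through the continuity of the $\chi^2_K$ distribution function — which is a faithful filling-in rather than a different argument.
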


Unlike that of $\hat{\tau}$, the asymptotic covariance matrix of $\hat{\tau}_{strat}$ is diagonal. There are two advantages to this configuration. First, $(K^2 + K)/{2}$ pairwise covariances are known to be zero, with only $K+1$ variances left to be estimated. Moreover, $R\hat{\Sigma}_{strat}R^{\mathrm{T}} = \text{diag}(\hat s^2_{1}, \hat s^2_{2}, \ldots, \hat s^2_{K}) + \hat s^2_{0}1_K 1_K^{\mathrm{T}}$, and thus its inverse can be easily calculated using the Sherman--Morrison--Woodbury formula.

\begin{remark}
	Let $\Sigma$ and $\Sigma_{strat}$ denote the asymptotic covariance matrices in Theorems \ref{thm:wald_add} and \ref{thm:wald_add_strata}, respectively. Lemma 8 (Appendix) shows that $\Sigma \succeq \Sigma_{strat}$ ($\Sigma - \Sigma_{strat}$ is positive semi-definite). Then, under the alternative hypothesis, the stratified-adjusted test (\ref{eq:wald_test_strata}) is asymptotically more powerful than the modified  test (\ref{eq:wald_test_modified}).
\end{remark}

\section{{Discussion}} \label{section:discussion}
In this study, we investigated three types of interaction tests under covariate-adaptive randomization. We showed that the usual interaction test is conservative in the sense that its limiting rejection probability can be strictly less than the nominal level. We proposed the modified interaction test and the stratified-adjusted interaction test and proved that both of them are valid. 
Our theoretical framework is flexible enough to accommodate both stratification covariates and additional baseline covariates, which provides practical convenience as the same algorithmic procedure can be used. As practical guidance, we recommend using the stratified-adjusted interaction test because it is simple, more powerful, and applicable to various randomization procedures.


Our findings have broader implications for subgroup analysis, beyond specific tests for interaction. First, our results can be directly applied to subgroup-specific tests. Subgroup analyses that involve factors not stratified at randomization may raise concerns, as there may still be imbalances in prognostic factors and the number of patients between treatment groups within the subgroup \citep{cui2002, grouin2005,sun2011}. 
Stratification during randomization is generally regarded as a means to enhance the reliability of subgroup findings \citep{tanniou2016}. However, \cite{kaiser2013} questioned the necessity of stratified randomization based on subgroup membership. The author demonstrated that, on average, treatment allocation and covariates are balanced within each subgroup, and the variability in covariate imbalance between treatment groups increases only slightly when using a randomization method that is not stratified on the subgroup, compared with a method that is stratified. Our results address this concern, particularly in scenarios with large sample sizes, by showing that the subgroup factor does not need to be used for stratification to obtain valid tests for interaction or subgroup-specific tests. Furthermore, the results presented in Section \ref{section:multi} can be readily extended to test general linear hypotheses, rather than solely focusing on the null hypothesis of no interaction ($R\tau = 0$).
Our work may also provide insights for addressing other challenges in subgroup analysis, such as statistical power limitations and adjustments for multiple comparisons.

This paper focus on testing treatment-covariate interaction of discrete covariates. We leave the use of additional variables to further improve efficiency for future work. Moreover, for continuous covariates, though it is common practice to categorize them in the clinical trial, analyzing continuous covariates on their original scale is sometimes preferable \citep{Royston2013}. It would be of interest to establish valid and efficient model-free interaction analysis of continuous covariates under covariate-adaptive randomization, or more generally, covariate-adjusted response-adaptive (CARA) randomization \citep{zhu2013}. Rather than focusing on the average treatment effect, \cite{zhang2020} explored the estimation and bootstrap inference for the quantile treatment effect, which may offer an alternative approach for studying interaction effects.

\section*{Acknowledgments}
This work was supported by the National Natural Science Foundation of China (12171476).

\bibliographystyle{apalike}
\bibliography{Inter}

\newpage
\appendix
\titleformat{\section}{\normalfont\Large\bfseries}{Appendix \thesection}{1em}{}

\section{Notations used throughout the main text and appendices}
This section provides a comprehensive list of notations used in the main text and those introduced in the appendices, which are utilized in the following proofs.

\renewcommand{\arraystretch}{1.6} 
\begin{longtable}{ @{} l p{14cm} @{} }
		$\nx{x}$ & Number of units with $X_i = x$ \\
		$\nax{a}{x}$ & Number of units with $A_i = a$ and $X_i = x$ \\
		$\ns{s}$ & Number of units in stratum $s \in \mathcal{S}$ \\
		$\nas{a}{s}$ &  Number of units with $A_i = a$ in stratum $s \in \mathcal{S}$ \\
		$\nxs{x}{s}$ & Number of units with $X_i = x$ in stratum $s \in \mathcal{S}$ \\
		$\naxs{a}{x}{s}$ & Number of units with $A_i = a$ and $X_i = x$ in stratum $s \in \mathcal{S}$ \\
		$\px{x}$ & $\Prob{X_i = x}$ \\
		$\ps{s}$ & $\Prob{S_i = s}$ \\
		$\pxs{x}{s}$ & $\Prob{S_i = s, X_i = x}$ \\
		$\muax{a}{x}$ & $\Econd{\Ya}{X_i = x}$ \\
		$\tau_x$ & $\Econd{\Yt - \Yc}{X_i = x}$ \\
		$\sigmaSqax{a}{x}$ & $\Varcond{\Ya}{X_i = x}$ \\
		$\hatpx{x}$ & $\frac{\nx{x}}{n}$ \\
		 $\Yax{a}{x}$ &   $ \frac{1}{\nax{a}{x}}\sum_{i=1}^{n}Y_i\ind{A_i = a, X_i=x}$ \\
		 $\hatsigmaSqax{a}{x}$ &$ \frac{1}{\nax{a}{x}}\sum_{i=1}^n(Y_i - \Yax{a}{x})^2\ind{A_i = a, X_i=x}$\\
		 $\hatmuas{a}{s}$ & $\frac{1}{\nas{a}{s}}\sum_{i=1}^nY_i\ind{A_i=a, S_i=s}$ \\
		  $\hatmuaxs{a}{x}{s}$ & $\frac{1}{\nax{a}{x}(s)}\sum_{i=1}^{n}Y_i\ind{A_i=a, S_i=s,X_i=x}$ \\
		 $\sigma^2_{\tilde{Y}(a)\mid X = x}$ & $\Varcond{\tildeYa{a}}{X_i=x}$\\
		 $\sigma^2_{\tilde{r}_{x}(a)}$ & $\Var{\tilderax{a}{x}}$\\
		 $\sigma^2_{\check{Y}(a)|X = x}$ & $\Varcond{\checkYa{a}}{X_i=x}$\\
		 $ \zetaStrY{x}
		 $ & $\px{x}\big\{\frac{1}{\pi}\sigma^2_{\tilde{Y}(1)|X = x} + \frac{1}{1-\pi}\sigma^2_{\tilde{Y}(0)|X = x} \big\} $\\
		 $\zetaStrA{x} $
		 &  $\sum_{s,X(s)=x}p(s)q(s) \big[\frac{1}{\pi} \Econd{\maxZ{1}{x}}{S_i = s} + \frac{1}{1-\pi} \Econd{\maxZ{0}{x}}{S_i = s}\big]^2$\\
		 $\zetaStrH{x}$ &
		 $\sum_{s,X(s)=x}p(s) \big[\Econd{\maxZ{1}{x}}{S_i = s} -  \Econd{\maxZ{0}{x}}{S_i = s}\big]^2$\\
		 $\hatzetaStrY{x}$ &
		 $\hatpx{x}\big[\frac{1}{\pi}\big\{\frac{1}{\nax{1}{x}}\sum_{i=1}^n Y_i^2A_i \ind{X_i = x} -\sum_{s,X(s)=x}\frac{\ns{s}}{\nx{x}}\hat{\mu}^2_{n,1}( s)\big\}+ \frac{1}{1-\pi}\big\{\frac{1}{\nax{0}{x}}\sum_{i=1}^nY_i^2(1-A_i)\ind{X_i = x} -\sum_{X(s)=x}\frac{\ns{s}}{\nx{x}}\hat{\mu}^2_{n,0}(s)\big\}\big] $\\
		 $\hatzetaStrA{x}$ &
		 $\sum_{s,X(s) = x}\frac{\ns{s}}{n}q(s)\big[\frac{1}{\pi}\big\{\hatmuas{1}{s} - \Yax{1}{x}\big\} + \frac{1}{1-\pi}\big\{\hatmuas{0}{s} - \Yax{0}{x}\big\}\big]^2 $\\
		 $\hatzetaStrH{x}$ &
		 $\sum_{s,X(s) = x}\frac{\ns{s}}{n} \big[\big\{\hatmuas{1}{s} - \Yax{1}{x}\big\} - \big\{\hatmuas{0}{s} - \Yax{0}{x}\big\}\big]^2$\\
		 $\zetaAddY{x}$ & $\frac{1}{\pi}\sigma^2_{\tilde{r}_{x}(1)} + \frac{1}{1-\pi}\sigma^2_{\tilde{r}_{x}(0)} $\\
		 $\zetaAddA{x}$ &
		 $\sum_{s \in \mathcal{S}} p(s)q(s)\big[\frac{1}{\pi} \Econd{\raxZ{1}{x}}{S_i = s} + \frac{1}{1-\pi} \Econd{\raxZ{0}{x}}{S_i = s}\big]^2$\\
		 $ \zetaAddH{x}$ &
		 $\sum_{s \in \mathcal{S}}p(s) \big[ \Econd{\raxZ{1}{x}}{S_i = s} -  \Econd{\raxZ{0}{x}}{S_i = s}\big]^2$ \\
		 $\zetaAddYxy{x}{y}$ &
		$ -\sum_{s \in \mathcal{S}}  p(s)\big[\frac{1}{\pi}\Econd{\raxZ{1}{x}}{S_i = s}\Econd{\raxZ{1}{y}}{S_i = s} + \frac{1}{1-\pi} \Econd{\raxZ{0}{x}}{S_i = s}\Econd{\raxZ{0}{y}}{S_i = s}\big] $\\
		$\zetaAddAxy{x}{y}$ &
		$\sum_{s \in \mathcal{S}}p(s)q(s) \big[\frac{1}{\pi} \Econd{\raxZ{1}{x}}{S_i = s} + \frac{1}{1-\pi} \Econd{\raxZ{0}{x}}{S_i = s}\big]\big[\frac{1}{\pi} \Econd{\raxZ{1}{y}}{S_i = s} + \frac{1}{1-\pi} \Econd{\raxZ{0}{y}}{S_i = s}\big]$ \\
		$\zetaAddHxy{x}{y}$ &
		$\sum_{s \in \mathcal{S}}p(s)\big[\Econd{\raxZ{1}{x}}{S_i = s}- \Econd{\raxZ{0}{x}}{S_i = s}\big]\big[ \Econd{\raxZ{1}{y}}{S_i = s}- \Econd{\raxZ{0}{y}}{S_i = s}\big]$ \\
		$\hatzetaAddY{x} $ &
		$\frac{1}{\pi}\big[\hat{p}_x\frac{1}{\nax{1}{x}}\sum_{i=1}^n(Y_i- \Yax{1}{x})^2A_i \ind{X_i=x} -\sum_{s \in \mathcal{S}}\frac{\ns{s}}{n}\big\{\frac{\nxs{x}{s}}{\ns{s}}\big\}^2\big\{\hatmuaxs{1}{x}{s} - \Yax{1}{x}\big\}^2\big] + \frac{1}{1-\pi}\big[\hat{p}_x\frac{1}{\nax{0}{x}}\sum_{i=1}^n(Y_i- \Yax{0}{x})^2(1-A_i)\ind{X_i=x} -\sum_{s \in \mathcal{S}}\frac{\ns{s}}{n}\big\{\frac{\nxs{x}{s}}{\ns{s}}\big\}^2\big\{\hatmuaxs{0}{x}{s} - \Yax{0}{x}\big\}^2\big]$\\
		$\hatzetaAddA{x}$ & $ \sum_{s \in \mathcal{S}}\frac{\ns{s}}{n}q(s)\big\{\frac{\nxs{x}{s}}{\ns{s}}\big\}^2\big[\frac{1}{\pi}\big\{\hatmuaxs{1}{x}{s} - \Yax{1}{x}\big\} + \frac{1}{1-\pi}\big\{\hatmuaxs{0}{x}{s} - \Yax{0}{x}\big\}\big]^2 $\\
		$\hatzetaAddH{x} $ &
		$\sum_{s \in \mathcal{S}}\frac{\ns{s}}{n}\big\{\frac{\nxs{x}{s}}{\ns{s}}\big\}^2 \big[\big\{\hatmuaxs{1}{x}{s} - \Yax{1}{x}\big\} - \big\{\hatmuaxs{0}{x}{s} - \Yax{0}{x}\big\}\big]^2$ \\
		$\hatzetaAddYxy{x}{y}$ &
		$-\sum_{s \in \mathcal{S}}  \frac{\ns{s}}{n}\frac{\nxs{x}{s}}{\ns{s}}\frac{\nxs{y}{s}}{\ns{s}}\big[\frac{1}{\pi}\big\{\hatmuaxs{1}{x}{s} - \Yax{1}{x}\big\}\big\{\hatmuaxs{1}{y}{s} - \Yax{1}{y}\big\} + \frac{1}{1-\pi} \big\{\hatmuaxs{0}{x}{s} - \Yax{0}{x}\big\}\big\{\hatmuaxs{0}{y}{s} - \Yax{0}{y}\big\}\big] $\\
		$\hatzetaAddAxy{x}{y} $ &
		$\sum_{s \in \mathcal{S}}\frac{\ns{s}}{n}q(s)\frac{\nxs{x}{s}}{\ns{s}}\frac{\nxs{y}{s}}{\ns{s}} \big[\frac{1}{\pi} \big\{\hatmuaxs{1}{x}{s} - \Yax{1}{x}\big\} + \frac{1}{1-\pi} \big\{\hatmuaxs{0}{x}{s} - \Yax{0}{x}\big\}\big]\big[\frac{1}{\pi} \big\{\hatmuaxs{1}{y}{s} - \Yax{1}{y}\big\} + \frac{1}{1-\pi} \big\{\hatmuaxs{0}{y}{s} - \Yax{0}{y}\big\}\big]$\\
		$\hatzetaAddHxy{x}{y} $ &
		$ \sum_{s \in \mathcal{S}}\frac{\ns{s}}{n}\frac{\nxs{x}{s}}{\ns{s}}\frac{\nxs{y}{s}}{\ns{s}}\big[\big\{\hatmuaxs{1}{x}{s} - \Yax{1}{x}\big\}- \big\{\hatmuaxs{0}{x}{s} - \Yax{0}{x}\big\}\big]\big[ \big\{\hatmuaxs{1}{y}{s} - \Yax{1}{y}\big\}- \big\{\hatmuaxs{0}{y}{s} - \Yax{0}{y}\big\}\big]$ \\
		$\zetaCheckY{x} $ & $\px{x}\big\{\frac{1}{\pi}\sigma^2_{\check {Y}(1)|X_i = x} +  \frac{1}{1-\pi}\sigma^2_{\check {Y}(0)|X_i = x}\big\}$\\
		$\zetaCheckH{x}$ &  $\sum_{s \in \mathcal{S}}\pxs{x}{s} \big[\Econd{\maxZ{1}{x}}{S_i = s, X_i=x} - \Econd{\maxZ{0}{x}}{S_i = s, X_i=x}\big]^2$\\
		$\hatzetaCheckY{x} $ &	$\frac{1}{\pi}\big[\hat{p}_x\frac{1}{\nax{1}{x}}\sum_{i=1}^n(Y_i- \Yax{1}{x})^2A_i \ind{X_i=x} -\sum_{s \in \mathcal{S}}\frac{\nxs{x}{s}}{n}\big\{\hatmuaxs{1}{x}{s} - \Yax{1}{x}\big\}^2\big] + \frac{1}{1-\pi}\big[\hat{p}_x\frac{1}{\nax{0}{x}}\sum_{i=1}^n(Y_i- \Yax{0}{x})^2(1-A_i)\ind{X_i=x} -\sum_{s \in \mathcal{S}}\frac{\nxs{x}{s}}{n}\big\{\hatmuaxs{0}{x}{s} - \Yax{0}{x}\big\}^2\big] $ \\
		$\hatzetaCheckH{x} $ & $\sum_{s \in \mathcal{S}}\frac{\nxs{x}{s}}{n} \big[\big\{\hatmuaxs{1}{x}{s} - \Yax{1}{x}\big\} - \big\{\hatmuaxs{0}{x}{s} - \Yax{0}{x}\big\}\big]^2$\\
\end{longtable}

\section{Proof of the main results}
        \subsection{Proof of Theorem 1}
\begin{proof}
     We first note that 
	\begin{equation*}
		\begin{aligned}
			&\quad \sqrt{n} (\Yax{1}{1} - \Yax{0}{1} - \tau_1 ) \\
			&= \sqrt{n}\Big[\frac{1}{\nax{1}{1}}\sum_{i=1}^{n}\{\Yt - \muax{1}{1}\}A_i\ind{X_i = 1} - \frac{1}{\nax{0}{1}}\sum_{i=1}^{n}\{\Yc - \muax{0}{1}\}(1-A_i)\ind{X_i = 1}\Big] \\
			&= \frac{1}{\sqrt{n}}\Big[\Big(\frac{\Dax{1}{1}}{n} + \pi \px{1}\Big)^{-1}\sum_{i=1}^{n}\{\Yt - \muax{1}{1}\}A_i\ind{X_i = 1} \\
            &\quad - \Big(\frac{\Dax{0}{1}}{n} + (1-\pi) \px{1}\Big)^{-1}\sum_{i=1}^{n}\{\Yc - \muax{0}{1}\}(1-A_i)\ind{X_i = 1}\Big] \\
			&= \frac{1}{\sqrt{n}}\Rx{1}\Big[\sum_{i=1}^{n}\Big\{\frac{\Dax{0}{1}}{n} + (1-\pi) \px{1}\Big\}\{\Yt - \muax{1}{1}\}A_i\ind{X_i = 1} \\
            &\quad - \sum_{i=1}^{n}\Big\{\frac{\Dax{1}{1}}{n} + \pi \px{1}\Big\}\{\Yc - \muax{0}{1}\}(1-A_i)\ind{X_i = 1}\Big] \\
			&= \frac{1}{\sqrt{n}}\Rx{1}\Big[\sum_{i=1}^{n}\frac{\Dax{0}{1}}{n}\{\Yt - \muax{1}{1}\}A_i\ind{X_i = 1} - \sum_{i=1}^{n}\frac{\Dax{1}{1}}{n} \{\Yc - \muax{0}{1}\}(1-A_i)\ind{X_i = 1}\Big] \\
			&\quad + \frac{1}{\sqrt{n}}\Rx{1}\Big[\sum_{i=1}^{n}(1-\pi)\px{1}\{\Yt - \muax{1}{1}\}A_i\ind{X_i = 1} - \sum_{i=1}^{n}\pi \px{1}\{\Yc - \muax{0}{1}\}(1-A_i)\ind{X_i = 1}\Big] \\
			&= \frac{1}{\sqrt{n}}\Rx{1} (\Zx{1} + \Mx{1}),
		\end{aligned}
	\end{equation*}
	and similarly, 
	\begin{equation*}
		\begin{aligned}
			\sqrt{n} (\Yax{1}{0} - \Yax{0}{0} - \tau_0 ) = \frac{1}{\sqrt{n}}\Rx{0} (\Zx{0} + \Mx{0}),
		\end{aligned}
	\end{equation*}
	where 
\begin{equation}
	\begin{aligned}
		\Dax{1}{1} &= \nax{1}{1} - n\pi \px{1},  \\
		\Dax{0}{1} &=  \nax{0}{1} - n(1-\pi) \px{1}, \\
		\Dax{1}{0} &= \nax{1}{0} - n\pi \px{0}, \\
		\Dax{0}{0} &=  \nax{0}{0} - n(1-\pi) \px{0}, 
	\end{aligned} 
\label{Dax}
\end{equation}
	and
	\begin{equation}
	\begin{aligned}
		\Rx{1} &= \Big(\frac{\Dax{1}{1}}{n} + \pi \px{1}\Big)^{-1}\Big\{\frac{\Dax{0}{1}}{n} + (1-\pi) \px{1}\Big\}^{-1}, \\
		\Rx{0} &= \Big(\frac{\Dax{1}{0}}{n} + \pi \px{0}\Big)^{-1}\Big\{\frac{\Dax{0}{0}}{n} + (1-\pi) \px{0}\Big\}^{-1}, \\
		\Zx{1} &= \sum_{i=1}^n \Big[\frac{\Dax{0}{1}}{n}\{\Yt - \muax{1}{1}\}A_i\ind{X_i = 1} - \frac{\Dax{1}{1}}{n} \{\Yc - \muax{0}{1}\}(1-A_i)\ind{X_i = 1}\Big], \\
		\Zx{0} &= \sum_{i=1}^{n}\Big[\frac{\Dax{0}{0}}{n} \{\Yt - \muax{1}{0}\}A_i\ind{X_i = 0} - \frac{\Dax{1}{0}}{n} \{\Yc - \muax{0}{0}\}(1-A_i)\ind{X_i = 0}\Big], \\
		\Mx{1} &= \px{1}\sum_{i=1}^{n}\Big[(1-\pi)\{\Yt - \muax{1}{1}\}A_i\ind{X_i = 1} - \pi \{\Yc - \muax{0}{1}\}(1-A_i)\ind{X_i = 1}\Big], \\
		\Mx{0} &= \px{0}\sum_{i=1}^{n}\Big[(1-\pi) \{\Yt - \muax{1}{0}\}A_i\ind{X_i = 0} - \pi \{\Yc - \muax{0}{0}\}(1-A_i)\ind{X_i = 0}\Big]. \\
	\end{aligned}
\label{RZMx}
\end{equation}
	
	By Lemma \ref{lemma:weak_law}, $\Rx{1} \inp  \{{\pi(1-\pi)\px{1}^2}\}^{-1}$, $\Rx{0} \inp \{{\pi(1-\pi)\px{0}^2}\}^{-1}$, and $\Dax{a}{x} = o_P(n)$ for  $a \in \{1,0\}$, $x \in \{1,0\}$. By Lemma \ref{lemma:thm_1_M}, 
	\begin{equation}
		\begin{aligned}
			\frac{1}{\sqrt{n}} {\begin{pmatrix}
					\Mx{1} \\
					\Mx{0} \\
			\end{pmatrix}}
			\indist
			\pi(1-\pi)
			N\left({\begin{pmatrix}
					0 \\
					0 \\
			\end{pmatrix}}, {\begin{pmatrix}
					\px{1}^4 v^2_{1} & 0 \\
					0 & \px{0}^4 v^2_{0} \\
			\end{pmatrix}} \right),
		\end{aligned}
	\end{equation}
 	where $v^2_{0} = (1/p_0^2)\{\zetaStrY{0} + \zetaStrA{0} + \zetaStrH{0}\}$, and $v^2_{1} = (1/p_1^2)\{\zetaStrY{1} + \zetaStrA{1} + \zetaStrH{1}\}$.
	We also have ${\Zx{1}} = o_P(\sqrt{n})$, ${\Zx{0}} = o_P(\sqrt{n})$ by Lemma \ref{lemma:thm_1_M} and Slutsky's theorem; then the desired result follows.
	
\end{proof}

\subsection{Proof of Theorem 2
}
\begin{proof}
	We first prove that
	\begin{equation}
		\begin{aligned}
			n \left(\frac{\hatsigmaSqax{1}{1}}{\nax{1}{1}} + \frac{\hatsigmaSqax{0}{1}}{\nax{0}{1}} + \frac{\hatsigmaSqax{1}{0}}{\nax{1}{0}} + \frac{\hatsigmaSqax{0}{0}}{\nax{0}{0}} \right) \overset{P}\rightarrow \zetatold,
		\end{aligned}
		\label{eq:var_t_old}
	\end{equation}
	where $\zetatold =  \sigmaSqax{1}{1} / ({\px{1}\pi})+  \sigmaSqax{0}{1} / \{\px{1}(1-\pi)\} +  \sigmaSqax{1}{0} / ({\px{0}\pi}) + \sigmaSqax{0}{0}/ \{\px{0}(1-\pi)\}$.
	
	Notice that
	\begin{equation}
		\begin{aligned}
			\Yax{1}{1} = \frac{1}{\nax{1}{1}}\sum_{i=1}^{n} Y_i A_i \ind{X_i = 1} = \muax{1}{1} + \frac{1}{\nax{1}{1}}\sum_{i=1}^{n} (Y_i - \muax{1}{1})A_i\ind{X_i = 1}.
		\end{aligned}
	\end{equation}
	Then

	\begin{equation*}
		\begin{aligned}
			\frac{n\hatsigmaSqax{1}{1}}{\nax{1}{1}} &= \frac{n}{\nax{1}{1}} \frac{1}{\nax{1}{1}}\sum_{i=1}^n(Y_i - \Yax{1}{1})^2A_i\ind{X_i = 1} \\
			&= \frac{n}{\nax{1}{1}} \frac{1}{\nax{1}{1}}\sum_{i=1}^n\{Y_i(1) -\muax{1}{1} - \Yax{1}{1}+\muax{1}{1}\}^2A_i\ind{X_i = 1} \\
			&= \frac{n}{\nax{1}{1}}\Big\{\frac{1}{\nax{1}{1}}\sum_{i=1}^n(\Yt -\muax{1}{1})^2A_i\ind{X_i = 1} - (\Yax{1}{1} -\muax{1}{1})^2\Big\} \\
			&= \Big(\frac{n}{\nax{1}{1}}\Big)^2 \frac{1}{n}\sum_{i=1}^n\{\Yt -\muax{1}{1}\}^2A_i\ind{X_i = 1} - \Big(\frac{n}{\nax{1}{1}}\Big)^3 \Big[\frac{1}{n}\sum_{i=1}^{n} \{\Yt - \muax{1}{1}\}A_i\ind{X_i = 1}\Big]^2\\
			&\inp \frac{1}{\px{1}\pi} \sigmaSqax{1}{1},
		\end{aligned}
	\end{equation*}
where the convergence follows from Lemma \ref{lemma:weak_law} and the continuous mapping theorem. Following the above steps and we can get (\ref{eq:var_t_old}).
	
	We next define $\zetat = v^2_{1} + v^2_{0}$, where $v^2_{0}$ and $v^2_{1}$ are defined as in Theorem 1. We next compare $\zetat$ with $\zetatold$.
 
	Notice that 
	\begin{equation}
		\begin{aligned}
			\sigma^2_{\tilde{Y}(a)\mid X = x}
			&= \Econd{Y^2_i(a)}{X_i=x} - 2E[\Ya \Econd{\Ya}{S_i}\mid X_i=x] \\
            &\qquad + E[\Econd{\Ya}{S_i}^2\mid X_i=x]\\
			&= \sigmaSqax{a}{x} - E[\Econd{\Ya}{S_i}^2 \mid X_i=x] + \mu_{ax}^2\\
			&= \sigmaSqax{a}{x} - \operatorname{var}[\Econd{\Ya}{S_i}\mid X_i=x]\\
			&= \sigmaSqax{a}{x} - E[\Econd{\maxZ{a}{x}}{S_i}^2\mid X_i = x]\\
			&= \sigmaSqax{a}{x}- \frac{1}{\px{x}}\sum_{s,X(s)=x}p(s)\Econd{\maxZ{a}{x}}{S_i=s}^2,
		\end{aligned}
	\label{eq:sigmaSqtildeY}
	\end{equation}
	for $a \in \{1, 0\}$ and $x \in \{1, 0\}$. So we have
  {\allowdisplaybreaks
		\begin{align*}
			&\quad \zetatold - \zetat \\
			&= \frac{1}{\px{1}\pi} \sigmaSqax{1}{1} + \frac{1}{\px{1}(1-\pi)} \sigmaSqax{0}{1} + \frac{1}{\px{0}\pi} \sigmaSqax{1}{0} + \frac{1}{\px{0}(1-\pi)} \sigmaSqax{0}{0} - (v^2_{1} + v^2_{0})\\
			&= \frac{1}{\px{1}\pi} \Big(\sigmaSqax{1}{1} - \sigma^2_{\tilde{Y}(1)|X = 1}\Big) + \frac{1}{\px{1}(1-\pi)} \Big(\sigmaSqax{0}{1}- \sigma^2_{\tilde{Y}(0)|X = 1}\Big) \\
            &\quad+ \frac{1}{\px{0}\pi} \Big(\sigmaSqax{1}{0} - \sigma^2_{\tilde{Y}(1)|X = 0}\Big)
             + \frac{1}{\px{0}(1-\pi)} \Big(\sigmaSqax{0}{0} - \sigma^2_{\tilde{Y}(0)|X = 0}\Big) \\
             &\quad - \Big[\frac{1}{\px{1}^2} \big\{ \zetaStrA{1} + \zetaStrH{1}\big\} + \frac{1}{\px{0}^2}\big\{ \zetaStrA{0} + \zetaStrH{0}\big\}\Big] \\ 
			&= \frac{1}{\px{1}\pi} \Big[\frac{1}{\px{1}}\sum_{s,X(s)=1}p(s)\Econd{\maxZ{1}{1}}{S_i=s}^2\Big] + \frac{1}{\px{1}(1-\pi)} \Big[\frac{1}{\px{1}}\sum_{s,X(s)=1}p(s)\Econd{\maxZ{0}{1}}{S_i=s}^2\Big] \\
			&\quad + \frac{1}{\px{0}\pi} \Big[\frac{1}{\px{0}}\sum_{s,X(s)=0}p(s)\Econd{\maxZ{1}{0}}{S_i=s}^2\Big] + \frac{1}{\px{0}(1-\pi)} \Big[\frac{1}{\px{0}}\sum_{s,X(s)=0}p(s)\Econd{\maxZ{0}{0}}{S_i=s}^2\Big]\\
			&\quad - \frac{1}{\px{1}^2}   \sum_{s,X(s)=1}p(s) \Big[ \Econd{\maxZ{1}{1}}{S_i=s} -  \Econd{\maxZ{0}{1}}{S_i=s}\Big]^2 \\
			&\quad - \frac{1}{\px{0}^2} \sum_{s,X(s)=0}p(s) \Big[\Econd{\maxZ{1}{0}}{S_i=s} -  \Econd{\maxZ{0}{0}}{S_i=s}\Big]^2 \\
			&\quad - \frac{1}{\px{1}^2}  \sum_{s,X(s)=1}p(s)q(s) \Big[\frac{1}{\pi} \Econd{\maxZ{1}{1}}{S_i=s} + \frac{1}{1-\pi} \Econd{\maxZ{0}{1}}{S_i=s}\Big]^2\\
			&\quad - \frac{1}{\px{0}^2}  \sum_{s,X(s)=0}p(s)q(s) \Big[\frac{1}{\pi} \Econd{\maxZ{1}{0}}{S_i=s} + \frac{1}{1-\pi} \Econd{\maxZ{0}{0}}{S_i=s}\Big]^2  \\
			&= \frac{1}{p^2_1} \sum_{s,X(s)=1}p(s)\{\pi(1-\pi) - q(s)\}\Big[\frac{1}{\pi}\Econd{\maxZ{1}{1}}{S_i=s} + \frac{1}{1-\pi}\Econd{\maxZ{0}{1}}{S_i=s}\Big]^2 \\
			&\quad+ \frac{1}{p^2_0} \sum_{s,X(s)=0}p(s)\{\pi(1-\pi) - q(s)\}\Big[\frac{1}{\pi}\Econd{\maxZ{1}{0}}{S_i=s} + \frac{1}{1-\pi}\Econd{\maxZ{0}{0}}{S_i=s}\Big]^2 \\
			& \geq 0.
		\end{align*}
 }
	The inequality is strict unless
	\begin{equation*}
		\begin{aligned}
			\{\pi(1-\pi) - q(s)\}\Big[\frac{1}{\pi}\Econd{\maxZ{1}{1}}{S_i = s} + \frac{1}{1-\pi}\Econd{\maxZ{0}{1}}{S_i = s}\Big] = 0 \\
		\end{aligned}
	\end{equation*}
	for all $s\in \mathcal{S}$ satisfying $X(s) = 1$, and 
	\begin{equation*}
		\begin{aligned}
			\{\pi(1-\pi) - q(s)\}\Big[\frac{1}{\pi}\Econd{\maxZ{1}{0}}{S_i = s} + \frac{1}{1-\pi}\Econd{\maxZ{0}{0}}{S_i = s}\Big] = 0 
		\end{aligned}
	\end{equation*}
	for all $s\in \mathcal{S}$ satisfying $X(s) = 0$.
\end{proof}

\subsection{Proof of Theorem 3
}

\begin{proof}
	We first show $\hatzetaStrY{x} \inp \zetaStrY{x} $.
	For all $a \in \{1, 0\}$, $x \in \{1, 0\}$, and $s \in \mathcal{S}$,
	\begin{equation*}
		\begin{aligned}
			\hatmuas{a}{s} &= \frac{1}{\nas{a}{s}}\sum_{i=1}^n Y_i\ind{A_i=a, S_i=s} \\
			&= \frac{n}{\nas{a}{s}}\frac{1}{n}\sum_{i=1}^n \big[\Ya - \Econd{\Ya}{S_i = s}\big]\ind{A_i=a, S_i=s}  + \Econd{\Ya}{S_i = s}\\
			&\inp \Econd{\Ya}{S_i = s},
		\end{aligned}
	\end{equation*}	
	where the convergence follows from Lemma \ref{lemma:weak_law} and Slutsky's theorem. Moreover, 
	\begin{equation*}
		\begin{aligned}
			&\quad \frac{1}{\nax{a}{x}}\sum_{i=1}^nY_i^2 \ind{A_i=a, X_i=x}\\		
			&= \frac{n}{\nax{a}{x}}\frac{1}{n}\sum_{i=1}^n\big[Y_i^2(a) - \Econd{Y^2_i(a)}{X_i = x}\big]\ind{A_i=a, X_i=x} + \Econd{Y^2_i(a)}{X_i = x}\\
			&\inp \Econd{Y^2_i(a)}{X_i = x}
		\end{aligned}
	\end{equation*}	
	for the same reason. Hence, 
	\begin{equation*}
		\begin{aligned}
			&\quad \frac{1}{\nax{a}{x}}\sum_{i=1}^n Y_i^2 \ind{A_i = a, X_i = x} -\sum_{s,X(s)=x}\frac{n(s)}{n_x}\hat{\mu}^2_{n,a}(s)\\
			&\inp
			\Econd{Y^2_i(a)}{X_i = x} - \sum_{s,X(s)=x} \frac{p(s)}{\px{x}} \Econd{\Ya}{S_i = s}^2\\
			&= \sigma^2_{\tilde{Y}(a)\mid X = x},
		\end{aligned}
	\end{equation*}
	where the last equality has been given in (\ref{eq:sigmaSqtildeY}), and then $\hatzetaStrY{x} \inp \zetaStrY{x}$ follows.
	
	By further noticing $\Yax{a}{x} \inp \mu_{ax}$ for $a \in \{1, 0\}$, $x \in \{1, 0\}$, we can obtain
	$\hatzetaStrA{x} \inp \zetaStrA{x}$ and $\hatzetaStrH{x} \inp \zetaStrH{x}$. Hence, $\hat v^2_{1} + \hat v^2_{0} \inp v^2_{1} + v^2_{0}$, and when the null hypothesis is true,
	\begin{equation*}
		\begin{aligned}
		T_n^{t\text{-test,mod}}(V^{(n)}) &=	\frac{\sqrt{n}(\Yax{1}{1} - \Yax{0}{1} - \Yax{1}{0} + \Yax{0}{0})}{
			(\hat v^2_{1} + \hat v^2_{0})^{1/2}} \\
			&\indist N(0,1)
		\end{aligned}
	\end{equation*}
	by Theorem 1, the continuous mapping theorem, and Slutsky's theorem.
\end{proof}

\subsection{Proof of Theorem 4
}
	\begin{proof}
		First, for each $s \in \mathcal{S}$,
	\begin{equation*}
		\begin{aligned}
			 \hatmuas{1}{s} - \hatmuas{0}{s}
			&= \frac{\sum_{i=1}^{n}\Yt A_i\ind{S_i = s}}{\nas{1}{s}} - \frac{\sum_{i=1}^{n}Y_i(0)(1-A_i)\ind{S_i = s}}{\nas{0}{s}}\\
			&= \frac{\sum_{i=1}^{n}\big[\tildeYa{1} + \Econd{\maxZ{1}{x}}{S_i = s} + \muax{1}{x}\big]A_i\ind{S_i = s}}{\nas{1}{s}} \\
            &\quad - \frac{\sum_{i=1}^{n}\big[\tildeYa{0} + \Econd{\maxZ{0}{x}}{S_i = s} + \muax{0}{x}\big](1-A_i)\ind{S_i = s}}{\nas{0}{s}}\\
			&= \frac{\sum_{i=1}^{n}\tildeYa{1} A_i\ind{S_i = s}}{\nas{1}{s}} - \frac{\sum_{i=1}^{n}\tildeYa{0}(1-A_i)\ind{S_i = s}}{\nas{0}{s}} \\
            &\quad + \Econd{\maxZ{1}{x} - \maxZ{0}{x}}{S_i = s} + \tau_x ,\\
		\end{aligned}
	\end{equation*}
	where $x = X(s)$.
	
	Then, 
	\begin{equation*}
		\begin{aligned}
			&\quad \sqrt{n}\Big[\sum_{s,X(s)=1} \frac{n(s)}{\nx{1}}\big\{\hatmuas{1}{s} - \hatmuas{0}{s}\}  - \tau_1 \Big]\\
			&= \sqrt{n}\sum_{s,X(s)=1} \frac{n(s)}{\nx{1}} \frac{\sum_{i=1}^{n}\tildeYa{1} A_i\ind{S_i = s}}{\nas{1}{s}} - \sqrt{n} \sum_{s,X(s)=1}\frac{n(s)}{\nx{1}} \frac{\sum_{i=1}^{n}\tildeYa{0}(1-A_i)\ind{S_i = s}}{\nas{0}{s}}\\
			&\quad +  \sqrt{n}\sum_{s,X(s)=1} \frac{n(s)}{\nx{1}} \Econd{\maxZ{1}{1} - \maxZ{0}{1}}{S_i = s}\\
			&= \frac{1}{\px{1}}(\Lx{1} + \Sx{1}) + o_P(1),
		\end{aligned}
	\end{equation*}
	and similarly, 
	\begin{equation*}
		\begin{aligned}
			\sqrt{n}\Big[\sum_{s,X(s)=0} \frac{n(s)}{n_0}\{\hatmuas{1}{s} - \hatmuas{0}{s}\}  - \tau_0 \Big]= \frac{1}{\px{0}}(\Lx{0} + \Sx{0}) + o_P(1),
		\end{aligned}
	\end{equation*}
	where for $x \in \{1, 0\}$, 
		\begin{equation}
		\begin{aligned}
		 \Lx{x} &=   \sum_{s,X(s)=x} \frac{n(s)}{\sqrt{n}} \frac{\sum_{i=1}^{n}\tildeYa{1} A_i\ind{S_i = s}}{\nas{1}{s}} -  \sum_{s,X(s)=x}\frac{n(s)}{\sqrt{n}} \frac{\sum_{i=1}^{n}\tildeYa{0}(1-A_i)\ind{S_i = s}}{\nas{0}{s}},		
		\end{aligned}
		\label{Lx} 
		\end{equation}
	and $\Sx{x}$ is defined as in (\ref{EUSx}).
	
	In Lemma \ref{lemma:thm_strata_joint}, we show
	\begin{equation*}
		\begin{aligned}
			{\begin{pmatrix}
					\Lx{1} + \Sx{1} \\
					\Lx{0} + \Sx{0} \\
			\end{pmatrix}}
			\indist
			N\left({\begin{pmatrix}
					0 \\
					0 \\
			\end{pmatrix}}, {\begin{pmatrix}
					p^2_{1}u^2_{1} & 0 \\
					0 & p^2_{0}u^2_{0}\\
			\end{pmatrix}} \right),
		\end{aligned}
	\end{equation*}
	where $u^2_{0} = (1/\px{0}^2)\{\zetaStrY{0} +  \zetaStrH{0}\}$, and $u^2_{1} =(1/\px{1}^2)\{\zetaStrY{1} +  \zetaStrH{1}\}$.
	Then using Slutsky's theorem, we can obtain the desired result.
\end{proof}

\subsection{Proof of Theorem 5
}
\begin{proof}
	Recall the proof of Theorem 3. 
This is a direct result of Theorem 4
and the continuous mapping theorem and  Slutsky's theorem. 
\end{proof}

\subsection{Proof of Theorem 6
}
\begin{proof}
	Repeat the steps in the proof of Theorem 1 
 and we get
	\begin{equation*}
		\begin{aligned}
			\sqrt{n} \Big(\Yax{1}{1} - \Yax{0}{1} - \tau_1  \Big) = \frac{1}{\sqrt{n}}\Rx{1} (\Zx{1} + \Mx{1}),
		\end{aligned}
	\end{equation*}
	and
	\begin{equation*}
		\begin{aligned}
			\sqrt{n} \Big(\Yax{1}{0} - \Yax{0}{0} - \tau_0  \Big) = \frac{1}{\sqrt{n}}\Rx{0} (\Zx{0} + \Mx{0}),
		\end{aligned}
	\end{equation*}
	where $\Rx{x}$, $\Zx{x}$ and $\Mx{x}$, $x \in \{1, 0\}$, are defined as in (\ref{RZMx}). By Lemma \ref{lemma:weak_law}, we have $\Rx{1} \inp  \{{\pi(1-\pi)\px{1}^2}\}^{-1}$, $\Rx{0} \inp \{{\pi(1-\pi)\px{0}^2}\}^{-1}$; by Lemma \ref{lemma:thm_add_M} and Slutsky's theorem, we have ${\Zx{1}} = o_P(\sqrt{n})$, ${\Zx{0}} = o_P(\sqrt{n})$.
	
	In Lemma \ref{lemma:thm_add_M}, we show
	\begin{equation*}
		\begin{aligned}
			\frac{1}{\sqrt{n}} {\begin{pmatrix}
					\Mx{1} \\
					\Mx{0} \\
			\end{pmatrix}}
			\indist
			\pi(1-\pi)
			N\left({\begin{pmatrix}
					0 \\
					0 \\
			\end{pmatrix}}, {\begin{pmatrix}
					\px{1}^4 w^2_{1} & \px{1}^2\px{0}^2w_{10} \\
					\px{1}^2\px{0}^2w_{10} & \px{0}^4 w^2_{0} \\
			\end{pmatrix}} \right),
		\end{aligned}
	\end{equation*}
 	where $w^2_{0} = (1/p_0^2)\{\zetaAddY{0} + \zetaAddA{0} + \zetaAddH{0}\}$,  $w^2_{1} = (1/p_1^2)\{\zetaAddY{1} + \zetaAddA{1} + \zetaAddH{1}\}$, and $w_{10} =(\px{1}\px{0})^{-1}\{\zetaAddYxy{1}{0} + \zetaAddAxy{1}{0} + \zetaAddHxy{1}{0}\} $.
	Then using Slutsky's theorem, we can obtain the desired result.
\end{proof}

\subsection{Proof of Theorem 7
}
\begin{proof}
	In the proof of Theorem 2,
 we show
	\begin{equation*}
		\begin{aligned}
			n \left(\frac{\hatsigmaSqax{1}{1}}{\nax{1}{1}} + \frac{\hatsigmaSqax{0}{1}}{\nax{0}{1}} + \frac{\hatsigmaSqax{1}{0}}{\nax{1}{0}} + \frac{\hatsigmaSqax{0}{0}}{\nax{0}{0}} \right) \overset{P}\rightarrow \zetatold,
		\end{aligned}
	\end{equation*}
	where $\zetatold =  \sigmaSqax{1}{1} / ({\px{1}\pi})+  \sigmaSqax{0}{1} / \{\px{1}(1-\pi)\} +  \sigmaSqax{1}{0} / ({\px{0}\pi}) + \sigmaSqax{0}{0}/ \{\px{0}(1-\pi)\}$.
	
	We next define $\zetatprime =  ({1}/{\px{1}^2})\big\{\zetaAddY{1} + \zetaAddA{1} + \zetaAddH{1}\big\} + ({1}/{\px{0}^2})\big\{\zetaAddY{0} + \zetaAddA{0} + \zetaAddH{0}\big\}- \{{2}/({\px{0}\px{1}})\} \big\{\zetaAddYxy{1}{0}+ \zetaAddAxy{1}{0}+ \zetaAddHxy{1}{0}\big\}$ and compare it with $\zetatold$.
	
	Notice that 	
	\begin{equation}
		\begin{aligned}
			\sigma^2_{\tilde{r}_{x}(a)} 	&= 	E\Big(\Big[\{\Ya - \muax{a}{x}\}\ind{X_i = x} - \Econd{\raxZ{a}{x}}{S_i}\Big]^2\Big)\\
			&= 	E\Big[\{\Ya - \muax{a}{x}\}^2\ind{X_i = x} - 2\Econd{\raxZ{a}{x}}{S_i}\{\Ya - \muax{a}{x}\}\ind{X_i = x} \\
            &\quad\qquad+ \Econd{\raxZ{a}{x}}{S_i}^2\Big]\\
			&= \px{x} \sigmaSqax{a}{x} - E\Big[ \Econd{\raxZ{a}{x}}{S_i}^2\Big]\\
			&= \px{x} \sigmaSqax{a}{x} - \sum_{s \in \mathcal{S}}p(s)\Econd{\raxZ{a}{x}}{S_i=s}^2,
		\end{aligned}
	\label{eq:sigmaSqtilder}
	\end{equation}
	for $a \in \{1, 0\}$ and $x \in \{1, 0\}$.
	
	Then we have
  {\allowdisplaybreaks
		\begin{align*}
			\zetatold - \zetatprime &= \frac{1}{\px{1}\pi} \sigmaSqax{1}{1} + \frac{1}{\px{1}(1-\pi)} \sigmaSqax{0}{1} + \frac{1}{p_0\pi} \sigma^2_{10} + \frac{1}{p_0(1-\pi)} \sigma^2_{00}\\
			&\quad - \Big[\frac{1}{\px{1}^2} \big\{\zetaAddY{1} + \zetaAddA{1} + \zetaAddH{1}\big\} + \frac{1}{\px{0}^2}\big\{\zetaAddY{0} + \zetaAddA{0} + \zetaAddH{0}\big\} \\
            &\quad\qquad -\frac{2}{\px{1} \px{0}}\big\{\zetaAddYxy{1}{0} + \zetaAddAxy{1}{0}+\zetaAddHxy{1}{0}\big\}\Big] \\
			&= \frac{1}{\px{1}\pi} \Big(\sigmaSqax{1}{1} - \frac{1}{\px{1}} \sigma^2_{\tilde{r}_{1}(1)}\Big) + \frac{1}{\px{1}(1-\pi)} \Big(\sigmaSqax{0}{1}- \frac{1}{\px{1}} \sigma^2_{\tilde{r}_{1}(0)}\Big) + \frac{1}{\px{0}\pi} \Big(\sigmaSqax{1}{0} - \frac{1}{\px{0}} \sigma^2_{\tilde{r}_{0}(1)}\Big) \\
            &\quad + \frac{1}{\px{0}(1-\pi)} \Big(\sigmaSqax{0}{0} - \frac{1}{\px{0}} \sigma^2_{\tilde{r}_{0}(0)}\Big) - \Big[\frac{1}{\px{1}^2} \big\{ \zetaAddA{1} + \zetaAddH{1}\big\} + \frac{1}{\px{0}^2}\big\{ \zetaAddA{0} + \zetaAddH{0}\big\}\\
            &\hspace{2.5in} -\frac{2}{\px{1} \px{0}}\big\{\zetaAddYxy{1}{0} + \zetaAddAxy{1}{0}+\zetaAddHxy{1}{0}\big\}\Big] \\
			&= \frac{1}{\px{1}^2\pi} \sum_{s\in \mathcal{S}}p(s)\Econd{\raxZ{1}{1}}{S_i=s}^2 + \frac{1}{\px{1}^2(1-\pi)} \sum_{s\in \mathcal{S}}p(s)\Econd{\raxZ{0}{1}}{S_i=s}^2\\
			&\quad + \frac{1}{\px{0}^2\pi} \sum_{s\in \mathcal{S}}p(s)\Econd{\raxZ{1}{0}}{S_i=s}^2 + \frac{1}{\px{0}^2(1-\pi)} \sum_{s\in \mathcal{S}}p(s)\Econd{\raxZ{0}{0}}{S_i=s}^2 \\
			&\quad - \frac{1}{\px{1}^2} \sum_{s \in \mathcal{S}} p(s)q(s)  \Big[\frac{1}{\pi} \Econd{\raxZ{1}{1}}{ S_i=s} + \frac{1}{1-\pi} \Econd{\raxZ{0}{1}}{ S_i=s}\Big]^2\\
			&\quad - \frac{1}{\px{1}^2} \sum_{s \in \mathcal{S}}p(s) \Big[\Econd{\raxZ{1}{1}}{ S_i=s} -  \Econd{\raxZ{0}{1}}{ S_i=s}\Big]^2\\
			&\quad - \frac{1}{\px{0}^2} \sum_{s \in \mathcal{S}} p(s)q(s)  \Big[\frac{1}{\pi} \Econd{\raxZ{1}{0}}{ S_i = s} + \frac{1}{1-\pi} \Econd{\raxZ{0}{0}}{ S_i = s}\Big]^2\\
			&\quad - \frac{1}{\px{0}^2} \sum_{s \in \mathcal{S}}p(s) \Big[ \Econd{\raxZ{1}{0}}{ S_i = s} -  \Econd{\raxZ{0}{0}}{ S_i = s}\Big]^2\\
			&\quad - \frac{2}{\px{1}\px{0}}  \sum_{s \in \mathcal{S}}  p(s)\Big[\frac{1}{\pi}\Econd{\raxZ{1}{1}}{ S_i = s}\Econd{\raxZ{1}{0}}{ S_i = s} \\
            &\hspace{1.3in}+ \frac{1}{1-\pi} \Econd{\raxZ{0}{1}}{ S_i = s}\Econd{\raxZ{0}{0}}{ S_i = s}\Big]\\
			&\quad + \frac{2}{\px{1}\px{0}}\sum_{s \in \mathcal{S}}p(s)q(s) \Big[\frac{1}{\pi} \Econd{\raxZ{1}{1}}{ S_i=s} + \frac{1}{1-\pi} \Econd{\raxZ{0}{1}}{ S_i=s}\Big]\\
            &\hspace{2in}\Big[\frac{1}{\pi} \Econd{\raxZ{1}{0}}{ S_i = s} + \frac{1}{1-\pi} \Econd{\raxZ{0}{0}}{ S_i = s}\Big] \\
			&\quad + \frac{2}{\px{1}\px{0}} \sum_{s \in \mathcal{S}}p(s)\Big[ \Econd{\raxZ{1}{1}}{ S_i=s}- \Econd{\raxZ{0}{1}}{ S_i=s}\Big]\\
            &\hspace{1.6in}\Big[ \Econd{\raxZ{1}{0}}{ S_i = s}- \Econd{\raxZ{0}{0}}{ S_i = s}\Big].\\
		\end{align*}
 }
	Continue with the derivation:
          {\allowdisplaybreaks
		\begin{align*}
			\zetatold - \zetatprime 
   			&= \frac{1}{\px{1}^2\pi} \sum_{s\in \mathcal{S}}p(s)\Econd{\raxZ{1}{1}}{S_i=s}^2 + \frac{1}{\px{1}^2(1-\pi)} \sum_{s\in \mathcal{S}}p(s)\Econd{\raxZ{0}{1}}{S_i=s}^2\\
			&\quad + \frac{1}{\px{0}^2\pi} \sum_{s \in \mathcal{S}}p(s)\Econd{\raxZ{1}{0}}{S_i=s}^2 + \frac{1}{\px{0}^2(1-\pi)} \sum_{s\in \mathcal{S}}p(s)\Econd{\raxZ{0}{0}}{S_i=s}^2 \\
			&\quad - \frac{1}{\px{1}^2} \sum_{s \in \mathcal{S}}p(s) \Big[ \Econd{\raxZ{1}{1}}{ S_i=s} -  \Econd{\raxZ{0}{1}}{ S_i=s}\Big]^2\\
			&\quad - \frac{1}{\px{0}^2} \sum_{s \in \mathcal{S}}p(s) \Big[ \Econd{\raxZ{1}{0}}{ S_i = s} -  \Econd{\raxZ{0}{0}}{ S_i = s}\Big]^2\\
			&\quad -  \sum_{s \in \mathcal{S}}p(s)q(s)\Big(\frac{1}{\px{1}}\Big[\frac{1}{\pi} \Econd{\raxZ{1}{1}}{ S_i=s} + \frac{1}{1-\pi} \Econd{\raxZ{0}{1}}{ S_i=s}\Big]\\ 
			&\hspace{1.3in} - \frac{1}{\px{0}}\Big[\frac{1}{\pi} \Econd{\raxZ{1}{0}}{ S_i = s} + \frac{1}{1-\pi} \Econd{\raxZ{0}{0}}{ S_i = s}\Big]\Big)^2\\
			&\quad - \frac{2}{\px{1}\px{0}}  \sum_{s \in \mathcal{S}}  p(s)\Big[\frac{1}{\pi}\Econd{\raxZ{1}{1}}{ S_i = s}\Econd{\raxZ{1}{0}}{ S_i = s} \\
            &\hspace{1.3in}+ \frac{1}{1-\pi} \Econd{\raxZ{0}{1}}{ S_i = s}\Econd{\raxZ{0}{0}}{ S_i = s}\Big]\\
			&\quad + \frac{2}{\px{1}\px{0}} \sum_{s \in \mathcal{S}}p(s)\Big[ \Econd{\raxZ{1}{1}}{ S_i=s}- \Econd{\raxZ{0}{1}}{ S_i=s}\Big]\\
            &\hspace{1.4in} \Big[\Econd{\raxZ{1}{0}}{ S_i = s}- \Econd{\raxZ{0}{0}}{ S_i = s}\Big]\\
			&=   \sum_{s \in \mathcal{S}}p(s)\{\pi(1-\pi)-q(s)\}\Big(\frac{1}{\px{1}}\Big[\frac{1}{\pi} \Econd{\raxZ{1}{1}}{ S_i=s} + \frac{1}{1-\pi} \Econd{\raxZ{0}{1}}{ S_i=s}\Big]\\ 
			&\qquad - \frac{1}{\px{0}}\Big[\frac{1}{\pi} \Econd{\raxZ{1}{0}}{ S_i = s} + \frac{1}{1-\pi} \Econd{\raxZ{0}{0}}{ S_i = s}\Big]\Big)^2\\
			&\geq 0.
		\end{align*} }
		The inequality is strict unless
			\begin{equation}
			\begin{aligned}
				&\quad\{\pi(1-\pi)-q(s)\}\Big(\frac{1}{\px{1}}\Big[\frac{1}{\pi} \Econd{\raxZ{1}{1}}{S_i = s} + \frac{1}{1-\pi} \Econd{ \raxZ{0}{1}}{S_i = s}\Big]\\ &\qquad - \frac{1}{\px{0}}\Big[\frac{1}{\pi} \Econd{\raxZ{1}{0}}{S_i = s} + \frac{1}{1-\pi} \Econd{\raxZ{0}{0}}{S_i = s}\Big]\Big) = 0
			\end{aligned}
		\end{equation}
		for all $s \in \mathcal{S}$.
\end{proof}

\subsection{Proof of Theorem 8
}

\begin{proof}
	We first show $\hatzetaAddY{x} \inp \zetaAddY{x} $.
	For all $a \in \{1, 0\}$, $x \in \{1, 0\}$, and $s \in \mathcal{S}$,
	\begin{equation*}
		\begin{aligned}
			\hat{\mu}_{n,ax}(s) &= \frac{1}{\nax{a}{x}(s)}\sum_{i=1}^n Y_i\ind{A_i=a, S_i=s, X_i = x} \\
			&= \frac{1}{\nax{a}{x}(s)}\sum_{i=1}^n \Big[\Ya - \Econd{\Ya}{S_i = s, X_i = x}\Big]\ind{A_i=a, S_i=s, X_i = x} \\
            &\quad + \Econd{\Ya}{S_i = s, X_i = x}\\
			&= \frac{n}{\nax{a}{x}(s)} \frac{1}{n} \sum_{i=1}^n \Big[\Ya- \Econd{\Ya}{S_i = s, X_i = x}\Big]\ind{A_i=a, S_i=s, X_i = x} \\
            &\quad + \Econd{\Ya}{S_i = s, X_i = x}\\
			&\inp \Econd{\Ya}{S_i = s, X_i = x},
		\end{aligned}
	\end{equation*}	
	where the convergence follows from Lemma \ref{lemma:weak_law} and Slutsky's theorem. Also, 
	\begin{equation*}
		\begin{aligned}
			\frac{1}{\nax{a}{x}}\sum_{i=1}^n(Y_i- \Yax{a}{x})^2\ind{A_i = a, X_i=x}
			\inp \sigmaSqax{a}{x},
		\end{aligned}
	\end{equation*}	
	as shown in the proof of Theorem 2.
	It follows that
	\begin{equation*}
		\begin{aligned}
			&\quad\hat{p}_x\frac{1}{\nax{a}{x}}\sum_{i=1}^n(Y_i- \Yax{a}{x})^2\ind{A_i = a, X_i=x} -\sum_{s \in \mathcal{S}}\frac{n(s)}{n}\Big\{\frac{\nxs{x}{s}}{\ns{s}}\Big\}^2\big\{\hat{\mu}_{n,ax}(s) - \Yax{a}{x}\big\}^2\\
			&\inp
			\px{x}\sigmaSqax{a}{x} - \sum_{s \in \mathcal{S}} p(s)\Big\{\frac{\pxs{x}{s}}{\ps{s}}\Big\}^2 \big[\Econd{\Ya}{S_i = s, X_i = x}- \mu_{ax}\big]^2\\
			&=  \px{x} \sigmaSqax{a}{x} - \sum_{s \in \mathcal{S}}p(s)\Econd{\raxZ{a}{x}}{S_i=s}^2 \\
			&= \sigma^2_{\tilde{r}_{x}(a)},
		\end{aligned}
	\end{equation*}
	where the last equality has been given in (\ref{eq:sigmaSqtilder}). Hence, we can get $\hatzetaAddY{x} \inp \zetaAddY{x}$.
	
	Since $\{{\nxs{x}{s}}/{\ns{s}}\}\{\hat{\mu}_{n,ax}(s) - \Yax{a}{x}\} \inp \Econd{\raxZ{a}{x}}{S_i=s}$ for $a \in \{1, 0\}$ and $x \in \{1, 0\}$, we can establish
	$\hatzetaAddA{x} \inp \zetaAddA{x}$, $\hatzetaAddH{x} \inp \zetaAddH{x}$, $\hatzetaAddYxy{1}{0} \inp \zetaAddYxy{1}{0}$, $\hatzetaAddAxy{1}{0} \inp \zetaAddAxy{1}{0}$, and $\hatzetaAddHxy{1}{0} \inp \zetaAddHxy{1}{0}$. Therefore, $\hat w^2_{1} + \hat w^2_{0}  - 2\hat w_{10} \inp w^2_{1} + w^2_{0} - 2 w_{10}$.
	
	Hence, when the null hypothesis is true, 
	\begin{equation*}
		\begin{aligned}
			T_n^{t\text{-test,mod}^\prime}({V}^{(n)}) &=	\frac{\sqrt{n}(\Yax{1}{1} - \Yax{0}{1} - \Yax{1}{0} + \Yax{0}{0})}{(\hat w^2_{1} + \hat w^2_{0}  - 2\hat w_{10})^{1/2}}\\
			&\indist N(0,1),
		\end{aligned}
	\end{equation*}
	by Theorem 6,
 the continuous mapping theorem, and Slutsky's theorem. 
\end{proof}

\subsection{Proof of Theorem 9
}
\begin{proof}
	
	Let $\checkYa{a} = \Ya - \Econd{\Ya}{S_i, X_i}$, and $\maxZ{a}{x}$ be defined as in the proof of Lemma 	\ref{lemma:thm_1_M}, for $a \in \{1,0\}$ and $x \in \{1,0\}$. Then 
	\begin{equation*}
		\begin{aligned}
			\Ya - \muax{a}{x} = \checkYa{a} + \Econd{\maxZ{a}{x}}{S_i, X_i}.
		\end{aligned}
	\end{equation*}
        Notice that 
	\begin{equation*}
		\begin{aligned}
	       &\quad \hatmuaxs{1}{x}{s} - \hatmuaxs{0}{x}{s} \\
            &= \frac{\sum_{i=1}^{n}\Yt A_i\ind{S_i = s, X_i = x}}{\naxs{1}{x}{s}} - \frac{\sum_{i=1}^{n}Y_i(0)(1-A_i)\ind{S_i = s, X_i = x}}{\naxs{0}{x}{s}}\\
			&= \frac{\sum_{i=1}^{n}\big(\Yt-\muax{1}{x}\big)A_i\ind{S_i = s, X_i = x}}{\naxs{1}{x}{s}}\\ 
            &\quad- \frac{\sum_{i=1}^{n}\big(Y_i(0)-\muax{0}{x}\big)(1-A_i)\ind{S_i = s, X_i = x}}{\naxs{0}{x}{s}} + \tau_x\\
			&= \frac{\sum_{i=1}^{n}\big[\checkYa{1} + \Econd{\maxZ{1}{x}}{S_i = s, X_i = x} \big]A_i\ind{S_i = s, X_i = x}}{\naxs{1}{x}{s}}\\
			&\quad - \frac{\sum_{i=1}^{n}\big(\checkYa{0} + \Econd{\maxZ{0}{x}}{S_i = s, X_i  = x}\big)(1-A_i)\ind{S_i = s, X_i = x}}{\naxs{0}{x}{s}} + \tau_x\\
			&= \frac{\sum_{i=1}^{n}\checkYa{1} A_i\ind{S_i = s, X_i = x}}{\naxs{1}{x}{s}} - \frac{\sum_{i=1}^{n}\checkYa{0}(1-A_i)\ind{S_i = s, X_i = x}}{\naxs{0}{x}{s}}\\
			&\qquad + \Econd{\maxZ{1}{x} - \maxZ{0}{x}}{S_i = s, X_i  = x}  + \tau_x.
		\end{aligned}
	\end{equation*}
	
	Then, 
	\begin{equation*}
		\begin{aligned}
			&\quad \sqrt{n}\Big[\sum_{s \in \mathcal{S}} \frac{\nxs{x}{s}}{\nx{x}}\big\{\hatmuaxs{1}{x}{s} - \hatmuaxs{0}{x}{s}\big\}  - \tau_x\Big]\\
			&= \sqrt{n}\sum_{s \in \mathcal{S}} \frac{\nxs{x}{s}}{\nx{x}} \frac{\sum_{i=1}^{n}\checkYa{1} A_i\ind{S_i = s, X_i = x}}{\naxs{1}{x}{s}}\\
            &\quad - \sqrt{n} \sum_{s \in \mathcal{S}}\frac{\nxs{x}{s}}{\nx{x}} \frac{\sum_{i=1}^{n}\checkYa{0}(1-A_i)\ind{S_i = s, X_i = x}}{\naxs{0}{x}{s}}\\
			&\quad +  \sqrt{n}\sum_{s \in \mathcal{S}} \frac{\nxs{x}{s}}{\nx{x}} \Econd{\maxZ{1}{x}  -  \maxZ{0}{x}}{S_i = s, X_i=x} \\
			&= \frac{1}{\px{x}}(\Lcheckx{x} + \Scheckx{x}) + o_P(1),
		\end{aligned}
	\end{equation*}
	where 
	\begin{equation}
		\begin{aligned}
		\Lcheckx{x}&=   \sum_{s \in \mathcal{S}} \frac{\nxs{x}{s}}{\sqrt{n}} \frac{\sum_{i=1}^{n}\checkYa{1} A_i\ind{S_i = s, X_i = x}}{\naxs{1}{x}{s}} \\
        &\quad - \sum_{s \in \mathcal{S}}\frac{\nxs{x}{s}}{\sqrt{n}} \frac{ \sum_{i=1}^{n}\checkYa{0}(1-A_i)\ind{S_i = s, X_i = x}}{\naxs{0}{x}{s}},\\
		\Scheckx{x} &=  \sum_{s \in \mathcal{S}} \frac{\nxs{x}{s}}{\sqrt{n}} \Econd{\maxZ{1}{x}  -  \maxZ{0}{x}}{S_i = s, X_i=x},
		\end{aligned}
		\label{LSyx}
	\end{equation}
	for $x \in \{1, 0\}$.
	
	In Lemma \ref{lemma:thm_strata_joint_add}, we show
	\begin{equation*}
		\begin{aligned}
			{\begin{pmatrix}
					\Lcheckx{1} + \Scheckx{1}\\
					\Lcheckx{0} + \Scheckx{0}\\
			\end{pmatrix}}
			\indist
			N\left({\begin{pmatrix}
					0 \\
					0 \\
			\end{pmatrix}}, {\begin{pmatrix}
					\px{1}^2 s^2_{1}& 0 \\
					0 & \px{0}^2 s^2_{0}\\
			\end{pmatrix}} \right),
		\end{aligned}
	\end{equation*}
    where $s^2_{0} = ({1}/{\px{0}^2})\{\zetaCheckY{0}  + \zetaCheckH{0}\}$, and $s^2_{1} = ({1}/{\px{1}^2})\{\zetaCheckY{1}  + \zetaCheckH{1}\}$.
	Then using Slutsky's theorem, we can obtain the desired result.
\end{proof}
\subsection{Proof of Theorem 10
}

\begin{proof}
	We first show $\hatzetaCheckY{x} \inp \zetaCheckY{x} $.
	For all $a \in \{1, 0\}$, $x \in \{1, 0\}$, notice that 

\begin{equation}
	\begin{aligned}
		\sigma^2_{\checkYa{a}|X = x} 
		&= \Econd{Y^2_i(a)}{X_i=x} - 2E[\Ya \Econd{\Ya}{S_i, X_i}\mid X_i=x] \\
        &\quad + E[\Econd{\Ya}{S_i, X_i}^2\mid X_i=x]\\
		&= \sigmaSqax{a}{x} - E[\Econd{\Ya}{S_i, X_i}^2\mid X_i=x] + \mu_{ax}^2\\
		&= \sigmaSqax{a}{x} - \operatorname{var}[\Econd{\Ya}{S_i,X_i}\mid X_i=x]\\
		&= \sigmaSqax{a}{x} - E[\Econd{\maxZ{a}{x}}{S_i,X_i}^2\mid X_i = x]\\
		&= \sigmaSqax{a}{x}- \frac{1}{\px{x}}\sum_{s \in \mathcal{S}}\pxs{x}{s}\Econd{\maxZ{a}{x}}{S_i=s, X_i = x}^2,
	\end{aligned}
	\label{eq:sigmaSqcheckY}
\end{equation}
	and hence, 
	\begin{equation*}
	\begin{aligned}
		&\quad\hat{p}_x\frac{1}{\nax{a}{x}}\sum_{i=1}^n(Y_i- \Yax{a}{x})^2\ind{A_i = a, X_i=x} -\sum_{s \in \mathcal{S}}\frac{\nxs{x}{s}}{n}\big\{\hat{\mu}_{n,ax}(s) - \Yax{a}{x}\big\}^2\\
		&\inp
		\px{x}\sigmaSqax{a}{x} - \sum_{s \in \mathcal{S}} \pxs{x}{s} \big[\Econd{\Ya}{S_i = s, X_i = x}- \mu_{ax}\big]^2\\
		&=  \px{x} \sigmaSqax{a}{x} - \sum_{s \in \mathcal{S}}\pxs{x}{s}\Econd{\maxZ{a}{x}}{S_i=s, X_i=x}^2 \\
		&= \px{x}\sigma^2_{\check{Y}(a)|X = x} ,
	\end{aligned}
	\end{equation*}
	where the convergence follows from the weak law of large numbers, Lemma \ref{lemma:weak_law}, and Slutsky's theorem, as shown in the proof of Theorem 8.
Thus, we can establish $\hatzetaCheckY{x} \inp \zetaCheckY{x}$. The proof of $\hatzetaCheckH{x} \inp \zetaCheckH{x}$ is similar. Hence, we can conclude that $\hat s^2_{1} + \hat s^2_{0} \inp s^2_{1} + s^2_{0}$.
	
	Then by Theorem 9, 
 the continuous mapping theorem, and Slutsky's theorem, when the null hypothesis is true,
	\begin{equation*}
		\begin{aligned}
					T_n^{\text{strat}^\prime}({V}^{(n)}) &=	\frac{\sqrt{n}\Big[\sum_{s \in \mathcal{S}} \frac{\nxs{1}{s}}{\nx{1}}\big\{ \hatmuaxs{1}{1}{s} - \hatmuaxs{0}{1}{s}\big\}- \sum_{s \in \mathcal{S}} \frac{\nxs{0}{s}}{\nx{0}}\big\{\hatmuaxs{1}{0}{s} - \hatmuaxs{0}{0}{s}\big\}\Big]}{(\hat s^2_{1} + \hat s^2_{0})^{1/2}}\\
			&\indist N(0,1).
		\end{aligned}
	\end{equation*}
\end{proof}

\subsection{Proof of Theorem 11
}
	\begin{proof}
		The proof of Theorem 6
  can be generalized to categorical $X_i$ and we omit the proof here.
	\end{proof}

\subsection{Proof of Theorem 12
}
\begin{proof}
	Let
		\begin{equation*}
	\begin{aligned}
		{\Sigma}_{hc} =  \operatorname{diag}\left(\frac{n\sigmaSqax{1}{x}}{\nax{1}{x}} + \frac{n\sigmaSqax{0}{x}}{\nax{0}{x}}: x \in \mathcal{X}\right), \quad 		\hat{\Sigma}_{mod} &= \begin{pmatrix}
			\hat w^2_{0} & \hat w_{10} & \cdots & \hat w_{K0}\\
			\hat w_{10}& \hat w^2_{1}& \cdots & \hat w_{K1} \\
			\vdots & \vdots& \ddots &\vdots \\
			\hat w_{K0} & \hat w_{K1} & \cdots & \hat w^2_{K}
		\end{pmatrix}.\\
	\end{aligned}
		\end{equation*}
Then by the weak law of large numbers and  Lemma \ref{lemma:weak_law}, $\hat{\Sigma}_{hc} \inp {\Sigma}_{hc}$. The proof of Theorem 8 
 can also be generalized to show $\hat{\Sigma}_{mod} \inp \Sigma$, where $\Sigma$ is the asymptotic covariance matrix in Theorem 11. 
 Then by Theorem 12
 and elementary properties of the multivariate normal distribution,
			\begin{equation}
		\begin{aligned}
			W_n^{\text{Wald,mod}}({V}^{(n)}) &= n(	{R}\hat{{\tau}})^{\mathrm{T}}(	{R}\hat{\Sigma}_{mod}	{R}^{\mathrm{T}})^{-1}(	{R}\hat{{\tau}})\\
			& \inp  \chi^2({K}).
		\end{aligned}
	\label{eq:chi_K}
	\end{equation}
	
	Furthermore, 
	\begin{equation}
		\begin{aligned}
			&\quad {\Sigma}_{hc} - \Sigma \\&= 
			\sum_{s \in \mathcal{S}}p(s)\{\pi(1-\pi)-q(s)\} \Big(\frac{1}{\px{x}}\big[\frac{1}{\pi} \Econd{\raxZ{1}{x}}{S_i = s} + \frac{1}{1-\pi} \Econd{\raxZ{0}{x}}{S_i = s}\big]: x \in \mathcal{X}\Big)\\
			&\qquad \Big(\frac{1}{\px{x}}\big[\frac{1}{\pi} \Econd{\raxZ{1}{x}}{S_i = s} + \frac{1}{1-\pi} \Econd{\raxZ{0}{x}}{S_i = s}\big]: x \in \mathcal{X}\Big)^{\mathrm{T}}, \\
		\end{aligned}
		\label{eq:sigma_mod}
	\end{equation}
	and 
	${R}{\Sigma}_{hc}{R}^{\mathrm{T}} \succeq {R}{\Sigma}{R}^{\mathrm{T}}$. The inequality is strict unless
	\begin{equation}
		\begin{aligned}
			&\quad \{\pi(1-\pi)-q(s)\}\Big(\frac{1}{\px{x}}\big[\frac{1}{\pi} \Econd{\raxZ{1}{x}}{S_i = s} + \frac{1}{1-\pi} \Econd{\raxZ{0}{x}}{S_i = s}\big]\\ &\qquad - \frac{1}{\px{0}}\big[\frac{1}{\pi} \Econd{\raxZ{1}{0}}{S_i = s} + \frac{1}{1-\pi} \Econd{\raxZ{0}{0}}{S_i = s}\big]\Big)^2 = 0\\
			\label{cond:1_add_multi}
		\end{aligned}
	\end{equation}
	for all $x \in \mathcal{X}$ and $s \in \mathcal{S}$.
\end{proof}

\subsection{Proof of Theorem 13
}

\begin{proof}
	In the proof of Theorem 12, we obtain (\ref{eq:chi_K}), which immediately implies our desired results.
\end{proof}

\subsection{Proof of Theorem 14
}
\begin{proof}
		The proof of Theorem 9
  can be generalized to categorical  $X_i$ and we omit the proof here.
\end{proof}

\subsection{Proof of Theorem 15
}
\begin{proof}
	By Theorem 14
 and elementary properties of the multivariate normal distribution,
	\begin{equation}
		\begin{aligned}
			W_n^{\text{strat}}({V}^{(n)}) &= n(	{R}\hat{{\tau}}_{strat})^{\mathrm{T}}(	{R}\hat{\Sigma}_{strat}	{R}^{\mathrm{T}})^{-1}(	{R}\hat{{\tau}}_{strat})\\
			& \inp \chi^2({K}).
		\end{aligned}
		\label{eq:chi_K_strata}
	\end{equation}
	Then our desired result follows.
\end{proof}

	\section{Additional results}
	\begin{lemma}
		\label{lemma:weak_law}
		Suppose that Assumptions 
     2 and 4
     hold. Let ${W}_i = (\Yt, \Yc, {Z}_i^{\mathrm{T}})^{\mathrm{T}}$, $i = 1, \dots, n$, and $f(\cdot)$ be a measurable function satisfying $\E{|f({W}_i)|} < \infty$. Then
		\begin{equation*}
			\begin{aligned}
				\frac{1}{n} \sum_{i=1}^{n}f({W}_i)A_i \inp \pi \E{f({W}_i)}.
			\end{aligned}
		\end{equation*}
	
	\begin{proof}
		\cite{Bugni2018} obtained a similar result with ${W}_i = (\Yt, \Yc, S_i)^{\mathrm{T}}$ (see Lemma B.3). \cite{ma2022} pointed out that their results can be generalized to ${W}_i = (\Yt, \Yc, {Z}_i^{\mathrm{T}})^{\mathrm{T}}$. For the completeness, here we rewrite the proof in \cite{Bugni2018} with only slight modification. 
		
		Let $F_i = f({W}_i)$. Independently for each $s \in \mathcal{S}$ and independently of $( A^{(n)}, S^{(n)})$, let $\{F_i^s: 1 \leq i \leq n\}$ be i.i.d. with marginal distribution equal to the distribution of $F_i\mid S_i = s$. Notice that 	
	\begin{equation*}
			\begin{aligned}
				\Big\{\frac{1}{n} \sum_{i=1}^{n}F_iA_i \Bigm| A^{(n)}, S^{(n)}\Big\} \equaldist \Big\{\sum_{s \in \mathcal{S}}\frac{1}{n}\sum_{i=1}^{\nas{1}{s}}F_i^s \Bigm| A^{(n)}, S^{(n)}\Big\},
			\end{aligned}
	\end{equation*}
and hence 
		\begin{equation*}
		\begin{aligned}
		\frac{1}{n} \sum_{i=1}^{n}F_iA_i  \equaldist \sum_{s \in \mathcal{S}}\frac{1}{n}\sum_{i=1}^{\nas{1}{s}}F_i^s. 
		\end{aligned}
	\end{equation*}

	By Assumption 4
 and the weak law of large numbers, ${\nas{1}{s}}/{n} = \{{\Dns}/{n(s)} 
 + \pi\}\{{n(s)}/{n}\} \inp\pi p(s)$. It remains to show 
		\begin{equation*}
		\begin{aligned}
			\frac{1}{\nas{1}{s}} \sum_{i=1}^{\nas{1}{s}} F_i^s \inp E(F_i^s),
		\end{aligned}
	\end{equation*}
	which implies that
		\begin{equation*}
		\begin{aligned}
			\sum_{s \in \mathcal{S}}\frac{1}{n}\sum_{i=1}^{\nas{1}{s}}F_i^s &= \sum_{s \in \mathcal{S}}\frac{\nas{1}{s}}{n}\frac{1}{\nas{1}{s}}\sum_{i=1}^{\nas{1}{s}}F_i^s\\
			&\inp \sum_{s \in \mathcal{S}} \pi p(s) E(F_i^s) \\
			&= \pi E(F_i).
			\end{aligned}
\end{equation*}
	Use the almost sure representation theorem to construct ${\tildenas{1}{s}}/{n}$ such that ${\tildenas{1}{s}}/{n} \equaldist {\nas{1}{s}}/{n}$ and ${\tildenas{1}{s}}/{n} \rightarrow \pi p(s)$ a.s. Then by the independence of $(A^{(n)}, S^{(n)})$ and $\{F_i^s: 1 \leq i \leq n\}$, for any $\epsilon > 0$, we have
			\begin{equation*}
		\begin{aligned}
			P\Bigg \{ \Bigg | \frac{1}{\nas{1}{s}} \sum_{i=1}^{\nas{1}{s}} F_i^s - E(F_i^s) \Bigg | > \epsilon \Bigg \} &= P\Bigg \{ \Bigg | \frac{1}{n \frac{\nas{1}{s}}{n}} \sum_{i=1}^{n \frac{\nas{1}{s}}{n}} F_i^s - E(F_i^s) \Bigg | > \epsilon \Bigg \} \\
			&= P\Bigg \{ \Bigg | \frac{1}{n \frac{\tildenas{1}{s}}{n}} \sum_{i=1}^{n \frac{\tildenas{1}{s}}{n}} F_i^s - E(F_i^s) \Bigg | > \epsilon \Bigg \} \\
			&= E \Bigg [P\Bigg \{ \Bigg | \frac{1}{n \frac{\tildenas{1}{s}}{n}} \sum_{i=1}^{n \frac{\tildenas{1}{s}}{n}} F_i^s - E(F_i^s) \Bigg | > \epsilon \Bigm | \frac{\tildenas{1}{s}}{n}  \Bigg \} \Bigg] \\
			&\rightarrow 0,
		\end{aligned}
	\end{equation*}
	where the convergence is due to the dominated convergence theorem, and 
 \begin{equation}
		\begin{aligned}
			P\Bigg \{ \Bigg | \frac{1}{n \frac{\tildenas{1}{s}}{n}} \sum_{i=1}^{n \frac{\tildenas{1}{s}}{n}} F_i^s - E(F_i^s) \Bigg | > \epsilon \Bigm | \frac{\tildenas{1}{s}}{n}  \Bigg \} \rightarrow 0 \text{ a.s. }
			\end{aligned}
		\label{eq:as_weak_law}
		\end{equation}
	To reach (\ref{eq:as_weak_law}), notice that the weak law of large numbers implies that
	\begin{equation}
		\begin{aligned}
			\frac{1}{n_k} \sum_{i=1}^{n_k}F_i^s \inp E(F_i^s)
		\label{eq:weak_law_cor}
		\end{aligned}
	\end{equation}
	for any subsequence $n_k \rightarrow \infty$ as $k \rightarrow \infty$. Since $n \{{\tildenas{1}{s}}/{n}\} \rightarrow \infty$ a.s., (\ref{eq:as_weak_law}) follows from the independence of ${\tildenas{1}{s}}/{n} $ and $\{F_i^s: 1 \leq i \leq n\}$, and (\ref{eq:weak_law_cor}). 
	\end{proof}
\end{lemma}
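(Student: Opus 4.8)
The plan is to convert the dependence that covariate-adaptive randomization induces among the assignments into an ordinary partial-sum problem, using Assumption~\ref{assumption:cond ind}, and then to control a law of large numbers indexed by a random number of terms using Assumption~\ref{assumption:normal_weaker}. Write $F_i = f(W_i)$. The essential difficulty is that the $A_i$ are neither mutually independent nor independent of the $W_i$; they become independent of $W^{(n)}$ only after conditioning on the strata $S^{(n)}$, so a direct appeal to the classical weak law is unavailable.

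First I would introduce, for each $s \in \mathcal{S}$ and independently of $(A^{(n)}, S^{(n)})$, i.i.d. copies $\{F_i^s : 1 \le i \le n\}$ with the conditional law of $F_i$ given $S_i = s$. Conditionally on $(A^{(n)}, S^{(n)})$, Assumption~\ref{assumption:cond ind} implies that the treated values $\{F_i : A_i = 1,\, S_i = s\}$ are exactly $\nas{1}{s}$ i.i.d. draws from this conditional law, whence $\frac{1}{n}\sum_{i=1}^n F_i A_i \equaldist \sum_{s \in \mathcal{S}} \frac{1}{n}\sum_{i=1}^{\nas{1}{s}} F_i^s$. This distributional identity is the crux of the argument: it removes the design-induced dependence at the price of a random, mechanism-dependent upper summation limit $\nas{1}{s}$.

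Rewriting each stratum term as $\{\nas{1}{s}/n\}\cdot\frac{1}{\nas{1}{s}}\sum_{i=1}^{\nas{1}{s}} F_i^s$, the proportion factor is immediate: by Assumption~\ref{assumption:normal_weaker} and the ordinary weak law of large numbers, $\nas{1}{s}/n = \{\Dns/\ns{s} + \pi\}\{\ns{s}/n\} \inp \pi p(s)$. The main obstacle is the averaging factor, a law of large numbers taken over the \emph{random} count $\nas{1}{s}$. Since $\nas{1}{s}/n \inp \pi p(s) > 0$ we have $\nas{1}{s} \inp \infty$, but because this index is correlated with the design I cannot invoke the classical result directly. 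I would resolve this with the almost sure representation theorem: construct $\tildenas{1}{s}/n \equaldist \nas{1}{s}/n$ with $\tildenas{1}{s}/n \to \pi p(s)$ almost surely, condition on $\tildenas{1}{s}/n$, and apply the classical weak law along the almost surely divergent realized subsequence; the independence of the constructed counts from $\{F_i^s\}$ together with dominated convergence then upgrades the conditional statement to $\frac{1}{\nas{1}{s}}\sum_{i=1}^{\nas{1}{s}} F_i^s \inp E(F_i \mid S_i = s)$.

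Finally, Slutsky's theorem multiplies the two factors within each stratum, and summing over $\mathcal{S}$ yields $\sum_{s \in \mathcal{S}} \pi p(s)\, E(F_i \mid S_i = s) = \pi\, \E{f(W_i)}$ by the law of total expectation, which is the desired conclusion.
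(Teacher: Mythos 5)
Your proposal is correct and follows essentially the same route as the paper's own proof: the same distributional identity reducing $\frac{1}{n}\sum_i F_i A_i$ to stratum-wise partial sums of independently constructed i.i.d.\ copies $F_i^s$, the same factorization into $\nas{1}{s}/n$ times a randomly-indexed average, and the same use of the almost sure representation theorem with dominated convergence to handle the random summation limit. No substantive differences to report.
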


	\begin{lemma}
	\label{lemma:thm_1_M}
	Let $\Mx{1}$ and $\Mx{0}$ be defined as in (\ref{RZMx}). Then under assumptions of Theorem 1 
	 		\begin{equation}
		\begin{aligned}
			\frac{1}{\sqrt{n}} {\begin{pmatrix}
					\Mx{1} \\
					\Mx{0} \\
			\end{pmatrix}}
			\indist
			\pi(1-\pi)
			N\left({\begin{pmatrix}
					0 \\
					0 \\
			\end{pmatrix}}, {\begin{pmatrix}
					\px{1}^4 v^2_{1} & 0 \\
					0 & \px{0}^4 v^2_{0} \\
			\end{pmatrix}} \right),
		\end{aligned}
	\label{eq:t_M_dist}
	\end{equation}
	where $v^2_{0}$ and $v^2_{1}$ are defined as in Theorem 1.
	\begin{proof}
			 Let $\tildeYa{a} = \Ya - \Econd{\Ya}{S_i}$, $\maxZ{a}{x} =\Econd{\Ya}{Z_i} - \muax{a}{x}$, for  $a \in \{1,0\}$ and $x \in \{1,0\}$. Then 
		\begin{equation*}
			\begin{aligned}
				\Ya - \muax{a}{x} = \tildeYa{a} + \Econd{\maxZ{a}{x}}{S_i}.
			\end{aligned}
		\end{equation*}
        Notice that  
        {\allowdisplaybreaks
			\begin{align*}
				&\quad \frac{1}{\sqrt{n}}\Mx{1} \\
				&= \frac{\px{1}\pi(1-\pi)}{\sqrt{n}} \sum_{i=1}^{n}\Big[\frac{1}{\pi}\{\Yt - \muax{1}{1}\}A_i\ind{X_i = 1} - \frac{1}{1-\pi} \{\Yc - \muax{0}{1}\}(1-A_i)\ind{X_i = 1}\Big] \\
				&= \frac{\px{1}\pi(1-\pi)}{\sqrt{n}} \sum_{i=1}^{n}\Big\{\frac{1}{\pi}\tildeYa{1}  A_i\ind{X_i = 1} - \frac{1}{1-\pi} \tildeYa{0}(1-A_i)\ind{X_i = 1}\Big\} \\
				&\quad + \frac{\px{1}\pi(1-\pi)}{\sqrt{n}} \sum_{i=1}^{n}\Big[\frac{1}{\pi} \Econd{\maxZ{1}{1}}{S_i}A_i\ind{X_i = 1} - \frac{1}{1-\pi}  \Econd{\maxZ{0}{1}}{S_i}(1-A_i)\ind{X_i = 1}\Big] \\
				&= \px{1}\pi(1-\pi) \Ex{1} \\
                    &\quad + \frac{\px{1}\pi(1-\pi)}{\sqrt{n}} \sum_{i=1}^{n}A_i\Big[\frac{1}{\pi} \Econd{\maxZ{1}{1}}{S_i}\ind{X_i = 1} + \frac{1}{1-\pi}  \Econd{\maxZ{0}{1}}{S_i}\ind{X_i = 1}\Big] \\
				&\quad - \frac{\px{1}\pi(1-\pi)}{\sqrt{n}}\sum_{i=1}^{n}\frac{1}{1-\pi}  \Econd{\maxZ{0}{1}}{S_i}\ind{X_i = 1} \\
				&= \px{1}\pi(1-\pi) \Ex{1} \\
                    &\quad + \frac{\px{1}\pi(1-\pi)}{\sqrt{n}} \sum_{i=1}^{n}(A_i-\pi)\Big[\frac{1}{\pi} \Econd{\maxZ{1}{1}}{S_i}\ind{X_i = 1} + \frac{1}{1-\pi}  \Econd{\maxZ{0}{1}}{S_i}\ind{X_i = 1}\Big] \\
				&\quad + \frac{\px{1}\pi(1-\pi)}{\sqrt{n}} \sum_{i=1}^{n}\Big[\Econd{\maxZ{1}{1}}{S_i}\ind{X_i = 1} - \Econd{\maxZ{0}{1}}{S_i}\ind{X_i = 1}\Big] \\
				&= \px{1}\pi(1-\pi) \Ex{1}  + \frac{\px{1}\pi(1-\pi)}{\sqrt{n}} \sum_{i=1}^{n}(A_i-\pi)\Big[\frac{1}{\pi}\sum_{s \in \mathcal{S}} \Econd{\maxZ{1}{1}}{S_i=s}\ind{S_i = s, X_i = 1} \\
                &\hspace{2.4in} + \frac{1}{1-\pi}  \sum_{s \in \mathcal{S}}\Econd{\maxZ{0}{1}}{S_i=s}\ind{S_i = s, X_i = 1}\Big] \\
				&\quad + \frac{\px{1}\pi(1-\pi)}{\sqrt{n}} \sum_{i=1}^{n}\Big[\sum_{s \in \mathcal{S}}\Econd{\maxZ{1}{1}}{S_i=s}\ind{S_i = s, X_i = 1} \\
                &\hspace{1.4in} - \sum_{s \in \mathcal{S}}\Econd{\maxZ{0}{1}}{S_i=s}\ind{S_i = s, X_i = 1}\Big]\\
				&= \px{1}\pi(1-\pi) \Ex{1} + \frac{\px{1}\pi(1-\pi)}{\sqrt{n}} \sum_{i=1}^{n}(A_i-\pi)\Big[\frac{1}{\pi}\sum_{s,X(s)=1} \Econd{\maxZ{1}{1}}{S_i=s}\ind{S_i = s}\\
                &\hspace{2.4in}+ \frac{1}{1-\pi}  \sum_{s,X(s)=1}\Econd{\maxZ{0}{1}}{S_i=s}\ind{S_i = s}\Big] \\
				&\quad + \frac{\px{1}\pi(1-\pi)}{\sqrt{n}} \sum_{i=1}^{n}\Big[\sum_{s,X(s)=1}\Econd{\maxZ{1}{1}}{S_i=s}\ind{S_i = s} \\
                &\hspace{1.4in} - \sum_{s,X(s)=1}\Econd{\maxZ{0}{1}}{S_i=s}\ind{S_i = s}\Big] \\
				&= \px{1}\pi(1-\pi)(\Ex{1} + \Ux{1} + \Sx{1}),
			\end{align*}
   }
		and similarly, 
		\begin{equation*}
			\begin{aligned}
				\frac{1}{\sqrt{n}}\Mx{0} = \px{0}\pi(1-\pi)(\Ex{0} + \Ux{0} + \Sx{0}),
			\end{aligned}
		\end{equation*}
		where	
			\begin{equation}
		\begin{aligned}
			\Ex{x} &=\frac{1}{\sqrt{n}} \sum_{i=1}^{n}\Big\{\frac{1}{\pi}\tildeYa{1}  A_i\ind{X_i = x} - \frac{1}{1-\pi} \tildeYa{0}(1-A_i)\ind{X_i = x}\Big\}, \\	
			\Ux{x} &=  \sum_{s,X(s)=x}\frac{D_n(s)}{\sqrt{n}} \Big[\frac{1}{\pi} \Econd{\maxZ{1}{x}}{S_i = s} + \frac{1}{1-\pi} \Econd{\maxZ{0}{x}}{S_i = s}\Big], \\		
			\Sx{x} &=  \sum_{s,X(s)=x}\sqrt{n}\Big\{\frac{n(s)}{n} - p(s)\Big\} \Big[\Econd{\maxZ{1}{x}}{S_i = s}- \Econd{\maxZ{0}{x}}{S_i = s}\Big], \\
		\end{aligned}
	\label{EUSx}
	\end{equation}
	for $x \in \{1, 0\}$.
	
	To get (\ref{eq:t_M_dist}), by Cramer-Wold device, it suffices to check the following convergence: for all $c_1, c_0 \in \mathbb{R}$, 
			\begin{equation*}
		\begin{aligned}
			&\quad \frac{c_1}{\px{1}}\frac{1}{\sqrt{n}}\Mx{1} + \frac{c_0}{\px{0}}\frac{1}{\sqrt{n}}\Mx{0} \\ &=\pi(1-\pi) \big\{ (c_1 \Ex{1}+c_0  \Ex{0}) + (c_1  \Ux{1}+c_0  \Ux{0}) + (c_1  \Sx{1}+c_0  \Sx{0})\big\}\\
			&\indist\pi(1-\pi) N\Big(0, c^2_1  \big\{\zetaStrY{1} + \zetaStrA{1} + \zetaStrH{1}\big\} \\
            &\hspace{1.2in}+ c^2_0  \big\{\zetaStrY{0} + \zetaStrA{0} + \zetaStrH{0}\big\}\Big). \\
		\end{aligned}
	\end{equation*}
	The convergence can be obtained from Lemma \ref{lemma:t_joint} and the continuous mapping theorem.
	\end{proof}
	\end{lemma}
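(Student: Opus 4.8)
The plan is to decompose each normalized quantity $n^{-1/2}\Mx{x}$ into three pieces that carry distinct sources of randomness, reduce the bivariate weak convergence to a scalar statement by the Cram\'er--Wold device, and then obtain that scalar central limit theorem from the joint convergence of the pieces, which I would isolate as Lemma~\ref{lemma:t_joint}.

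First I would substitute the identity $\Ya-\muax{a}{x}=\tildeYa{a}+\Econd{\maxZ{a}{x}}{S_i}$ into the definition of $\Mx{x}$ in~(\ref{RZMx}). The residual part $\tildeYa{a}$ contributes the within-stratum term $\Ex{x}$. For the conditional-mean part I would write $A_i=(A_i-\pi)+\pi$: the centered factor $A_i-\pi$, regrouped stratum by stratum, produces the imbalance term $\Ux{x}$ expressed through $\Dns/\sqrt n$, while the $\pi$-weighted remainder collapses to the coefficient $\Econd{\maxZ{1}{x}}{S_i}-\Econd{\maxZ{0}{x}}{S_i}$ and, after subtracting a deterministic drift that vanishes because $\sum_{s,X(s)=x}\ps{s}\Econd{\maxZ{a}{x}}{S_i=s}=0$, gives the stratum-frequency term $\Sx{x}$ built from $\sqrt n\{\ns{s}/n-\ps{s}\}$. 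This yields $n^{-1/2}\Mx{x}=\px{x}\pi(1-\pi)(\Ex{x}+\Ux{x}+\Sx{x})$, with $\Ex{x},\Ux{x},\Sx{x}$ as in~(\ref{EUSx}).

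Next I would invoke the Cram\'er--Wold device, so that for arbitrary $c_1,c_0\in\mathbb{R}$ it suffices to show $\pi(1-\pi)\{(c_1\Ex{1}+c_0\Ex{0})+(c_1\Ux{1}+c_0\Ux{0})+(c_1\Sx{1}+c_0\Sx{0})\}$ is asymptotically normal with the correct variance; the continuous mapping theorem then recovers the claimed bivariate limit. The structural point that forces the limiting covariance matrix to be diagonal is Assumption~\ref{assumption:stratify}: since each stratum carries a single value $X(s)$, the index sets $\{s:X(s)=1\}$ and $\{s:X(s)=0\}$ are disjoint, so $\Mx{1}$ and $\Mx{0}$ are assembled from disjoint strata and hence disjoint units; every cross term therefore vanishes, leaving the diagonal entries $\px{1}^4 v_1^2$ and $\px{0}^4 v_0^2$.

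The crux is the joint convergence of $(\Ex{1},\Ex{0},\Ux{1},\Ux{0},\Sx{1},\Sx{0})$ to a centered Gaussian whose three blocks contribute the variances $\zetaStrY{x}$, $\zetaStrA{x}$, and $\zetaStrH{x}$. Here $\Ex{x}$ is, conditionally on $(S^{(n)},A^{(n)})$, a sum of independent mean-zero terms, since the residuals $\tildeYa{a}$ are centered within strata and, by Assumption~\ref{assumption:cond ind}, independent of the assignments given the strata; a conditional Lindeberg argument (in the spirit of Lemma~\ref{lemma:weak_law}) then delivers its Gaussian limit with variance $\zetaStrY{x}$. The term $\Ux{x}$ is a fixed linear functional of $\{\Dns/\sqrt n\}_{s}$, whose asymptotic normality with covariance $\Sigma_D$ is precisely Assumption~\ref{assumption:normal}, yielding variance $\zetaStrA{x}$; and $\Sx{x}$ is a smooth functional of the multinomial stratum counts $\ns{s}$, giving variance $\zetaStrH{x}$. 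The main obstacle is establishing that these three families are jointly Gaussian and mutually asymptotically independent: the decisive computation is $\Econd{\Ex{x}}{S^{(n)},A^{(n)}}=0$, which renders $\Ex{x}$ uncorrelated both with the assignment-driven $\Ux{x}$ and with the $S^{(n)}$-measurable $\Sx{x}$, while $\Ux{x}$ and $\Sx{x}$ decouple because the conditional imbalance has vanishing mean given $S^{(n)}$. Verifying the Lindeberg condition and the absence of limiting cross-covariances is where the technical effort concentrates.
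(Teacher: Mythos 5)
Your proposal is correct and follows essentially the same route as the paper: the identity $\Ya-\muax{a}{x}=\tildeYa{a}+\Econd{\maxZ{a}{x}}{S_i}$ together with the split $A_i=(A_i-\pi)+\pi$ yields exactly the decomposition $n^{-1/2}\Mx{x}=\px{x}\pi(1-\pi)(\Ex{x}+\Ux{x}+\Sx{x})$ of (\ref{EUSx}), after which the Cram\'er--Wold reduction and the appeal to Lemma \ref{lemma:t_joint} plus the continuous mapping theorem are precisely the paper's argument. One minor caution: asymptotic independence of $\Ex{c}$, $\Ux{c}$, $\Sx{c}$ in Lemma \ref{lemma:t_joint} is established there by a coupling construction and the conditional CLT of Assumption \ref{assumption:normal}, not merely by the vanishing conditional cross-covariances you cite, so that heuristic should not be mistaken for the full argument when the deferred lemma is actually proved.
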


	\begin{lemma}
		\label{lemma:t_joint}
		For all $c_1, c_0 \in \mathbb{R}$, let $\Ex{c} = c_1\Ex{1}+c_0\Ex{0}$, $\Ux{c} = c_1\Ux{1}+c_0\Ux{0}$, and $\Sx{c} = c_1\Sx{1}+c_0\Sx{0}$, where $\Ex{x}, \Ux{x}$, and $\Sx{x}$, $ x \in \{1, 0\}$, are defined as in (\ref{EUSx}). Then under assumptions of Theorem 1, 
					\begin{equation*}
			\begin{aligned}
				(\Ex{c}, \Ux{c}, \Sx{c}) 
				&\indist (\etax{c}, \upsilonx{c},  \xix{c}), \\				
			\end{aligned}
		\end{equation*}
	where $\etax{c}$, $\upsilonx{c}$, and $\xix{c}$ are independent and satisfy $\etax{c} \sim  N\big(0, c^2_1 \zetaStrY{1} + c^2_0 \zetaStrY{0}\big)$, $\upsilonx{c} \sim  N\big(0, c^2_1 \zetaStrA{1} + c^2_0 \zetaStrA{0}\big)$, and $\xix{c} \sim  N\big(0, c^2_1 \zetaStrH{1} + c^2_0 \zetaStrH{0}\big)$.
	
 \begin{proof}
		This proof is similar to the proof of Lemma B.2 in \cite{Bugni2018}.
		Suppose that independently for each $s \in \mathcal{S}$ and independently of $(A^{(n)}, S^{(n)})$, $\{(\tilde Y_i^s(1), \tilde Y_i^s(0) ): 1 \leq i \leq n\}$ are i.i.d. with marginal distribution equal to the distribution of $(\tilde Y_i(1), \tilde Y_i(0) )\mid S_i = s$. Define $N(s) = \sum_{i=1}^{n}\ind{S_i < s}$ and $F(s) = \Prob{S_i < s}$. Furthermore, define
		\begin{equation}
			\begin{aligned}
				\tilde{E}_{n,c} &= c_1\sum_{s, X(s) = 1} \left[\frac{1}{\sqrt{n}} \sum_{i=n \frac{N(s)}{n}+1}^{n\big\{\frac{N(s)}{n}+\frac{\nas{1}{s}}{n}\big\}} \frac{1}{\pi} \tildeYas{1}-\frac{1}{\sqrt{n}} \sum_{i=n\big\{\frac{N(s)}{n}+\frac{\nas{1}{s}}{n}\big\}+1}^{n\big\{\frac{N(s)}{n}+\frac{n(s)}{n}\big\}} \frac{1}{1-\pi} \tildeYas{0}\right] \\
				& \quad +  c_0 \sum_{s, X(s) = 0} \left[\frac{1}{\sqrt{n}} \sum_{i=n \frac{N(s)}{n}+1}^{n\big\{\frac{N(s)}{n}+\frac{\nas{1}{s}}{n}\big\}} \frac{1}{\pi} \tildeYas{1}-\frac{1}{\sqrt{n}} \sum_{i=n\big\{\frac{N(s)}{n}+\frac{\nas{1}{s}}{n}\big\}+1}^{n\big\{\frac{N(s)}{n}+\frac{n(s)}{n}\big\}} \frac{1}{1-\pi} \tildeYas{0}\right]. \\
			\end{aligned}
		\label{eq:tildeEc}
		\end{equation}
		By construction, $\{\Ex{c}\mid A^{(n)}, S^{(n)}\} \equaldist \{\tilde E_{n,c}\mid A^{(n)}, S^{(n)}\}$, and hence $\Ex{c} \equaldist \tilde E_{n,c}$. Since $\Ux{c}$ and $\Sx{c}$ are functions of $S^{(n)}$ and $A^{(n)}$, we have $(\Ex{c}, \Ux{c}, \Sx{c}) \equaldist (\tilde E_{n,c}, \Ux{c}, \Sx{c})$. Next, we define 	
		\begin{equation*}
			\begin{aligned}
				{E}^*_{n,c} &=  c_1 \sum_{s, X(s) = 1} \left(\frac{1}{\sqrt{n}} \sum_{i=\lfloor nF(s) \rfloor+1}^{\lfloor n\{F(s)+\pi p(s)\} \rfloor} \frac{1}{\pi} \tildeYas{1}-\frac{1}{\sqrt{n}} \sum_{i=\lfloor n\{F(s)+\pi p(s)\} \rfloor+1}^{\lfloor n\{F(s)+p(s)\} \rfloor} \frac{1}{1-\pi} \tildeYas{0}\right) \\
				& \quad +  c_0 \sum_{s, X(s) = 0} \left(\frac{1}{\sqrt{n}} \sum_{i=\lfloor nF(s) \rfloor+1}^{\lfloor n\{F(s)+\pi p(s)\} \rfloor} \frac{1}{\pi} \tildeYas{1}-\frac{1}{\sqrt{n}} \sum_{i=\lfloor n\{F(s)+\pi p(s)\} \rfloor+1}^{\lfloor n\{F(s)+p(s)\} \rfloor} \frac{1}{1-\pi} \tildeYas{0}\right). \\
			\end{aligned}
		\end{equation*}
		Note that ${E}^*_{n,c}$ is a function of $\{(\tilde Y_i^s(1), \tilde Y_i^s(0) ): 1 \leq i \leq n\}$, which is independent of $(A^{(n)}, S^{(n)})$. Moreover, by noticing $({U}_{n,c}, {S}_{n,c})$ is a function of $(A^{(n)}, S^{(n)})$, we can conclude that ${E}^*_{n,c} \Perp ({U}_{n,c}, {S}_{n,c})$.
		
		We next show
			\begin{equation}
			\begin{aligned}
				(\Ex{c}, \Ux{c}, \Sx{c}) \equaldist (E^*_{n,c}, \Ux{c}, \Sx{c}) + o_P(1).
			\end{aligned}
		\label{eq:t_approx}
		\end{equation}
	It suffices to show $\Delta_{n,c} =  \tilde{E}_{n,c} - {E}^*_{n,c} \inp 0$. For each $s \in \mathcal{S}$, define 
	\begin{equation*}
		\begin{aligned}
			g_n^s(u) = \frac{1}{\sqrt{n}} \sum_{i=1}^{\lfloor n u \rfloor} \tildeYas{1}.
		\end{aligned}
	\end{equation*}
	Since $\{(\tilde Y_i^s(1), \tilde Y_i^s(0) ): 1 \leq i \leq n\}$ are i.i.d. with mean $0$ and finite variance, $g_n^s(u)$ converges weakly to Brownian motion. Moreover, 
	\begin{equation*}
		\begin{aligned}
			\left\{ \frac{N(s)}{n}, \frac{\nas{1}{s}}{n}\right\} \inp \left\{ F(s), \pi p(s)\right\},
		\end{aligned}
	\end{equation*}
	we can conclude that 
		\begin{equation*}
		\begin{aligned}
			g_n^s\left(\frac{N(s)+\nas{1}{s}}{n}\right) - g_n^s\left(\frac{N(s)}{n}\right)-\Big\{g_n^s\big(F(s)+\pi p(s)\big)-g_n^s\big(F(s)\big)\Big\} \inp 0,
		\end{aligned}
	\end{equation*}
	where the convergence follows from elementary properties of Brownian motion and the continuous mapping theorem. 
	Similarly arguing for $\tildeYas{0}$, we can prove that $\Delta_{n,c} =  \tilde{E}_{n,c} - {E}^*_{n,c} \inp 0$.
	
	We proceed to show that ${E}^*_{n,c} \indist \etax{c}$. By elementary properties of Brownian motion, we have
	\begin{equation*}
		\begin{aligned}
			\frac{1}{\sqrt{n}} \sum_{i=\lfloor nF(s) \rfloor+1}^{\lfloor n\{F(s)+\pi p(s)\} \rfloor} \frac{1}{\pi} \tildeYas{1} \indist N\left(0, \frac{p(s)\sigma^2_{\tilde Y(1)}(s)}{\pi}\right),
		\end{aligned}
	\end{equation*}
	where we write $\Var{\tildeYas{1}} = \sigma^2_{\tilde Y(1)}(s)$.
	Repeating the similar steps for $\tildeYas{1}$ and $\tildeYas{0}$ for each $s \in \mathcal{S}$ and using the independence of $\{(\tilde Y_i^s(1), \tilde Y_i^s(0) ): 1 \leq i \leq n\}$ across both $i$ and $s$, we can conclude that $E^*_{n,c} \indist \etax{c}$.
	
	By Assumption 3
 and the continuous mapping theorem,
	\begin{equation}
		\begin{aligned}
			\{\Ux{c}\mid S^{(n)}\} \indist  N\big(0,c_1^2\zetaStrA{1}+ c_0^2\zetaStrA{0}\big) \text{ a.s. }
		\end{aligned}
	\label{eq:Uconvergent}
	\end{equation}
	By the central limit theorem and the continuous mapping theorem, 	
	\begin{equation}
		\begin{aligned}
			\Sx{c} \indist  N\big(0,c_1^2\zetaStrH{1} + c_0^2\zetaStrH{0}\big).
		\end{aligned}
		\label{eq:Sconvergent}
	\end{equation}
	We are left with showing $(\Ex{c}, \Ux{c}, \Sx{c}) \indist (\etax{c}, \upsilonx{c}, \xix{c})$,  where  $\etax{c} \equaldist N\big(0,c_1^2\zetaStrY{1} + c_0^2\zetaStrY{0}\big)$, $\upsilonx{c} \equaldist  N\big(0,c_1^2\zetaStrA{1} + c_0^2\zetaStrA{0}\big)$,  $\xix{c} \equaldist N\big(0,c_1^2\zetaStrH{1} + c_0^2\zetaStrH{0}\big)$, and that $\{\etax{c}, \upsilonx{c}, \xix{c}\}$ are independent. From (\ref{eq:t_approx}), it suffices to show $(E^*_{n,c}, \Ux{c}, \Sx{c}) \indist (\etax{c}, \upsilonx{c}, \xix{c})$, which is equivalent to 
	\begin{equation*}
		\begin{aligned}
		\Prob{E^*_{n,c} \leq h_1}\Prob{\Ux{c} \leq h_2, \Sx{c} \leq h_3}  {\rightarrow} \Prob{\etax{c} \leq h_1}\Prob{\upsilonx{c} \leq h_2} \Prob{\xix{c} \leq h_3},
		\end{aligned}
	\end{equation*}
	for any $h = (h_1, h_2, h_3) \in \mathbb{R}^3$ s.t. $\Prob{\eta_1 \leq h_1}\Prob{\upsilon_{1} \leq h_2} \Prob{\xi_{1} \leq h_3}$ is continuous.
	
	We first assume that $\Prob{\etax{c} \leq h_1}$, $\Prob{\upsilonx{c} \leq h_2}$, and $\Prob{\xix{c} \leq h_3}$ are continuous at $h_1$, $h_2$, and $h_3$, respectively. Then $E^*_{n,c} \indist\etax{c}$ implies $\Prob{E^*_{n,c}\leq h_1} \rightarrow \Prob{\etax{c} \leq h_1}$. Moreover,	
	\begin{equation*}
		\begin{aligned}
			&\quad \Prob{\Ux{c} \leq h_2,\Sx{c}  \leq h_3} \\
			&= E[\ProbB{\Ux{c} \leq h_2, \Sx{c} \leq h_3\mid S^{(n)}}] \\
			&= E[\ProbB{\Ux{c} \leq h_2\mid S^{(n)}}\ind{\Sx{c} \leq h_3}] \\
			&= E\big([\ProbB{\Ux{c} \leq h_2\mid S^{(n)}} - \Prob{\upsilonx{c} \leq h_2}]\ind{\Sx{c} \leq h_3}\big)  +  E\{\Prob{\upsilonx{c} \leq h_2}\ind{\Sx{c} \leq h_3}\} \\
			&= E\big([\ProbB{\Ux{c} \leq h_2\mid S^{(n)}} - \Prob{\upsilonx{c} \leq h_2}]\ind{\Sx{c} \leq h_3}\big) +   \Prob{\upsilonx{c} \leq h_2}\Prob{ \Sx{c} \leq h_3} \\
			&\rightarrow \Prob{\upsilonx{c} \leq h_2} \Prob{\xix{c} \leq h_3},
		\end{aligned}
	\end{equation*}
	where the convergence follows from the dominated convergence theorem, (\ref{eq:Uconvergent}), and (\ref{eq:Sconvergent}).
	
	If any of $\Prob{\etax{c} \leq h_1}$, $\Prob{\upsilonx{c} \leq h_2}$, $\Prob{\xix{c} \leq h_3}$ is discontinuous at $h_j$. Since $\etax{c}$, $\upsilonx{c}$ and $\xix{c}$ are all mean 0 normally distributed, any one of which distribution function is discontinuous at $h_j$, must be degenerate and equal to 0 and thus $h_j = 0$. Without loss of generality, we assume that $\Prob{\upsilonx{c} \leq h_2}$ is discontinuous at $h_2 = 0$ and $\upsilonx{c}$ is degenerate and equal to $0$. For example, in the stratified biased-coin design when $q(s)  = 0$ for each $s \in \mathcal{S}$. Then if $\Prob{\etax{c} \leq h_1}\Prob{\upsilonx{c} \leq h_2} \Prob{\xix{c} \leq h_3}$ is continuous at $(h_1, h_2, h_3) = (h_1, 0, h_3)$, $\Prob{\etax{c} \leq h_1}\Prob{\xix{c} \leq h_3}$ must be equal to 0, which implies that $\etax{c}$ or $\xix{c}$ is also degenerate and equal to $0$ and that its $h_k < 0$. Without loss of generality, we can assume that $\xix{c}$ is degenerate and $h_3 < 0$. It follows that 
	\begin{equation*}
		\begin{aligned}
			&\quad \Prob{E^*_{n,c} \leq h_1, \Ux{c} \leq h_2, \Sx{c} \leq h_3}\\ &\leq \Prob{\Sx{c} \leq h_3} \\
			& \rightarrow 0 \\
			&=  \Prob{\etax{c} \leq h_1}\Prob{\upsilonx{c} \leq h_2} \Prob{\xix{c} \leq h_3}.
		\end{aligned}
	\end{equation*}
     Thus we establish $(E^*_{n,c}, \Ux{c}, \Sx{c}) \indist (\etax{c}, \upsilonx{c}, \xix{c})$, and $(\Ex{c}, \Ux{c}, \Sx{c}) \indist (\etax{c}, \upsilonx{c}, \xix{c})$ follows.
		\end{proof}
		\end{lemma}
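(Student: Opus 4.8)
The plan is to establish the joint convergence by separating the three summands according to their sources of randomness: $\Ex{c}$ is driven by the residualized potential outcomes $\tildeYa{a}$, whereas $\Ux{c}$ and $\Sx{c}$ are measurable functions of the design and strata labels $(A^{(n)}, S^{(n)})$ alone. I would first argue that $\Ex{c}$ is asymptotically independent of $(\Ux{c}, \Sx{c})$ via a coupling construction, then show that $\Ux{c}$, governed by the within-stratum imbalances $\Dns$, and $\Sx{c}$, governed by the fluctuations of the stratum proportions $\ns{s}/n$, converge to mutually independent normals, and finally assemble the three limits into the claimed product structure.

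For the outcome term, following the coupling idea of \cite{Bugni2018}, I would draw, independently across $s \in \mathcal{S}$ and independently of $(A^{(n)}, S^{(n)})$, i.i.d.\ copies $\{(\tildeYas{1}, \tildeYas{0})\}$ with the conditional law of $(\tildeYa{1}, \tildeYa{0}) \mid S_i = s$. Writing $\Ex{c}$ as a block sum over these copies, indexed by the random partial counts $N(s)$ and $\nas{1}{s}$, gives a representation $\tilde E_{n,c}$ with $\Ex{c} \equaldist \tilde E_{n,c}$. I would then replace the random block endpoints by the deterministic ones $\lfloor nF(s)\rfloor$ and $\lfloor n\{F(s)+\pi p(s)\}\rfloor$, obtaining $E^*_{n,c}$, which is a function of the copies only and hence satisfies $E^*_{n,c} \Perp (\Ux{c}, \Sx{c})$. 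The key reduction is $\tilde E_{n,c} - E^*_{n,c} \inp 0$: since the block partial-sum process converges weakly to Brownian motion and the scaled counts satisfy $\{N(s)/n, \nas{1}{s}/n\} \inp \{F(s), \pi p(s)\}$, the increments over random and deterministic blocks coincide in the limit by the continuous mapping theorem. A blockwise central limit theorem, together with independence across strata, then yields $E^*_{n,c} \indist \etax{c}$.

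For the design terms, Assumption~\ref{assumption:normal} delivers the conditional statement $\{\Ux{c} \mid S^{(n)}\} \indist N(0, c_1^2\zetaStrA{1} + c_0^2\zetaStrA{0})$ almost surely, while the ordinary central limit theorem applied to the centred stratum counts in $(\ref{EUSx})$ gives $\Sx{c} \indist N(0, c_1^2\zetaStrH{1} + c_0^2\zetaStrH{0})$. To combine all three, I would condition on $S^{(n)}$: because $E^*_{n,c}$ is independent of the design, and the conditional limit of $\Ux{c}$ does not depend on the realized $S^{(n)}$ whereas $\Sx{c}$ is $S^{(n)}$-measurable, a dominated-convergence argument factorizes $\ProbB{\Ux{c}\le h_2, \Sx{c}\le h_3}$ into the product of the two marginal limits, and the independence of $E^*_{n,c}$ then factorizes the full joint CDF.

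The main obstacle is the treatment of degenerate limits. When the design achieves strong balance, so that $q(s)=0$ for all $s$, the variable $\upsilonx{c}$ collapses to a point mass at $0$ and its limiting distribution function is discontinuous at the origin; analogous degeneracies can arise for $\etax{c}$ or $\xix{c}$. At such points the clean factorization of the joint CDF can fail, so I would handle these cases by a squeeze argument: at any continuity point of the product CDF where one factor is degenerate, either the product vanishes and the joint probability is bounded above by a marginal probability that tends to $0$, or continuity forces the corresponding coordinate strictly below the support of its degenerate limit, again driving the relevant probability to the product value. Carefully enumerating which coordinate degenerates and verifying the squeeze in each configuration is the delicate part of the argument.
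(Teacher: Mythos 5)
Your proposal is correct and follows essentially the same route as the paper's proof: the same Bugni-style coupling with i.i.d.\ within-stratum copies, the same replacement of random block endpoints by deterministic ones via the Brownian-motion approximation, the same conditional-on-$S^{(n)}$ dominated-convergence factorization of the joint CDF, and the same case analysis for degenerate limits. No substantive differences to report.
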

	
		\begin{lemma}
		\label{lemma:thm_strata_joint}
		Under the assumptions of Theorem 4,
\begin{equation*}
	\begin{aligned}
	 {\begin{pmatrix}
				\Lx{1} + \Sx{1} \\
				\Lx{0} + \Sx{0} \\
		\end{pmatrix}}
		\indist
		N\left({\begin{pmatrix}
				0 \\
				0 \\
		\end{pmatrix}}, {\begin{pmatrix}
				p^2_{1}u^2_{1} & 0 \\
				0 & p^2_{0}u^2_{0}\\
		\end{pmatrix}} \right),
	\end{aligned}
\end{equation*}
where $u^2_{x}$, $\Lx{x}$, $\Sx{x}$, $ x \in \{1, 0\}$, are defined as in Theorem 4, (\ref{Lx}), and (\ref{EUSx}), respectively.
		\begin{proof}
			By Cramer-Wold device, it suffices to check that for all
			$c_1, c_0 \in \mathbb{R}$, 		
			\begin{equation*}
				\begin{aligned}
					&\quad c_1(\Lx{1} + \Sx{1}) + c_0(\Lx{0} + \Sx{0}) \\
					&\indist N\Big(0, c^2_1  \big\{\zetaStrY{1}  + \zetaStrH{1}\big\} + c^2_0  \big\{\zetaStrY{0} +  \zetaStrH{0}\big\}\Big). \\
				\end{aligned}
			\end{equation*}
			Let $\Lx{c} = c_1\Lx{1}+c_0\Lx{0}$ and $\Sx{c} = c_1\Sx{1}+c_0\Sx{0}$.
			We will show
			\begin{equation}
				\begin{aligned}
					(\Lx{c}, \Sx{c}) 
					&\indist (\etax{c},  \xix{c}), \\				
				\end{aligned}
			\label{eq:strata_LS}
			\end{equation}
			where $\etax{c}$ and $\xix{c}$ are independent and satisfy $\etax{c} \sim  N\big(0, c^2_1 \zetaStrY{1} + c^2_0 \zetaStrY{0}\big)$ and $\xix{c} \sim  N\big(0, c^2_1 \zetaStrH{1} + c^2_0 \zetaStrH{0}\big)$. Then by the  continuous mapping theorem, we finish the proof.
			
			Define
					\begin{equation*}
			\begin{aligned}
				\tilde{L}_{n,c} &= c_1 \sum_{s,X(s)=1}\left( \frac{1}{\sqrt{n}}\frac{n(s)}{\nas{1}{s}} \sum_{i=n \frac{N(s)}{n}+1}^{n\big\{\frac{N(s)}{n}+\frac{\nas{1}{s}}{n}\big\}}\tildeYas{1} -  \frac{1}{\sqrt{n}}\frac{n(s)}{\nas{0}{s}} \sum_{i=n\big\{\frac{N(s)}{n}+\frac{\nas{1}{s}}{n}\big\}+1}^{n\big\{\frac{N(s)}{n}+\frac{n(s)}{n}\big\}}\tildeYas{0}\right) \\
				&\quad + c_0 \sum_{s,X(s)=0}\left( \frac{1}{\sqrt{n}}\frac{n(s)}{\nas{1}{s}} \sum_{i=n \frac{N(s)}{n}+1}^{n\big\{\frac{N(s)}{n}+\frac{\nas{1}{s}}{n}\big\}}\tildeYas{1}  -  \frac{1}{\sqrt{n}}\frac{n(s)}{\nas{0}{s}} \sum_{i=n\big\{\frac{N(s)}{n}+\frac{\nas{1}{s}}{n}\big\}+1}^{n\big\{\frac{N(s)}{n}+\frac{n(s)}{n}\big\}}\tildeYas{0} \right)\\
				&= c_1 \sum_{s,X(s)=1}\left( \frac{1}{\sqrt{n}} \sum_{i=n \frac{N(s)}{n}+1}^{n\big\{\frac{N(s)}{n}+\frac{\nas{1}{s}}{n}\big\}}\frac{1}{\pi}\tildeYas{1} -  \frac{1}{\sqrt{n}} \sum_{i=n\big\{\frac{N(s)}{n}+\frac{\nas{1}{s}}{n}\big\}+1}^{n\big\{\frac{N(s)}{n}+\frac{n(s)}{n}\big\}}\frac{1}{1-\pi}\tildeYas{0}\right) \\
				&\quad + c_0 \sum_{s,X(s)=0}\left( \frac{1}{\sqrt{n}} \sum_{i=n \frac{N(s)}{n}+1}^{n\big\{\frac{N(s)}{n}+\frac{\nas{1}{s}}{n}\big\}}\frac{1}{\pi}\tildeYas{1} -  \frac{1}{\sqrt{n}} \sum_{i=n\big\{\frac{N(s)}{n}+\frac{\nas{1}{s}}{n}\big\}+1}^{n\big\{\frac{N(s)}{n}+\frac{n(s)}{n}\big\}}\frac{1}{1-\pi}\tildeYas{0}\right)  + o_P(1)\\
				&= \tilde{E}_{n,c} + o_P(1),
			\end{aligned}
		\end{equation*}
		where $\tilde{E}_{n,c}$ is defined as in (\ref{eq:tildeEc}).
		By construction, $\{\Lx{c}\mid A^{(n)}, S^{(n)}\} \equaldist \{\tilde L_{n,c}\mid A^{(n)}, S^{(n)}\}$, and hence $\Lx{c} \equaldist \tilde L_{n,c} $. Since $\Sx{c}$ is a function of $A^{(n)}$ and $S^{(n)}$, we have $(\Lx{c}, \Sx{c}) \equaldist (\tilde L_{n,c}, \Sx{c}) $. Moreover, 	$\tilde L_{n,c} = \tilde E_{n,c} + o_P(1) = E^*_{n,c} + o_P(1)$ and 
		$(\tilde L_{n,c}, \Sx{c}) = ( \tilde E_{n,c}, \Sx{c}) + o_P(1)= ( E^*_{n,c}, \Sx{c}) + o_P(1)$. Recall the proof of Lemma \ref{lemma:t_joint} and we can derive (\ref{eq:strata_LS}).
		\end{proof}
			\end{lemma}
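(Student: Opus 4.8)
The plan is to follow the Cram\'er--Wold route and reduce the bivariate statement to a one-dimensional CLT for an arbitrary linear combination, then recycle the machinery already built for Lemma \ref{lemma:t_joint}. First I would fix $c_1, c_0 \in \mathbb{R}$, set $\Lx{c} = c_1 \Lx{1} + c_0 \Lx{0}$ and $\Sx{c} = c_1 \Sx{1} + c_0 \Sx{0}$, and aim to show $(\Lx{c}, \Sx{c}) \indist (\etax{c}, \xix{c})$, where $\etax{c}$ and $\xix{c}$ are independent mean-zero normals with variances $c_1^2 \zetaStrY{1} + c_0^2 \zetaStrY{0}$ and $c_1^2 \zetaStrH{1} + c_0^2 \zetaStrH{0}$, respectively. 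Since $\Sx{c}$ depends only on $(A^{(n)}, S^{(n)})$, this joint convergence then yields the lemma after applying the continuous mapping theorem to the sum. The crucial simplification relative to Theorem \ref{thm:t} is the absence of the imbalance term $\Ux{c}$: the stratified weights $\ns{s}/\nas{1}{s}$ and $\ns{s}/\nas{0}{s}$ self-normalize the treated and control counts within each stratum, so the contribution of $\Dns$ cancels and no $\zetaStrA{x}$ term survives.

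The central step is a coupling argument. Following the construction in the proof of Lemma \ref{lemma:t_joint}, I would introduce independent copies $\tilde Y_i^s(a)$ of the within-stratum residuals and define $\tilde L_{n,c}$ by replacing the observed summands in $\Lx{c}$ with these copies, retaining the self-normalizing factors $\ns{s}/\nas{1}{s}$ and $\ns{s}/\nas{0}{s}$. By construction $\{\Lx{c} \mid A^{(n)}, S^{(n)}\} \equaldist \{\tilde L_{n,c} \mid A^{(n)}, S^{(n)}\}$, and because $\Sx{c}$ is a deterministic function of $(A^{(n)}, S^{(n)})$ this upgrades to the joint equality $(\Lx{c}, \Sx{c}) \equaldist (\tilde L_{n,c}, \Sx{c})$. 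I would then show $\tilde L_{n,c} = \tilde E_{n,c} + o_P(1)$, where $\tilde E_{n,c}$ is exactly the object in (\ref{eq:tildeEc}): the point is that $\ns{s}/\nas{1}{s} \inp 1/\pi$ and $\ns{s}/\nas{0}{s} \inp 1/(1-\pi)$ under Assumption \ref{assumption:normal_weaker} (via $\Dns/\ns{s} = o_P(1)$ together with Lemma \ref{lemma:weak_law}), so substituting the deterministic weights for the random ones costs only a negligible error. This is precisely where the weaker Assumption \ref{assumption:normal_weaker} suffices in place of Assumption \ref{assumption:normal}.

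Once this reduction is in hand, the proof of Lemma \ref{lemma:t_joint} supplies everything else: $\tilde E_{n,c} = E^*_{n,c} + o_P(1)$ with $E^*_{n,c}$ independent of $(A^{(n)}, S^{(n)})$, hence $E^*_{n,c} \Perp \Sx{c}$; moreover $E^*_{n,c} \indist \etax{c}$ and $\Sx{c} \indist \xix{c}$, and the joint limit $(E^*_{n,c}, \Sx{c}) \indist (\etax{c}, \xix{c})$ with independent coordinates was already established there, including the degenerate cases handled through the dominated-convergence argument. Assembling these gives $(\Lx{c}, \Sx{c}) \indist (\etax{c}, \xix{c})$ and therefore $\Lx{c} + \Sx{c} \indist N(0, c_1^2(\zetaStrY{1}+\zetaStrH{1}) + c_0^2(\zetaStrY{0}+\zetaStrH{0}))$. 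That this limiting variance carries no $c_1 c_0$ cross term, equivalently that the limiting covariance matrix is diagonal, is structural: the index sets $\{s: X(s)=1\}$ and $\{s: X(s)=0\}$ partition $\mathcal{S}$, so the blocks defining the two coordinates draw on disjoint, independent families of residuals $\tilde Y_i^s(a)$ and disjoint stratum-size fluctuations, leaving no asymptotic correlation.

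I expect the main obstacle to be the weight-substitution estimate $\tilde L_{n,c} - \tilde E_{n,c} = o_P(1)$, which asks one to control partial sums over a random number of i.i.d.\ terms while replacing random normalizations by constants; this is cleanest to argue through the Brownian-motion coupling already used in Lemma \ref{lemma:t_joint} rather than by a naive law-of-large-numbers bound. The remaining delicate point is the treatment of degenerate limits, for instance when $q(s) = 0$ forces a component to be a point mass, but since $\Sx{c}$ involves only the $\zetaStrH{x}$ variances and the $\Ux{c}$ term is gone entirely, this case is strictly simpler here than in Lemma \ref{lemma:t_joint}.
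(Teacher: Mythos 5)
Your proposal is correct and follows essentially the same route as the paper's proof: Cram\'er--Wold reduction, coupling $\Lx{c}$ to $\tilde L_{n,c}$ with independent within-stratum copies, replacing the self-normalizing weights $\ns{s}/\nas{a}{s}$ by $1/\pi$ and $1/(1-\pi)$ to land on $\tilde E_{n,c}$ from Lemma \ref{lemma:t_joint}, and then reusing that lemma's joint-convergence and independence machinery (including the degenerate cases). Your observations about why Assumption \ref{assumption:normal_weaker} suffices and why the limiting covariance is diagonal match the paper's reasoning.
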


			\begin{lemma}
			\label{lemma:thm_add_M}
			Let $\Mx{1}$ and $\Mx{0}$ be defined as in (\ref{RZMx}). Then under the assumptions of Theorem 6, 
	\begin{equation}
		\begin{aligned}
			\frac{1}{\sqrt{n}} {\begin{pmatrix}
					\Mx{1} \\
					\Mx{0} \\
			\end{pmatrix}}
			\indist
			\pi(1-\pi)
			N\left({\begin{pmatrix}
					0 \\
					0 \\
			\end{pmatrix}}, {\begin{pmatrix}
					\px{1}^4 w^2_{1} & \px{1}^2\px{0}^2w_{10} \\
					\px{1}^2\px{0}^2w_{10} & \px{0}^4 w^2_{0} \\
			\end{pmatrix}} \right),
		\end{aligned}
\label{eq:t_add_M_dist}
\end{equation}
where $w^2_{1}$, $w^2_{0}$, and $w_{10}$ are defined as in Theorem 6.

	\begin{proof}
	This proof is inspired by the transformed outcomes method used by \cite{ma2022}.
	For $a \in \{1, 0\}$ and $x \in \{1, 0\}$, let $\rax{a}{x} = \big(\Ya - \mu_{ax}\big)\ind{X_i = x}$, $\tilderax{a}{x} = \rax{a}{x} - \Econd{\rax{a}{x}}{ S_i}$, and $\raxZ{a}{x} =  \Econd{\rax{a}{x}}{ Z_i}$. Then 
	\begin{equation*}
		\begin{aligned}
			\{\Ya - \mu_{ax}\}\ind{X_i = x} = \tilderax{a}{x} + \Econd{\raxZ{a}{x}}{S_i}.
		\end{aligned}
	\end{equation*}		
				\begin{equation*}
					\begin{aligned}
						&\quad \frac{1}{\sqrt{n}}\Mx{1} \\
						&= \frac{\px{1}\pi(1-\pi)}{\sqrt{n}} \sum_{i=1}^{n}\Big[\frac{1}{\pi}\{\Yt - \muax{1}{1}\}A_i\ind{X_i = 1} - \frac{1}{1-\pi} \{\Yc - \muax{0}{1}\}(1-A_i)\ind{X_i = 1}\Big] \\
						&= \frac{\px{1}\pi(1-\pi)}{\sqrt{n}} \sum_{i=1}^{n}\Big\{\frac{1}{\pi}\tilderax{1}{1} A_i - \frac{1}{1-\pi} \tilderax{0}{1}(1-A_i)\Big\} \\
						&\quad + \frac{\px{1}\pi(1-\pi)}{\sqrt{n}} \sum_{i=1}^{n}\Big[\frac{1}{\pi} \Econd{\raxZ{1}{1}}{ S_i}A_i - \frac{1}{1-\pi}  \Econd{\raxZ{0}{1}}{ S_i}(1-A_i)\Big] \\
						&= \px{1}\pi(1-\pi)\Erx{1} + \frac{\px{1}\pi(1-\pi)}{\sqrt{n}} \sum_{i=1}^{n}(A_i-\pi)\Big[\frac{1}{\pi} \Econd{\raxZ{1}{1}}{ S_i} + \frac{1}{1-\pi}  \Econd{\raxZ{0}{1}}{ S_i}\Big] \\
						&\quad + \frac{\px{1}\pi(1-\pi)}{\sqrt{n}} \sum_{i=1}^{n}\Big[\Econd{\raxZ{1}{1}}{ S_i} - \Econd{\raxZ{0}{1}}{ S_i}\Big]\\
						&= \px{1}\pi(1-\pi)\Erx{1} + \frac{\px{1}\pi(1-\pi)}{\sqrt{n}} \sum_{i=1}^{n}(A_i-\pi)\Big[\frac{1}{\pi}\sum_{s \in \mathcal{S}} \Econd{\raxZ{1}{1}}{ S_i=s}\ind{S_i = s} \\
                        &\hspace{3in}
						 + \frac{1}{1-\pi}  \sum_{s \in \mathcal{S}}\Econd{\raxZ{0}{1}}{ S_i=s}\ind{S_i = s}\Big] \\
						&\quad + \frac{\px{1}\pi(1-\pi)}{\sqrt{n}} \sum_{i=1}^{n}\Big[\sum_{s \in \mathcal{S}}\Econd{\raxZ{1}{1}}{ S_i=s}\ind{S_i = s} - \sum_{s \in \mathcal{S}}\Econd{\raxZ{0}{1}}{ S_i=s}\ind{S_i = s}\Big] \\
						&=\px{1}\pi(1-\pi) (\Erx{1} +\Urx{1} + \Srx{1}),
					\end{aligned}
				\end{equation*}
					and similarly, 
			\begin{equation*}
				\begin{aligned}
					\frac{1}{\sqrt{n}}\Mx{0} = \px{0}\pi(1-\pi)(\Erx{0} +\Urx{0} + \Srx{0}),
				\end{aligned}
			\end{equation*}
		where
		\begin{equation}
			\begin{aligned}
				\Erx{x} &= \frac{1}{\sqrt{n}} \sum_{i=1}^{n}\Big\{\frac{1}{\pi}\tilderax{1}{x} A_i - \frac{1}{1-\pi} \tilderax{0}{x}(1-A_i)\Big\},\\				
				\Urx{x} &=  \sum_{s \in \mathcal{S}}\frac{D_{n}(s)}{\sqrt{n}} \Big[\frac{1}{\pi} \Econd{\raxZ{1}{x}}{S_i=s} + \frac{1}{1-\pi} \Econd{\raxZ{0}{x}}{S_i=s}\Big],\\		
				\Srx{x} &= \sum_{s \in \mathcal{S}}\sqrt{n}\Big\{\frac{n(s)}{n} - p(s)\Big\}\Big[\Econd{\raxZ{1}{x}}{S_i=s}- \Econd{\raxZ{0}{x}}{S_i=s}\Big],
			\end{aligned}
		\label{EUSrx}
		\end{equation}
	for $x \in \{1, 0\}$.
			To get (\ref{eq:t_add_M_dist}), by Cramer-Wold device, it suffices to check that
		for all $c_1, c_0 \in \mathbb{R}$, 
		\begin{equation*}
			\begin{aligned}
				&\quad \frac{c_1}{\px{1}}\frac{1}{\sqrt{n}}\Mx{1} + \frac{c_0}{\px{0}}\frac{1}{\sqrt{n}}\Mx{0}\\ &=\pi(1-\pi) \big\{ (c_1 \Erx{1}+c_0 \Erx{0}) + (c_1 \Urx{1}+c_0 \Urx{0}) + (c_1  \Srx{1}+c_0 \Srx{0})\big\}\\
				&\indist\pi(1-\pi) N\Big(0, c^2_1  \big\{\zetaAddY{1} + \zetaAddA{1} + \zetaAddH{1}\big\} + c^2_0  \big\{\zetaAddY{0} + \zetaAddA{0} + \zetaAddH{0}\big\} \\
                &\hspace{1.2in} + 2c_1c_0\big\{\zetaAddYxy{1}{0}+ \zetaAddAxy{1}{0}+ \zetaAddHxy{1}{0}\big\}\Big). \\
			\end{aligned}
		\end{equation*}
		The above convergence can be obtained from Lemma \ref{lemma:t_add_joint} and the continuous mapping theorem.
			\end{proof}
		\end{lemma}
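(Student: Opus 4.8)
The plan is to establish the bivariate central limit theorem for $\tfrac{1}{\sqrt{n}}(\Mx{1},\Mx{0})^{\mathrm{T}}$ by decomposing each coordinate into three asymptotically tractable pieces and then applying the Cramér--Wold device. First I would use the transformed-outcomes device of \cite{ma2022}: for $a\in\{1,0\}$ and $x\in\{1,0\}$ set $\rax{a}{x}=\{\Ya-\muax{a}{x}\}\ind{X_i=x}$ and write $\rax{a}{x}=\tilderax{a}{x}+\Econd{\raxZ{a}{x}}{S_i}$, where $\tilderax{a}{x}$ is the within-stratum centered residual. Substituting this into the definition of $\Mx{x}$ in (\ref{RZMx}) and splitting the assignment indicator as $A_i=(A_i-\pi)+\pi$, I would obtain the identity $\tfrac{1}{\sqrt{n}}\Mx{x}=\px{x}\pi(1-\pi)(\Erx{x}+\Urx{x}+\Srx{x})$, with $\Erx{x},\Urx{x},\Srx{x}$ as in (\ref{EUSrx}): $\Erx{x}$ collects the centered residuals weighted by treatment, $\Urx{x}$ is the imbalance term carrying $D_n(s)/\sqrt{n}$, and $\Srx{x}$ is the stratum-frequency fluctuation term carrying $\sqrt{n}\{n(s)/n-p(s)\}$.

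Second, by the Cramér--Wold device it suffices to prove a univariate CLT, for arbitrary $c_1,c_0\in\mathbb{R}$, for the linear combination $\tfrac{c_1}{\px{1}}\tfrac{1}{\sqrt{n}}\Mx{1}+\tfrac{c_0}{\px{0}}\tfrac{1}{\sqrt{n}}\Mx{0}=\pi(1-\pi)\{(c_1\Erx{1}+c_0\Erx{0})+(c_1\Urx{1}+c_0\Urx{0})+(c_1\Srx{1}+c_0\Srx{0})\}$. The limit of this sum is delivered by the joint convergence and mutual independence of the three aggregated components, which is the additive-covariate analogue of Lemma \ref{lemma:t_joint} (namely Lemma \ref{lemma:t_add_joint}); I would invoke it to conclude that the three pieces converge to independent mean-zero Gaussians with variances $c_1^2\zetaAddY{1}+c_0^2\zetaAddY{0}+2c_1c_0\zetaAddYxy{1}{0}$, the corresponding expression with $\zetaAddA{}$ and $\zetaAddAxy{1}{0}$, and the one with $\zetaAddH{}$ and $\zetaAddHxy{1}{0}$. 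Adding the three independent limits and reading off the quadratic form in $(c_1,c_0)$ identifies the limiting $2\times2$ covariance as the claimed matrix with entries $\px{1}^4 w^2_{1}$, $\px{0}^4 w^2_{0}$ and off-diagonal $\px{1}^2\px{0}^2 w_{10}$; the continuous mapping theorem then upgrades convergence of all linear combinations to the stated bivariate normal limit.

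The real work, and the main obstacle, lies in the joint-limit lemma rather than in this bookkeeping. Unlike the stratification-covariate setting of Theorem \ref{thm:t} and Lemma \ref{lemma:thm_1_M}, here $X_i$ is not a function of $S_i$, so a single stratum feeds both the $x=1$ and $x=0$ estimators; this is exactly what produces the nonzero off-diagonal $w_{10}$ through the cross-covariances $\zetaAddYxy{1}{0}$, $\zetaAddAxy{1}{0}$ and $\zetaAddHxy{1}{0}$. Establishing the joint limit requires (i) an almost-sure representation and Brownian-motion coupling to show that $c_1\Erx{1}+c_0\Erx{0}$ is asymptotically independent of the assignment- and stratum-driven blocks, leaning on Assumption \ref{assumption:cond ind}; (ii) the conditional CLT for the imbalances from Assumption \ref{assumption:normal} to obtain the limit of $c_1\Urx{1}+c_0\Urx{0}$ given $S^{(n)}$, which is where $q(s)$ and hence $\zetaAddAxy{1}{0}$ enters; and (iii) an ordinary CLT for $c_1\Srx{1}+c_0\Srx{0}$, a smooth function of stratum frequencies, yielding $\zetaAddHxy{1}{0}$. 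The delicate points will be verifying the mutual asymptotic independence of the three blocks, handling the degenerate cases where a limiting variance vanishes (for instance under strong balance $q(s)\equiv0$, so that the $\upsilon$-component collapses), and tracking the cross terms carefully so that the off-diagonal entry emerges precisely as $\px{1}^2\px{0}^2 w_{10}$.
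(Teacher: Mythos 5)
Your proposal is correct and follows essentially the same route as the paper: the transformed-outcomes decomposition $\frac{1}{\sqrt{n}}\Mx{x}=\px{x}\pi(1-\pi)(\Erx{x}+\Urx{x}+\Srx{x})$, the Cram\'er--Wold reduction, and the appeal to the joint-convergence lemma for the three independent Gaussian blocks are exactly the steps in the paper's proof, which likewise defers the coupling, conditional CLT, and independence arguments to Lemma \ref{lemma:t_add_joint}. Your identification of where the off-diagonal term $w_{10}$ arises and of the delicate degenerate cases matches the paper's treatment.
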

	
		\begin{lemma}
		\label{lemma:t_add_joint}
		For all $c_1, c_0 \in \mathbb{R}$, let $\Erx{c} = c_1\Erx{1}+c_0\Erx{0}$, $\Urx{c} = c_1\Urx{1}+c_0\Urx{0}$, and $\Srx{c} = c_1\Srx{1}+c_0\Srx{0}$, where $\Erx{x}, \Urx{x}$, and $\Srx{x}$, $ x \in \{1, 0\}$, are defined as in (\ref{EUSrx}). Then under the assumptions of Theorem 6, 
		\begin{equation}
			\begin{aligned}
				(\Erx{c}, \Urx{c}, \Srx{c}) 
				&\indist (\etarx{c}, \upsilonrx{c}, \xirx{c}), \\				
			\end{aligned}
		\end{equation}
		where $\etarx{c}$, $\upsilonrx{c}$, and $\xirx{c}$ are independent random variables satisfying $\etarx{c} \sim  N\big(0, c^2_1 \zetaAddY{1} + c^2_0 \zetaAddY{0} +2c_1c_0\zetaAddYxy{1}{0}\big)$, $\upsilonrx{c} \sim  N\big(0, c^2_1 \zetaAddA{1} + c^2_0 \zetaAddA{0}+2c_1c_0\zetaAddAxy{1}{0}\big)$, and $\xirx{c} \sim  N\big(0, c^2_1 \zetaAddH{1} + c^2_0 \zetaAddH{0}+2c_1c_0\zetaAddHxy{1}{0}\big)$.
  
		\begin{proof}
			Suppose that independently for each $s \in \mathcal{S}$ and independently of $(A^{(n)}, S^{(n)})$, $\{(\tilderaxs{1}{1}, \tilderaxs{0}{1},\\ \tilderaxs{1}{0},\tilderaxs{0}{0}): 1 \leq i \leq n\}$ are i.i.d. with marginal distribution equal to the distribution of $(\tilderax{1}{1}, \tilderax{0}{1}, \tilderax{1}{0},\tilderax{0}{0})\mid S_i = s$. Define $N(s) = \sum_{i=1}^{n}\ind{S_i < s}$ and $F(s) = \Prob{S_i < s}$. Furthermore, define		
					\begin{equation*}
				\begin{aligned}
					\tilde{E}_{n,r_c} 
					&= \sum_{s \in \mathcal{S}} \Bigg[\frac{1}{\sqrt{n}} \sum_{i=n \frac{N(s)}{n}+1}^{n\big\{\frac{N(s)}{n}+\frac{\nas{1}{s}}{n}\big\}} \frac{1}{\pi}  \Big\{c_1\tilderaxs{1}{1} + c_0\tilderaxs{1}{0}\Big\}\\
                    &\hspace{0.6in} -\frac{1}{\sqrt{n}} \sum_{i=n\big\{\frac{N(s)}{n}+\frac{\nas{1}{s}}{n}\big\}+1}^{n\big\{\frac{N(s)}{n}+\frac{n(s)}{n}\big\}} \frac{1}{1-\pi}\Big\{c_1\tilderaxs{0}{1} + c_0\tilderaxs{0}{0}\Big\}\Bigg] \\
				\end{aligned}
			\end{equation*}
		
				By construction, $\{\Erx{c}\mid A^{(n)}, S^{(n)}\} \equaldist \{\tilde{E}_{n,r_c}\mid A^{(n)}, S^{(n)}\}$, and hence $\Erx{c} \equaldist \tilde{E}_{n,r_c} $. Since $\Urx{c}$ and $\Srx{c}$ are functions of $A^{(n)}$ and $S^{(n)}$, we have $(\Erx{c},\Urx{c}, \Srx{c}) \equaldist (\tilde{E}_{n,r_c}, \Urx{c}, \Srx{c})$. Next, we define 
		\begin{equation*}
			\begin{aligned}
				{E}^*_{n,r_c} 
				&=  \sum_{s \in \mathcal{S}} \Bigg[\frac{1}{\sqrt{n}} \sum_{i=\lfloor nF(s) \rfloor+1}^{\lfloor n\{F(s)+\pi p(s) \} \rfloor} \frac{1}{\pi} \Big\{c_1\tilderaxs{1}{1} + c_0\tilderaxs{1}{0}\Big\}\\
                &\hspace{0.6in}-\frac{1}{\sqrt{n}}\sum_{i=\lfloor n\{F(s)+\pi p(s) \} \rfloor+1}^{\lfloor n\{F(s)+p(s)\} \rfloor} \frac{1}{1-\pi}\Big\{c_1\tilderaxs{0}{1} + c_0\tilderaxs{0}{0}\Big\}\Bigg]. \\
			\end{aligned}
		\end{equation*}
	
		Note that ${E}^*_{n,r_c} $ is a function of $\{(\tilderaxs{1}{1}, \tilderaxs{0}{1}, \tilderaxs{1}{0},\tilderaxs{0}{0}): 1 \leq i \leq n\}$, which is independent of $(A^{(n)}, S^{(n)})$. Moreover, by noticing $(\Urx{c}, \Srx{c})$ is a function of $(A^{(n)}, S^{(n)})$, we can conclude that ${E}^*_{n,r_c} \Perp (\Urx{c}, \Srx{c})$.
		
		We next show
		\begin{equation}
			\begin{aligned}
				(\Erx{c}, \Urx{c}, \Srx{c}) \equaldist ({E}^*_{n,r_c} , \Urx{c}, \Srx{c}) + o_P(1).
			\end{aligned}
			\label{eq:t_add_approx}
		\end{equation}
		It suffices to show $\Delta_{n,r_c} =  \tilde{E}_{n,r_c} - {E}^*_{n,r_c} \inp 0$, which can be deduced by a similar approach as employed in the proof of Lemma \ref{lemma:t_joint}. Moreover, we can prove	
			\begin{equation*}
			\begin{aligned}
				\frac{1}{\sqrt{n}} \sum_{i=\lfloor nF(s) \rfloor+1}^{\lfloor n\{F(s)+\pi p(s) \} \rfloor} \frac{1}{\pi}\Big\{c_1\tilderaxs{1}{1} + c_0\tilderaxs{1}{0}\Big\} \indist N\left(0, \frac{p(s)\sigma^2_{c_1\tilde r_{1}(1) + c_0\tilde r_{0}(1) }(s)}{\pi}\right).
			\end{aligned}
		\end{equation*}
		where we write $\Var{{c_1\tilderaxs{1}{1} + c_0\tilderaxs{1}{0}} } = \sigma^2_{c_1\tilde r_{1}(1) + c_0\tilde r_{0}(1) }(s)$.
	Repeating the similar steps for $\{c_1\tilderaxs{1}{1} + c_0\tilderaxs{1}{0}\}$ and $\{c_1\tilderaxs{0}{1} + c_0\tilderaxs{0}{0}\}$ for each $s \in \mathcal{S}$ and using the independence of $\{(\tilderaxs{1}{1}, \tilderaxs{0}{1}, \tilderaxs{1}{0},\tilderaxs{0}{0}): 1 \leq i \leq n\}$ across both $i$ and $s$, we obtain ${E}^*_{n,r_c} \indist  N\big(0,({1}/{\pi})\sigma^2_{c_1\tilde r_{1}(1) + c_0\tilde r_{0}(1) } + \{{1}/(1-\pi)\}\sigma^2_{c_1\tilde r_{1}(0) + c_0\tilde r_{0}(0) }\big)$. To see ${E}^*_{n,r_c} \indist \etarx{c}$, notice that
				\begin{equation*}
		\begin{aligned}
			&\quad \frac{1}{\pi}\sigma^2_{c_1\tilde r_{1}(1) + c_0\tilde r_{0}(1) } + \frac{1}{1-\pi}\sigma^2_{c_1\tilde r_{1}(0) + c_0\tilde r_{0}(0) }\\
			&= \frac{1}{\pi}\left[c_1^2\sigma^2_{\tilde{r}_{1}(1)} + c_0^2\sigma^2_{\tilde{r}_{0}(1)} + 2c_1c_0\E{\tilderax{1}{1}\tilderax{1}{0}}\right] \\
            &\quad + \frac{1}{1-\pi}\left[c_1^2\sigma^2_{\tilde{r}_{1}(0)} + c_0^2\sigma^2_{\tilde{r}_{0}(0)} + 2c_1c_0\E{\tilderax{0}{1}\tilderax{0}{0}}\right]\\
			&= c_1^2\zetaAddY{1} + c_0^2\zetaAddY{0} + 2c_1c_0\left[\frac{1}{\pi}\E{\tilderax{1}{1}\tilderax{1}{0}} + \frac{1}{1-\pi}\E{\tilderax{0}{1}\tilderax{0}{0}}\right]\\
			&= c_1^2\zetaAddY{1} + c_0^2\zetaAddY{0} + 2c_1c_0\zetaAddYxy{1}{0}, \\
		\end{aligned}
	\end{equation*}
	where the last equality follows from
		\begin{equation*}
		\begin{aligned}
			&\quad \E{\tilderax{1}{1}\tilderax{1}{0}} \\
			&= 	E\Big(\Big[\{\Yt - \muax{1}{1}\}\ind{X_i = 1} - \Econd{\raxZ{1}{1}}{S_i}\Big]\Big[\{\Yt - \muax{1}{0}\}\ind{X_i = 0} - \Econd{\raxZ{1}{0}}{S_i}\Big]\Big)\\
			&= E\Big[-\{\Yt - \muax{1}{1}\}\ind{X_i = 1}\Econd{\raxZ{1}{0}}{ S_i} - \{\Yt - \muax{1}{0}\}\ind{X_i = 0}\Econd{\raxZ{1}{1}}{ S_i}\\
            &\quad \qquad  + \Econd{\raxZ{1}{1}}{ S_i}\Econd{\raxZ{1}{0}}{ S_i}\Big] \\ 
			&= -E\Big[\Econd{\raxZ{1}{1}}{ S_i}\Econd{\raxZ{1}{0}}{ S_i}\Big] \\ 
			&= -\sum_{s \in \mathcal{S}}  p(s)\Econd{\raxZ{1}{1}}{ S_i = s}\Econd{\raxZ{1}{0}}{ S_i = s}
		\end{aligned}
	\end{equation*}
	and similarly
	\begin{equation*}
		\begin{aligned}
			&\quad \E{\tilderax{0}{1}\tilderax{0}{0}} = -\sum_{s \in \mathcal{S}}  p(s)\Econd{\raxZ{0}{1}}{ S_i = s}\Econd{\raxZ{0}{0}}{ S_i = s}.
		\end{aligned}
	\end{equation*}
	Moreover,
	\begin{equation*}
		\begin{aligned}
		\Urx{c} &=  \sum_{s \in \mathcal{S}}\frac{D_{n}(s)}{\sqrt{n}} \Big(c_1\Big[\frac{1}{\pi} \Econd{\raxZ{1}{1}}{ S_i=s} + \frac{1}{1-\pi} \Econd{\raxZ{0}{1}}{ S_i=s}\Big]\\
		 &\hspace{1in} + c_0\Big[\frac{1}{\pi} \Econd{\raxZ{1}{0}}{ S_i = s} + \frac{1}{1-\pi} \Econd{\raxZ{0}{0}}{ S_i = s}\Big] \Big).
		\end{aligned}
	\end{equation*}
	By Assumption 3
 and the continuous mapping theorem,	
	\begin{equation*}
		\begin{aligned}
			\{\Urx{c}\mid S^{(n)}\} \indist  N\big(0,c_1^2\zetaAddA{1}+ c_0^2\zetaAddA{0} + 2c_1c_0\zetaAddAxy{1}{0}\big) \text{ a.s.}
		\end{aligned}
	\end{equation*}
    Also, note that 
		\begin{equation*}
		\begin{aligned}
		\Srx{c} &= \sum_{s \in \mathcal{S}}\sqrt{n}\Big\{\frac{n(s)}{n} - p(s)\Big\} \Big(c_1\big[ \Econd{\raxZ{1}{1}}{ S_i=s}- \Econd{\raxZ{0}{1}}{ S_i=s}\big]\\  
		&+ c_0\big[\Econd{\raxZ{1}{0}}{ S_i = s}- \Econd{\raxZ{0}{0}}{ S_i = s}\big]\Big).
		\end{aligned}
	\end{equation*}
	Then, by the central limit theorem and the continuous mapping theorem, 
	\begin{equation*}
		\begin{aligned}
			\Srx{c} \indist  N(0, c_1^2\zetaAddH{1}+ c_0^2\zetaAddH{0} + 2c_1c_0\zetaAddHxy{1}{0}).
		\end{aligned}
	\end{equation*}
	
	We are left with showing $(\Erx{c}, \Urx{c}, \Srx{c}) \indist (\etarx{c}, \upsilonrx{c}, \xirx{c})$,  where  $\etarx{c} \equaldist N\big(0,c_1^2\zetaAddY{1} + c_0^2\zetaAddY{0}+ 2c_1c_0\zetaAddYxy{1}{0}\big)$, $\upsilonrx{c} \equaldist N\big(0,c_1^2\zetaAddA{1} + c_0^2\zetaAddA{0} + 2c_1c_0\zetaAddAxy{1}{0}\big)$, and $\xirx{c} \equaldist N\big(0,c_1^2\zetaAddH{1} + c_0^2\zetaAddH{0}+ 2c_1c_0\zetaAddHxy{1}{0}\big)$, and $\etarx{c}, \upsilonrx{c}, \xirx{c}$ are independent. This can be shown in the same way as that in the proof of Lemma \ref{lemma:t_joint}.
		\end{proof}
	\end{lemma}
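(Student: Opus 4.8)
The plan is to follow the strategy of Lemma~\ref{lemma:t_joint} essentially verbatim, the only genuinely new ingredient being that each stratum now contributes a \emph{vector} of transformed residuals $(\tilderax{1}{1},\tilderax{0}{1},\tilderax{1}{0},\tilderax{0}{0})$ rather than a scalar, so that the limiting laws acquire the cross-covariance terms $2c_1c_0\zetaAddYxy{1}{0}$, $2c_1c_0\zetaAddAxy{1}{0}$, and $2c_1c_0\zetaAddHxy{1}{0}$. First I would keep the decomposition into the three pieces $\Erx{c}$, $\Urx{c}$, and $\Srx{c}$, reading $\Erx{c}$ as the within-stratum noise, $\Urx{c}$ as the imbalance term driven by $D_n(s)/\sqrt n$, and $\Srx{c}$ as the stratum-size fluctuation term; the whole difficulty is to show that these three converge to \emph{independent} Gaussians with the stated variances.

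The key coupling step is to construct, independently across $s\in\mathcal S$ and independently of $(A^{(n)},S^{(n)})$, i.i.d.\ copies $(\tilderaxs{1}{1},\tilderaxs{0}{1},\tilderaxs{1}{0},\tilderaxs{0}{0})$ drawn from the conditional law given $S_i=s$. Then $\{\Erx{c}\mid A^{(n)},S^{(n)}\}\equaldist\{\tilde E_{n,r_c}\mid A^{(n)},S^{(n)}\}$ for the coupled version $\tilde E_{n,r_c}$, and since $\Urx{c},\Srx{c}$ are $\sigma(A^{(n)},S^{(n)})$-measurable the whole triple is preserved in distribution. Next I would replace the random summation limits $N(s),\nas{1}{s}$ by the deterministic limits $\lfloor nF(s)\rfloor$ and $\lfloor n\{F(s)+\pi p(s)\}\rfloor$, producing $E^*_{n,r_c}$; the error is $o_P(1)$ by the functional central limit theorem and continuity of Brownian motion, exactly as in Lemma~\ref{lemma:t_joint}. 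Crucially $E^*_{n,r_c}$ is a measurable function of the i.i.d.\ copies alone, hence $E^*_{n,r_c}\Perp(\Urx{c},\Srx{c})$ for every $n$.

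Given this decoupling, I would verify the three marginal limits separately: $E^*_{n,r_c}\indist\etarx{c}$ by a stratum-wise CLT, $\{\Urx{c}\mid S^{(n)}\}\indist\upsilonrx{c}$ a.s.\ by Assumption~\ref{assumption:normal} and the continuous mapping theorem, and $\Srx{c}\indist\xirx{c}$ by the ordinary CLT applied to the stratum proportions. The only new computation is the limiting variance of $E^*_{n,r_c}$, where the cross term forces me to evaluate $\E{\tilderax{1}{1}\tilderax{1}{0}}$ and $\E{\tilderax{0}{1}\tilderax{0}{0}}$. Here the essential observation is that $\ind{X_i=1}\ind{X_i=0}=0$, so the products of the raw centred outcomes vanish and only the products of the conditional projections survive, giving $\E{\tilderax{1}{1}\tilderax{1}{0}}=-\sum_{s}p(s)\Econd{\raxZ{1}{1}}{S_i=s}\Econd{\raxZ{1}{0}}{S_i=s}$; combining the $\pi$- and $(1-\pi)$-weighted pieces reproduces $2c_1c_0\zetaAddYxy{1}{0}$.

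Finally I would assemble joint independence. Since $E^*_{n,r_c}\Perp(\Urx{c},\Srx{c})$, it remains to show that $\Urx{c}$ and $\Srx{c}$ are asymptotically independent; conditioning on $S^{(n)}$ and writing $\Prob{\Urx{c}\le h_2,\Srx{c}\le h_3}=E[\ProbB{\Urx{c}\le h_2\mid S^{(n)}}\ind{\Srx{c}\le h_3}]$, the a.s.\ conditional convergence of $\Urx{c}$ and dominated convergence yield the product limit, as in Lemma~\ref{lemma:t_joint}. The degenerate cases (for instance $q(s)=0$ under strong balance, which collapses $\upsilonrx{c}$ to a point mass and makes its distribution function discontinuous at $0$) are handled by the same bounding argument used there. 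I expect the \textbf{main obstacle} to be carrying the vector cross-covariances cleanly through the coupling while keeping the disjoint-support cancellation transparent; the decoupling of $E^*_{n,r_c}$ from $(\Urx{c},\Srx{c})$ is conceptually the crux but is inherited essentially unchanged from the scalar argument.
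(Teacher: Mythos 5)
Your proposal is correct and follows the paper's own proof essentially step for step: the same coupling construction of i.i.d.\ within-stratum copies, the same replacement of random summation limits to obtain $E^*_{n,r_c}\Perp(\Urx{c},\Srx{c})$, the same disjoint-support observation $\ind{X_i=1}\ind{X_i=0}=0$ to reduce $\E{\tilderax{1}{1}\tilderax{1}{0}}$ to $-\sum_s p(s)\Econd{\raxZ{1}{1}}{S_i=s}\Econd{\raxZ{1}{0}}{S_i=s}$, and the same conditioning-plus-dominated-convergence argument (with the degenerate-case handling) to assemble joint independence. No gaps.
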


	\begin{lemma}
	\label{lemma:thm_strata_joint_add}
	Under the assumptions of Theorem 9, 
	\begin{equation}
		\begin{aligned}
			{\begin{pmatrix}
					\Lcheckx{1} + \Scheckx{1}\\
					\Lcheckx{0} + \Scheckx{0}\\
			\end{pmatrix}}
			\indist
			N\left({\begin{pmatrix}
					0 \\
					0 \\
			\end{pmatrix}}, {\begin{pmatrix}
					\px{1}^2 s^2_{1}& 0 \\
					0 & \px{0}^2 s^2_{0}\\
			\end{pmatrix}} \right),
		\end{aligned}
\end{equation}
where $\Lcheckx{x}, \Scheckx{x}$ , $ x \in \{1, 0\}$, are defined as in (\ref{LSyx}).
	\begin{proof}
		By Cramer-Wold device, it suffices to check that for all $c_1, c_0 \in \mathbb{R}$, 
		\begin{equation*}
			\begin{aligned}
				&\quad c_1(\Lcheckx{1} + \Scheckx{1}) + c_0(\Lcheckx{0} + \Scheckx{0}) \\
				&\indist N\Big(0, c^2_1  \big\{\zetaCheckY{1} + \zetaCheckH{1}\big\} +  c^2_0 \big\{\zetaCheckY{0} +  \zetaCheckH{0}\big\}\Big). \\
			\end{aligned}
		\end{equation*}
		Let $\Lcheckx{c}= c_1\Lcheckx{1}+c_0\Lcheckx{0}$ and $\Scheckx{c}= c_1\Scheckx{1}+c_0\Scheckx{0}$.
		We next show
		\begin{equation}
			\begin{aligned}
				(\Lcheckx{c}, \Scheckx{c}) 
				&\indist (\etacheckx{c},  \xicheckx{c}), \\			
			\end{aligned}
			\label{eq:strata_LS_add}
		\end{equation}
		where $\etacheckx{c}$ and $\xicheckx{c}$ are independent and satisfy $\etacheckx{c} \sim  N\big(0, c^2_1 \zetaCheckY{1} + c^2_0 \zetaCheckY{0}\big)$ and $\xicheckx{c} \sim  N\big(0, c^2_1 \zetaCheckH{1} + c^2_0 \zetaCheckH{0}\big)$; then by the continuous mapping theorem, we finish the proof.
		
		Suppose that independently for each $s \in \mathcal{S}$, $x \in \{1,0\}$  and indepdently of $(A^{(n)}, S^{(n)}, X^{(n)})$, $\{(\checkYaxs{1}{x}, \checkYaxs{0}{x}): 1 \leq i \leq n\}$ are i.i.d. with marginal distribution equal to the distribution of $(\check Y_i(1), \check Y_i(0))\mid S_i = s, X_i = x$. Define $N_{.x}(s) = \sum_{i=1}^{n} \ind{S_i < s, X_i = x}$ and $F_{.x}(s) = \Prob{S_i < s, X_i = x}$. Furthermore, define
		\begin{equation*}
			\begin{aligned}
				\tilde{L}_{n,Y^c} &= c_1 \sum_{s \in \mathcal{S}}\left( \frac{1}{\sqrt{n}}\frac{n_{.1}(s)}{\nax{1}{1}( s)} \sum_{i=n \frac{N_{.1}(s)}{n}+1}^{n\big\{\frac{N_{.1}(s)}{n}+\frac{\naxs{1}{1}{s}}{n}\big\}}\checkYaxs{1}{1} -  \frac{1}{\sqrt{n}}\frac{n_{.1}(s)}{\naxs{0}{1}{s}} \sum_{i=n\big\{\frac{N_{.1}(s)}{n}+\frac{\naxs{1}{1}{s}}{n}\big\}+1}^{n\big\{\frac{N_{.1}(s)}{n}+\frac{n_{.1}(s)}{n}\big\}}\checkYaxs{0}{1}\right) \\
				&\quad + c_0 \sum_{s \in \mathcal{S}}\left( \frac{1}{\sqrt{n}}\frac{n_{.0}(s)}{\nax{1}{0}( s)} \sum_{i=n \frac{N_{.0}(s)}{n}+1}^{n\big\{\frac{N_{.0}(s)}{n}+\frac{\naxs{1}{0}{s}}{n}\big\}}\checkYaxs{1}{0} -  \frac{1}{\sqrt{n}}\frac{n_{.0}(s)}{\naxs{0}{0}{s}} \sum_{i=n\big\{\frac{N_{.0}(s)}{n}+\frac{\naxs{1}{0}{s}}{n}\big\}+1}^{n\big\{\frac{N_{.0}(s)}{n}+\frac{n_{.0}(s)}{n}\big\}}\checkYaxs{0}{0}\right) \\
				&=c_1 \sum_{s \in \mathcal{S}}\left( \frac{1}{\sqrt{n}} \sum_{i=n \frac{N_{.1}(s)}{n}+1}^{n\big\{\frac{N_{.1}(s)}{n}+\frac{\naxs{1}{1}{s}}{n}\big\}}\frac{1}{\pi}\checkYaxs{1}{1} -  \frac{1}{\sqrt{n}} \sum_{i=n\big\{\frac{N_{.1}(s)}{n}+\frac{\naxs{1}{1}{s}}{n}\big\}+1}^{n\big\{\frac{N_{.1}(s)}{n}+\frac{n_{.1}(s)}{n}\big\}}\frac{1}{1-\pi}\checkYaxs{0}{1}\right) \\
				&\quad + c_0 \sum_{s \in \mathcal{S}}\left( \frac{1}{\sqrt{n}} \sum_{i=n \frac{N_{.0}(s)}{n}+1}^{n\big\{\frac{N_{.0}(s)}{n}+\frac{\naxs{1}{0}{s}}{n}\big\}}\frac{1}{\pi}\checkYaxs{1}{0} -  \frac{1}{\sqrt{n}} \sum_{i=n\big\{\frac{N_{.0}(s)}{n}+\frac{\naxs{1}{0}{s}}{n}\big\}+1}^{n\big\{\frac{N_{.0}(s)}{n}+\frac{n_{.0}(s)}{n}\big\}}\frac{1}{1-\pi}\checkYaxs{0}{0}\right) + o_P(1).\\
			\end{aligned}
		\end{equation*}
		By construction, $\{\Lcheckx{c} \mid A^{(n)}, S^{(n)}, X^{(n)}\} \equaldist \{\tilde{L}_{n,Y^c}\mid A^{(n)}, S^{(n)}, X^{(n)}\}$, and hence $\Lcheckx{c}\equaldist \tilde{L}_{n,Y^c} $. Since $\Scheckx{c}$ is a function of $S^{(n)}$ and $X^{(n)}$, we have $(\Lcheckx{c}, \Scheckx{c}) \equaldist (\tilde{L}_{n,Y^c}, \Scheckx{c}) $. Next, we define 
		\begin{equation*}
			\begin{aligned}
				{L}^*_{n,Y^c} &=  c_1 \sum_{s \in \mathcal{S}}\left( \frac{1}{\sqrt{n}} \sum_{i=\lfloor n F_{.1}(s)\rfloor +1}^{\lfloor n\{F_{.1}(s)+\pi p_{.1}(s)\}\rfloor }\frac{1}{\pi}\checkYaxs{1}{1} -  \frac{1}{\sqrt{n}} \sum_{i=\lfloor n\{F_{.1}(s)+\pi p_{.1}(s)\}\rfloor+1}^{\lfloor n\{F_{.1}(s)+ p_{.1}(s)\}\rfloor}\frac{1}{1-\pi}\checkYaxs{0}{1}\right) \\
				&\quad + c_0 \sum_{s \in \mathcal{S}}\left( \frac{1}{\sqrt{n}} \sum_{i=\lfloor n F_{.0}(s)\rfloor +1}^{\lfloor n\{F_{.0}(s)+\pi p_{.0}(s)\}\rfloor }\frac{1}{\pi}\checkYaxs{1}{0} -  \frac{1}{\sqrt{n}} \sum_{i=\lfloor n\{F_{.0}(s)+\pi p_{.0}(s)\}\rfloor+1}^{\lfloor n\left(F_{.0}(s)+ p_{.0}(s)\right)\rfloor}\frac{1}{1-\pi}\checkYaxs{0}{0}\right).
			\end{aligned}
		\end{equation*}
	
		Note that ${L}^*_{n,Y^c}$ is a function of $\{(\checkYaxs{1}{x}, \checkYaxs{0}{x}): 1 \leq i \leq n\}$, which is independent of $(A^{(n)}, S^{(n)}, X^{(n)})$. Moreover, by noticing $ \Scheckx{c}$ is a function of $(A^{(n)}, S^{(n)}, X^{(n)})$, we can conclude that ${L}^*_{n,Y^c} \Perp \Scheckx{c}$.
		
		We next show
		\begin{equation*}
			\begin{aligned}
				(\Lcheckx{c}, \Scheckx{c}) \equaldist ({L}^*_{n,Y^c}, \Scheckx{c}) + o_P(1).
			\end{aligned}
		\end{equation*}
		It suffices to show $\Delta_{n,Y^c} =  \tilde{L}_{n,Y^c} - {L}^*_{n,Y^c} \inp 0$. For each $s \in \mathcal{S}$ and $x \in \{1,0\}$, define 
		\begin{equation}
			\begin{aligned}
				g_n^{x,s}(u) = \frac{1}{\sqrt{n}} \sum_{i=1}^{\lfloor nu \rfloor} \checkYaxs{1}{x}.
			\end{aligned}
		\end{equation}
		Since $\{(\checkYaxs{1}{x}, \checkYaxs{0}{x}): 1 \leq i \leq n\}$ are i.i.d. with mean $0$ and finite variance, and by further noticing 	
		\begin{equation}
			\begin{aligned}
				\left( \frac{\nxs{x}{s}}{n}, \frac{\naxs{1}{x}{s}}{n}\right) \inp \left( F_{.x}(s), \pi \pxs{x}{s}\right),
			\end{aligned}
		\end{equation}
		we can conclude that 
		\begin{equation*}
			\begin{aligned}
				g_n^{x,s}\left(\frac{\nxs{x}{s}}{n}+\frac{\naxs{1}{x}{s}}{n}\right) - g_n^{x,s}\left(\frac{\nxs{x}{s}}{n}\right)-\Big(g_n^{x,s}\big(F_{.x}(s)+\pi \pxs{x}{s}\big)-g_n^{x,s}\big(F_{.x}(s)\big)\Big) \stackrel{P}{\rightarrow} 0,
			\end{aligned}
		\end{equation*}
		where the convergence follows from elementary properties of Brownian motion and the continuous mapping theorem. 
		Similarly arguing for $\checkYaxs{0}{x}$, we can prove that $\Delta_{n,Y^c} =  \tilde{L}_{n,Y^c} - {L}^*_{n,Y^c} \inp 0$.
		
		We proceed to show that ${L}^*_{n,Y^c} \indist \etacheckx{c}$. Under our assumptions, $g_n^{x,s}(\mu)$ converges weakly to Brownian motion. By elementary properties of Brownian motion, we have that 
		\begin{equation*}
			\begin{aligned}
				\frac{1}{\sqrt{n}} \sum_{i=\lfloor n F_{.x}(s)\rfloor +1}^{\lfloor n\{F_{.x}(s)+\pi \pxs{x}{s}\}\rfloor }\frac{1}{\pi}\checkYaxs{1}{x} \indist N\left(0, \frac{\pxs{x}{s}\sigma^2_{\check Y(1)}(x,s)}{\pi}\right),
			\end{aligned}
		\end{equation*}
		where we write $\Var{\checkYaxs{1}{x}} = \sigma^2_{\check Y(1)}(x, s)$.
		Repeating the similar steps for $\checkYaxs{1}{x}$ and $\checkYaxs{0}{x}$ for each $s \in \mathcal{S}$, $x \in \{1, 0\}$ and using the independence of $\{(\checkYaxs{1}{x}, \checkYaxs{0}{x}): 1 \leq i \leq n\}$ across both $i$ and $s$ and $x$, we can conclude that ${L}^*_{n,Y^c} \indist \etacheckx{c}$.
		
			By the central limit theorem and the continuous mapping theorem, 	
		\begin{equation*}
			\begin{aligned}
				\Scheckx{c}\indist  N(0,c_1^2\zetaCheckH{1} + c_0^2\zetaCheckH{0}).
			\end{aligned}
		\end{equation*}	
		We are left with showing $(\Lcheckx{c}, \Scheckx{c}) \indist (\etacheckx{c}, \xicheckx{c})$,  where  $\etacheckx{c} \equaldist N\big(0,c_1^2\zetaCheckY{1} + c_0^2\zetaCheckY{0}\big)$,  $\xicheckx{c} \equaldist N\big(0,c_1^2\zetaCheckH{1} + c_0^2\zetaCheckH{0}\big)$ and $\etacheckx{c}, \xicheckx{c}$ are independent.
		
		It suffices to show  $({L}^*_{n,Y^c}, \Scheckx{c}) \indist (\etacheckx{c}, \xicheckx{c})$, which immediately follows from ${L}^*_{n,Y^c} \Perp \Scheckx{c}$.
	\end{proof}
\end{lemma}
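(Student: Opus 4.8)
The plan is to reduce the bivariate claim to a family of scalar central limit theorems via the Cramér--Wold device. For arbitrary $c_1, c_0 \in \mathbb{R}$ I would set $\Lcheckx{c} = c_1\Lcheckx{1} + c_0\Lcheckx{0}$ and $\Scheckx{c} = c_1\Scheckx{1} + c_0\Scheckx{0}$ and aim to prove $(\Lcheckx{c}, \Scheckx{c}) \indist (\etacheckx{c}, \xicheckx{c})$, where $\etacheckx{c}$ and $\xicheckx{c}$ are independent mean-zero normals with variances $c_1^2\zetaCheckY{1} + c_0^2\zetaCheckY{0}$ and $c_1^2\zetaCheckH{1} + c_0^2\zetaCheckH{0}$; the continuous mapping theorem then yields the sum $\Lcheckx{c}+\Scheckx{c}$ with the diagonal limiting covariance. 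The diagonality is built in because the variance of this linear combination carries no $c_1c_0$ cross term: both $\Lcheckx{1},\Scheckx{1}$ and $\Lcheckx{0},\Scheckx{0}$ are supported on the disjoint cells $\{X_i=1\}$ and $\{X_i=0\}$, so any product of an $X_i=1$ contribution with an $X_i=0$ contribution vanishes identically. A key structural feature to exploit, by contrast with the treatment of $\Lx{c}$ in Lemma~\ref{lemma:t_joint}, is that no imbalance term $\upsilon$ survives here: the stratified-adjusted weights $\nxs{x}{s}/\naxs{a}{x}{s}$ exactly absorb the within-cell treated and control proportions, so under the weaker Assumption~\ref{assumption:normal_weaker} (where $\Dns/\ns{s}=o_P(1)$) only the pure-noise term $\Lcheckx{c}$ and the stratum-sampling term $\Scheckx{c}$ remain.

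The heart of the argument is a coupling-and-approximation scheme adapted from \cite{Bugni2018}. Independently across each cell $(s,x)$ and independently of $(A^{(n)}, S^{(n)}, X^{(n)})$, I would draw i.i.d. copies $(\checkYaxs{1}{x}, \checkYaxs{0}{x})$ with the conditional law of $(\check Y_i(1), \check Y_i(0)) \mid S_i = s, X_i = x$, and assemble their treated/control partial sums in cell order to form a statistic $\tilde L_{n,Y^c}$ with $\Lcheckx{c} \equaldist \tilde L_{n,Y^c}$; since $\Scheckx{c}$ is measurable with respect to $(S^{(n)}, X^{(n)})$, this distributional equality extends to the pair. Next I would replace the random summation limits (governed by the realized counts $\nxs{x}{s}$ and $\naxs{1}{x}{s}$) by the deterministic population limits $F_{.x}(s) = \Prob{S_i < s, X_i = x}$ and $\pi\pxs{x}{s}$, producing $L^*_{n,Y^c}$ that is \emph{independent of the design by construction}. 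Writing each cumulative sum as a process $g_n^{x,s}$ that converges weakly to Brownian motion, the difference $\tilde L_{n,Y^c} - L^*_{n,Y^c}$ is shown to be $o_P(1)$ by the continuous mapping theorem, provided the cell-level weak law $(\nxs{x}{s}/n, \naxs{1}{x}{s}/n) \inp (F_{.x}(s), \pi\pxs{x}{s})$ holds.

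That cell-level weak law is the step I expect to be the main obstacle, since only Assumption~\ref{assumption:normal_weaker} is in force. The marginal $\nxs{x}{s}/n \inp \pxs{x}{s}$ is the ordinary law of large numbers, but $\naxs{1}{x}{s}/n \inp \pi\pxs{x}{s}$ requires the treated fraction to converge to $\pi$ \emph{within each $(s,x)$ cell}. I would obtain this by factoring $\naxs{1}{x}{s}/n = (\naxs{1}{x}{s}/\nxs{x}{s})(\nxs{x}{s}/n)$ and arguing that, because Assumption~\ref{assumption:cond ind} makes $A^{(n)}$ conditionally independent of $W^{(n)}$—hence of $X^{(n)}$—given $S^{(n)}$, the within-stratum consistency $\Dns/\ns{s}=o_P(1)$ propagates to the finer partition by $X_i$, giving $\naxs{1}{x}{s}/\nxs{x}{s} \inp \pi$. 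This is essentially a refinement of Lemma~\ref{lemma:weak_law} to the $(S_i,X_i)$ cells, and the care needed to push conditional independence through the extra stratification by $X_i$ is the delicate point.

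With the approximation secured, the remaining pieces are routine. A cellwise Brownian-motion variance computation gives $L^*_{n,Y^c} \indist \etacheckx{c}$, with the variance collapsing to $c_1^2\zetaCheckY{1} + c_0^2\zetaCheckY{0}$ after substituting the decomposition for $\sigma^2_{\check{Y}(a)\mid X=x}$ and the definition of $\zetaCheckY{x}$; the ordinary multivariate central limit theorem applied to the centered cell frequencies gives $\Scheckx{c} \indist \xicheckx{c}$. Finally, because $L^*_{n,Y^c}$ is a function of the independently drawn copies while $\Scheckx{c}$ is a function of $(A^{(n)}, S^{(n)}, X^{(n)})$, we have $L^*_{n,Y^c} \Perp \Scheckx{c}$, so the joint convergence $(L^*_{n,Y^c}, \Scheckx{c}) \indist (\etacheckx{c}, \xicheckx{c})$ to \emph{independent} limits is immediate; transferring back through $\equaldist$ and the $o_P(1)$ approximation establishes the scalar convergence, and the Cramér--Wold device then delivers the stated bivariate normal limit.
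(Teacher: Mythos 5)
Your proposal is correct and follows essentially the same route as the paper's proof: Cramér--Wold reduction, the Bugni-style coupling with i.i.d.\ cell-level copies $(\checkYaxs{1}{x},\checkYaxs{0}{x})$, replacement of the random summation limits by deterministic ones via a Brownian-motion approximation, and independence of $L^*_{n,Y^c}$ from $\Scheckx{c}$ by construction. The one step you flag as the main obstacle --- the cell-level convergence $\naxs{1}{x}{s}/n \inp \pi\pxs{x}{s}$ under Assumption \ref{assumption:normal_weaker} --- is in fact immediate from Lemma \ref{lemma:weak_law} applied with $f(W_i)=\ind{S_i=s,\,X_i=x}$ (both $S_i$ and $X_i$ are measurable functions of $Z_i$), so your factoring argument, while workable, is not needed.
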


\begin{lemma}
	The asymptotic covariance matrices in Theorem 11
 and Theorem 14
 satisfy $\Sigma \succeq \Sigma_{strat}$. Thus, the stratified-adjusted test is asymptotically more powerful than the modified test.
	\label{prop:power}
\end{lemma}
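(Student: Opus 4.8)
The plan is to reduce the matrix inequality $\Sigma \succeq \Sigma_{strat}$ to two separate positive-semidefiniteness facts, using the variance decompositions already obtained in Lemmas~\ref{lemma:thm_add_M}, \ref{lemma:t_add_joint}, and \ref{lemma:thm_strata_joint_add}. Those lemmas write the limiting covariance of $\hat\tau$ as the sum of three mutually independent Gaussian contributions, so that $\Sigma = \Sigma_Y + \Sigma_A + \Sigma_H$, where $\Sigma_Y$, $\Sigma_A$, $\Sigma_H$ collect the $\zetaAddY{x}$, $\zetaAddA{x}$, and $\zetaAddH{x}$ terms (together with their cross-terms), while the stratified-adjusted estimator has only two contributions, $\Sigma_{strat} = \Sigma_{\check Y} + \Sigma_{\check H}$, collecting $\zetaCheckY{x}$ and $\zetaCheckH{x}$. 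Writing $\Sigma - \Sigma_{strat} = \Sigma_A + \{(\Sigma_Y + \Sigma_H) - (\Sigma_{\check Y} + \Sigma_{\check H})\}$, it then suffices to prove that each bracketed piece is positive semidefinite.

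First I would dispose of $\Sigma_A$. Its $(x,y)$ entry equals $(\px{x}\px{y})^{-1}\sum_{s}\ps{s}q(s)g_x(s)g_y(s)$, where $g_x(s) = \pi^{-1}\Econd{r_{1x}(Z_i)}{S_i=s} + (1-\pi)^{-1}\Econd{r_{0x}(Z_i)}{S_i=s}$. Hence $\Sigma_A = \sum_s \ps{s}q(s)\,h_s h_s^{\mathrm T}$ with $h_s = (g_x(s)/\px{x})_{x\in\mathcal X}$; since $\ps{s}q(s)\ge 0$ this is a nonnegative combination of rank-one matrices and is therefore positive semidefinite.

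The hard part is the second piece $\Delta := (\Sigma_Y + \Sigma_H) - (\Sigma_{\check Y} + \Sigma_{\check H})$, because the $Y$ and $H$ terms move in opposite directions: conditioning on $(S_i,X_i)$ rather than on $S_i$ alone decreases the residual ($Y$) part but increases the stratum-level ($H$) part, so neither dominates on its own, and in particular strong balance does not force equality. The resolution is to combine them. Using the tower identity $\Econd{m_{ax}(Z_i)}{S_i=s} = \{\pxs{x}{s}/\ps{s}\}\Econd{m_{ax}(Z_i)}{S_i=s,X_i=x}$ together with the variance identities (\ref{eq:sigmaSqtilder}) and (\ref{eq:sigmaSqcheckY}), the potential-outcome second moments cancel and both the diagonal and off-diagonal of $\Delta$ collapse to a single combination. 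Concretely, abbreviating $t_x(s) = \pxs{x}{s}/\ps{s} = \Prob{X_i=x\mid S_i=s}$ and $c_x(s) = \sqrt{(1-\pi)/\pi}\,\Econd{m_{1x}(Z_i)}{S_i=s,X_i=x} + \sqrt{\pi/(1-\pi)}\,\Econd{m_{0x}(Z_i)}{S_i=s,X_i=x}$, I expect $\Delta = \sum_s \ps{s}\,M^{(s)}$ with $M^{(s)} = \operatorname{diag}(w_x(s)^2/t_x(s)) - w(s)w(s)^{\mathrm T}$ and $w_x(s) = c_x(s)t_x(s)/\px{x}$.

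Each $M^{(s)}$ has the form $D - ww^{\mathrm T}$, and for any $v$ the quadratic form is $v^{\mathrm T}M^{(s)}v = \sum_x \{w_x^2/t_x\}v_x^2 - (\sum_x w_x v_x)^2 \ge 0$ by the Cauchy–Schwarz inequality, using crucially that $\sum_{x\in\mathcal X} t_x(s) = 1$ (combinations with $t_x(s)=0$ are dropped and contribute nothing). Thus $M^{(s)}\succeq 0$ for every $s$, so $\Delta \succeq 0$ and therefore $\Sigma \succeq \Sigma_{strat}$. For the power statement I would pass to local alternatives $\tau = \tau_0 + h/\sqrt n$: under each test the Wald statistic converges to a noncentral $\chi^2_K$ with noncentrality $(Rh)^{\mathrm T}(R\Sigma_\bullet R^{\mathrm T})^{-1}(Rh)$, and since $R\Sigma_{strat}R^{\mathrm T}\preceq R\Sigma R^{\mathrm T}$ implies $(R\Sigma_{strat}R^{\mathrm T})^{-1}\succeq (R\Sigma R^{\mathrm T})^{-1}$, the stratified-adjusted noncentrality is at least as large; as the power of a noncentral $\chi^2$ test is monotone in its noncentrality parameter, the stratified-adjusted test is asymptotically at least as powerful. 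The main obstacle throughout is the combined $Y$/$H$ step, whose sign is only recovered after the cancellation that produces the $\operatorname{diag} - ww^{\mathrm T}$ structure.
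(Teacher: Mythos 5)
Your proof is correct and takes essentially the same route as the paper: the paper organizes the algebra by first computing $\Sigma_{hc}-\Sigma$ and $\Sigma_{hc}-\Sigma_{strat}$ (with $\Sigma_{hc}$ the usual test's diagonal variance) and subtracting, but the resulting decomposition of $\Sigma-\Sigma_{strat}$ into a $q(s)$-weighted sum of rank-one matrices plus $\sum_{s}p(s)\{\operatorname{diag}(\cdot)-v_sv_s^{\mathrm T}\}$ is exactly your $\Sigma_A+\Delta$ (your identity for $\Delta$ does check out), and your Cauchy--Schwarz step exploiting $\sum_{x}\pxs{x}{s}/p(s)=1$ is precisely the paper's Jensen argument. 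Your local-alternative/noncentrality monotonicity argument for the power claim is a correct elaboration of what the paper leaves implicit.
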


\begin{proof}
	First, 
	\begin{equation}
		\begin{aligned}
			&\quad{\Sigma}_{hc} - {\Sigma}_{strat} \\
			&= \pi(1-\pi)\sum_{s \in \mathcal{S}}p(s)\text{diag}\Big(\frac{1}{p^2_x}\frac{\pxs{x}{s}}{\ps{s}} \big[\frac{1}{\pi} \Econd{\maxZ{1}{x}}{S_i = s, X_i = x} \\
            &\hspace{1.8in}+ \frac{1}{1-\pi} \Econd{\maxZ{0}{x}}{S_i = s, X_i = x}\big]^2: x \in \mathcal{X}\Big), \\
		\end{aligned}
		\label{eq:sigma_strata}
	\end{equation}
and (\ref{eq:sigma_mod}) shows
	\begin{equation}
	\begin{aligned}
		&\quad {\Sigma}_{hc} - \Sigma \\
             &= 
			\sum_{s \in \mathcal{S}}p(s)\{\pi(1-\pi)-q(s)\} \Big(\frac{1}{\px{x}}\big[\frac{1}{\pi} \Econd{\raxZ{1}{x}}{S_i = s} + \frac{1}{1-\pi} \Econd{\raxZ{0}{x}}{S_i = s}\big]: x \in \mathcal{X}\Big)\\
			&\qquad \Big(\frac{1}{\px{x}}\big[\frac{1}{\pi} \Econd{\raxZ{1}{x}}{S_i = s} + \frac{1}{1-\pi} \Econd{\raxZ{0}{x}}{S_i = s}\big]: x \in \mathcal{X}\Big)^{\mathrm{T}} \\
        &= 
		\sum_{s \in \mathcal{S}}p(s)\{\pi(1-\pi)-q(s)\} \Big(\frac{1}{p_x}\frac{\pxs{x}{s}}{\ps{s}}\big[\frac{1}{\pi} \Econd{\maxZ{1}{x}}{S_i = s, X_i = x} \\
        &\hspace{2.1in}+ \frac{1}{1-\pi}\Econd{\maxZ{0}{x}}{S_i = s, X_i = x}\big]: x \in \mathcal{X}\Big)\\
		&\qquad \Big(\frac{1}{p_x}\frac{\pxs{x}{s}}{\ps{s}}\big[\frac{1}{\pi} \Econd{\maxZ{1}{x}}{S_i = s, X_i = x}  + \frac{1}{1-\pi} \Econd{\maxZ{0}{x}}{S_i = s, X_i = x}\big]: x \in \mathcal{X}\Big)^{\mathrm{T}}. \\
	\end{aligned}
\label{eq:sigma_mod2}
\end{equation}
Subtracting (\ref{eq:sigma_mod2}) from (\ref{eq:sigma_strata}) yields
	\begin{equation}
	\begin{aligned}
		&\quad{\Sigma} - {\Sigma}_{strat} \\
		&= \sum_{s \in \mathcal{S}}p(s)q(s) \Big(\frac{1}{p_x}\frac{\pxs{x}{s}}{\ps{s}}\big[\frac{1}{\pi} \Econd{\maxZ{1}{x}}{S_i = s, X_i = x} \\
        &\hspace{1.2in}+ \frac{1}{1-\pi}\Econd{\maxZ{0}{x}}{S_i = s, X_i = x}\big]: x \in \mathcal{X}\Big)\\
		&\qquad \Big(\frac{1}{p_x}\frac{\pxs{x}{s}}{\ps{s}}\big[\frac{1}{\pi} \Econd{\maxZ{1}{x}}{S_i = s, X_i = x} + \frac{1}{1-\pi} \Econd{\maxZ{0}{x}}{S_i = s, X_i = x}\big]: x \in \mathcal{X}\Big)^{\mathrm{T}} \\
		&\quad + \pi(1-\pi)\sum_{s \in \mathcal{S}}p(s)\Bigg\{\text{diag}\Big(\frac{1}{p^2_x} \frac{\pxs{x}{s}}{\ps{s}} \big[\frac{1}{\pi} \Econd{\maxZ{1}{x}}{S_i = s, X_i = x} \\
        &\hspace{2in}+ \frac{1}{1-\pi} \Econd{\maxZ{0}{x}}{S_i = s, X_i = x}\big]^2: x \in \mathcal{X}\Big) \\
		&\qquad -
		\Big(\frac{1}{\px{x}}\frac{\pxs{x}{s}}{\ps{s}}\big[\frac{1}{\pi} \Econd{\maxZ{1}{x}}{S_i = s, X_i = x} + \frac{1}{1-\pi}\Econd{\maxZ{0}{x}}{S_i = s, X_i = x}\big]: x \in \mathcal{X}\Big)\\
		&\quad \qquad \Big(\frac{1}{\px{x}}\frac{\pxs{x}{s}}{\ps{s}}\big[\frac{1}{\pi} \Econd{\maxZ{1}{x}}{S_i = s, X_i = x} + \frac{1}{1-\pi} \Econd{\maxZ{0}{x}}{S_i = s, X_i = x}\big]: x \in \mathcal{X}\Big)^{\mathrm{T}} \Bigg\} \\
	\end{aligned}
	\label{eq:cov_diff}
\end{equation}
is positive semi-definite because both two terms in (\ref{eq:cov_diff}) are positive semi-definite. By noticing that $\sum_{x \in \mathcal{X}} \{\pxs{x}{s} / p(s)\} = 1$, the positive semi-definiteness of the second term can be verified using Jensen's inequality by checking the definition of the positive semi-definite matrix.
\end{proof}

\section{Additional Simulation Results}
In this section, we present simulations in the scenarios where the categorical covariate $X$ with more than two levels is tested for interaction. We examine the type \RNum{1} error and power of tests for all three interaction tests, i.e.,  the usual test, modified test, and stratified-adjusted test. We consider three randomization methods: simple randomization (SR), stratified block randomization (SBR), and stratified biased-coin design (SBCD). The block size is 6 and the biased-coin probability is  0.75. The nominal level $\alpha$ is $5\%$, the  number of units $n$ is 800, and simulations are based on 10,000 runs.

Model 1: Linear model: 	
$$
\begin{cases}
	\Yt = \mu_1 +(\beta + \delta)^{\mathrm{T}}X_{i}  + \alpha_{1}W^*_{i} + \gamma^{\mathrm{T}} W^*_{i}X_{i} + \sigma_1\epsilon_{1,i},\\
	\Yc  = \mu_0 + \beta^{\mathrm{T}}X_{i} + \alpha_{1}W^*_{i} + \sigma_0 \epsilon_{0,i},\\
\end{cases}
$$
where $X$ is a discrete variable with three possible values $(0, 0)^{\mathrm{T}}$, $(1, 0)^{\mathrm{T}}$, or $(0, 1)^{\mathrm{T}}$ with equal probabilities; $W^* \sim N(0, 3^2)$ independent of $X$ is categorized to Bernoulli random variable  $W = \ind{W^* > 0}$. It is assumed $\mu_1 = 4$, $\mu_0 = 1$, $\beta =(3, 2)^{\mathrm{T}}$ , $\alpha_{1} = -2$, $\gamma = (4, 3)^{\mathrm{T}}$, $\sigma_1 = 1$, $\sigma_0 = 0.5$, $ \epsilon_{1,i} \sim N(0, 1)$  and $ \epsilon_{0,i} \sim N(0, 1)$. $\delta$ equals $(0, 0)^{\mathrm{T}}$ (null) or $(1, 2)^{\mathrm{T}}$ (alternative).

Model 2: Nonlinear model: 
$$
\begin{cases}
	\Yt = \mu_1 + \exp\big\{(\beta_{1} + \delta_{1})X^*_{i}\big\}   + \alpha_{1}W^*_{i}+ \gamma_{1}X^*_{i}W^*_{i}+  \sigma_1(X^*_{i})\epsilon_{1,i}, \\
	\Yc  = \mu_0 +\exp\big(\beta_{1} X^*_{i}\big) + \alpha_{1}W^*_{i} + \sigma_0(X^*_{i})\epsilon_{0,i}, \\
\end{cases}
$$
where $X^* \sim \text{Unif}(-1, 2)$ is categorized to $X$ that equals 0, 1, or 2 when $X^*$ falls in $\left[-1, 0\right)$, $\left[0, 1\right)$, or $\left[1, 2\right]$, respectively; $W^* \sim N(0, 2^2)$ independent of $X^*$ is categorized to $W = \ind{W^* > 0}$. It is assumed $\mu_1 = 5$, $\mu_0 = 4$, $\beta_{1} =0.5$, $\alpha_{1} = 2$, $\gamma_{1} = 6$, $\sigma_1(X^*_{i}) = \exp(0.5X^*_{i})$, $\sigma_0(X^*_{i}) = 0.5\exp(0.5X^*_{i})$, $ \epsilon_{1,i} \sim N(0, 1)$,  and $ \epsilon_{0,i} \sim N(0, 1)$. $\delta_1$ equals $0$ (null) or $0.4$ (alternative).

Model 3: Binary outcomes model: 
$$
\begin{cases}
	\Yt = I\big\{\mu_1 + (\beta_{1} + \delta_{1})(X_i-1/2)+ \alpha_{1}W_{i} + \gamma_{1}X_{i}W_{i} > U_{1,i}\big\}, \\
	\Yc = I\big\{\mu_0 + \beta_{1} (X_i-1/2) +\alpha_{1}W_{i}  > U_{0,i}\big\}, \\
\end{cases}
$$
where $X$ is a discrete variable with three possible values $0$, $0.5$, or $1$ with equal probabilities; $W$ equals $1$ or $-1$ with equal probability. It is assumed $\mu_1 = \mu_0 = 4$, $\beta_{1} =1$ , $\alpha_{1} = -3$, $\gamma_{1} = 6$, $U_{1,i} \sim \text{Unif}(0, 10)$, and $U_{0,i} \sim \text{Unif}(0, 10)$. $\delta_1$ equals $0$ (null) or $1.5$ (alternative).
~\\

Tables \ref{tab:simulation_equal_multi} and \ref{tab:simulation_unequal_multi}
present the empirical type \RNum{1} error and power of tests under equal and unequal allocation when $X$ is categorical with more than two levels.

\begin{table}[]
\caption{Rejection probabilities (in percentage points) for three interaction tests (on categorical $X$), stratification mechanisms, and randomization methods under equal allocation, $\pi = 1/2$, and sample size $n = 800$}{
\tabcolsep=4.25pt
\renewcommand\arraystretch{0.5}
\small
\begin{tabular}{cccccccc}
\hline
              &           & \multicolumn{2}{c}{Stratification covariates} &                      & \multicolumn{3}{c}{Randomization methods}                        \\ \cline{3-4} \cline{6-8} \multicolumn{1}{l}{} &\multicolumn{1}{c}{Model} & $X$ stratified         & $S(\cdot)$             &                      & SR                & SBR                  & SBCD                  \\ \hline
$H_0$         & 1         & Yes                  & $X$                    &                      & 4.7 / 4.8 / 4.8 & 5.3 / 5.3 / 5.3 & 5.4 / 5.4 / 5.4 \\ 
              &           &                      & $X \times W$         &                      & 5.4 / 5.5 / 5.7 & 2.5 / 5.6 / 5.4 & 2.4 / 5.5 / 5.3 \\ 
              &           & No                   & --                     &                      & 5.4 / 5.4 / 5.4 & 5.6 / 5.6 / 5.6 & 5.3 / 5.4 / 5.4 \\
              &           &                      & $W$                  &                      & 5.1 / 5.1 / 5.5 & 4.5 / 5.5 / 5.5 & 4.6 / 5.6 / 5.7 \\ 
              &           &                      &                        & \multicolumn{1}{l}{} &                    & \multicolumn{1}{l}{} & \multicolumn{1}{l}{} \\
              & 2         & Yes                  & $X$                    &                      & 5.3 / 5.4 / 5.4 & 5.4 / 5.4 / 5.4 & 5.3 / 5.3 / 5.3 \\ 
              &           &                      & $X \times W$         &                      & 5.0 / 5.1 / 5.3 & 0.7 / 5.3 / 5.4 & 1.0 / 5.5 / 5.3 \\ 
              &           & No                   & --                     &                      & 5.6 / 5.6 / 5.6 & 5.4 / 5.5 / 5.5 & 5.3 / 5.4 / 5.4 \\ 
              &           &                      & $W$                  &                      & 5.4 / 5.4 / 5.2 & 3.8 / 5.1 / 5.4 & 4.1 / 5.6 / 5.6 \\ 
              &           &                      &                        & \multicolumn{1}{l}{} &                    & \multicolumn{1}{l}{} & \multicolumn{1}{l}{} \\
              & 3         & Yes                  & $X$                    &                      & 5.3 / 5.3 / 5.3 & 5.4 / 5.4 / 5.4 & 5.0 / 5.0 / 5.0 \\ 
              &           &                      & $X \times W$         &                      & 5.2 / 5.2 / 5.1 & 2.9 / 5.0 / 4.9 & 3.0 / 5.0 / 5.0 \\ 
              &           & No                   & --                     &                      & 5.0 / 5.1 / 5.1 & 4.9 / 4.9 / 4.9 & 4.9 / 4.9 / 4.9  \\
              &           &                      & $W$                  &                      & 5.2 / 5.2 / 5.1 & 4.7 / 5.4 / 5.3 & 4.5 / 5.2 / 5.3 \\ 
              &           &                      &                        & \multicolumn{1}{l}{} &                    & \multicolumn{1}{l}{} & \multicolumn{1}{l}{} \\
$H_1$         & 1         & Yes                  & $X$                    &                      & 47.4 / 47.7 / 47.7 & 47.6 / 47.6 / 47.6 & 47.2 / 47.2 / 47.2 \\ 
              &           &                      & $X \times W$         &                      & 47.9 / 48.0 / 70.1 & 44.8 / 70.2 / 70.4 & 44.3 / 70.2 / 70.1 \\ 
              &           & No                   & --                     &                      & 47.0 / 47.2 / 47.2 & 48.1 / 48.2 / 48.2 & 46.6 / 46.9 / 46.9 \\ 
              &           &                      & $W$                  &                      & 46.8 / 47.1 / 70.6 & 46.4 / 50.0 / 70.8 & 46.6 / 50.0 / 70.2  \\    
              &           &                      &                        & \multicolumn{1}{l}{} &                    & \multicolumn{1}{l}{} & \multicolumn{1}{l}{} \\
              & 2         & Yes                  & $X$                    &                      &  41.6 / 41.8 / 41.8 & 42.4 / 42.4 / 42.4 & 42.5 / 42.4 / 42.4 \\ 
              &           &                      & $X \times W$         &                      & 42.2 / 42.3 / 60.8 & 35.3 / 60.2 / 59.9 & 35.1 / 60.5 / 60.3 \\ 
              &           & No                   & --                     &                      & 41.4 / 41.5 / 41.5 & 41.9 / 42.1 / 42.1 & 41.3 / 41.4 / 41.4 \\ 
              &           &                      & $W$                  &                      & 41.3 / 41.4 / 60.1 & 40.2 / 47.7 / 60.1 & 40.9 / 48.5 / 60.4 \\ 
              &           &                      &                        & \multicolumn{1}{l}{} &                    & \multicolumn{1}{l}{} & \multicolumn{1}{l}{} \\
              & 3         & Yes                  & $X$                    &                      & 37.5 / 37.6 / 37.6 & 36.9 / 36.9 / 36.9 & 37.2 / 37.2 / 37.2 \\ 
              &           &                      & $X \times W$         &                      & 37.5 / 37.6 / 44.6 & 36.1 / 44.2 / 44.2 & 35.4 / 43.7 / 43.7  \\ 
              &           & No                   & --                     &                      & 37.4 / 37.5 / 37.5 & 37.3 / 37.4 / 37.4 & 37.0 / 37.1 / 37.1 \\
              &           &                      & $W$                  &                      & 37.5 / 37.6 / 44.9 & 36.6 / 38.8 / 43.9 & 36.7 / 39.4 / 44.2 \\ \hline
\end{tabular}}
\begin{tablenotes}
\item Notes: $H_0$ and $H_1$ denote the null and alternative conditions, respectively. $X$ stratified indicates whether $X$ is used for stratification. $S(\cdot)$ are the covariates used for stratification. $X \times W$ indicates that both $X$ and $W$ are used to form the strata. -- indicates that no covariate is used to form the strata. SR, simple randomization; SBR, stratified block randomization; SBCD, stratified biased-coin design. The three numbers in each column correspond to the rejection probabilities for the usual test, modified test, and stratified-adjusted test.
\end{tablenotes}
\label{tab:simulation_equal_multi}
\end{table}

\begin{table}[]
\caption{Rejection probabilities (in percentage points) for three interaction tests (on categorical $X$), stratification mechanisms, and randomization methods under unequal allocation, $\pi = 2/3$, and sample size $n = 800$}{
\tabcolsep=4.25pt
 \renewcommand\arraystretch{0.5}
\small
\begin{tabular}{cccccccc}
\hline
              &           & \multicolumn{2}{c}{Stratification covariates} &                      & \multicolumn{3}{c}{Randomization methods}                        \\ \cline{3-4} \cline{6-8}
\multicolumn{1}{l}{} &\multicolumn{1}{c}{Model} & $X$ stratified         & $S(\cdot)$             &                      & SR                & SBR                  & SBCD                  \\ \hline
$H_0$         & 1         & Yes                  & $X$                    &                      & 5.0 / 5.1 / 5.1 & 5.2 / 5.2 / 5.2 & 5.5 / 5.4 / 5.4 \\ 
              &           &                      & $X \times W$         &                      & 5.3 / 5.3 / 5.8 & 1.9 / 5.7 / 5.7 & 1.9 / 5.7 / 5.3 \\ 
              &           & No                   & --                     &                      & 5.5 / 5.6 / 5.6 & 5.2 / 5.2 / 5.2 & 5.8 / 5.8 / 5.8 \\ 
              &           &                      & $W$                  &                      & 5.4 / 5.3 / 5.8 & 5.3 / 5.8 / 5.7 & 5.4 / 5.8 / 5.8 \\ 
              &           &                      &                        & \multicolumn{1}{l}{} &                    & \multicolumn{1}{l}{} & \multicolumn{1}{l}{} \\
              & 2         & Yes                  & $X$                    &                      & 5.5 / 5.5 / 5.5 & 5.3 / 5.3 / 5.3 & 5.1 / 5.1 / 5.1 \\ 
              &           &                      & $X \times W$         &                      & 5.2 / 5.3 / 5.7 & 1.2 / 5.3 / 5.2 & 1.3 / 5.7 / 5.2 \\
              &           & No                   & --                     &                      & 5.4 / 5.5 / 5.5 & 5.2 / 5.3 / 5.3 & 5.2 / 5.1 / 5.1 \\
              &           &                      & $W$                  &                      & 5.1 / 5.1 / 5.2 & 4.5 / 5.4 / 5.0 & 4.8 / 5.6 / 5.6 \\ 
              &           &                      &                        & \multicolumn{1}{l}{} &                    & \multicolumn{1}{l}{} & \multicolumn{1}{l}{} \\
              & 3         & Yes                  & $X$                    &                      & 5.3 / 5.3 / 5.3 & 5.3 / 5.3 / 5.3 & 5.6 / 5.6 / 5.6 \\
              &           &                      & $X \times W$         &                      & 5.3 / 5.4 / 5.5 & 2.6 / 4.8 / 4.7 & 2.9 / 5.0 / 4.9 \\
              &           & No                   & --                     &                      & 4.8 / 4.9 / 4.9 & 5.0 / 5.1 / 5.1 & 5.3 / 5.4 / 5.4 \\ 
              &           &                      & $W$                  &                      & 5.3 / 5.3 / 5.5 & 4.9 / 5.2 / 5.3 & 4.7 / 5.0 / 5.4 \\ 
              &           &                      &                        & \multicolumn{1}{l}{} &                    & \multicolumn{1}{l}{} & \multicolumn{1}{l}{} \\
$H_1$         & 1         & Yes                  & $X$                    &                      & 39.8 / 40.1 / 40.1 & 40.3 / 40.2 / 40.2 & 40.6 / 40.1 / 40.1 \\ 
              &           &                      & $X \times W$         &                      & 41.0 / 41.0 / 65.7 & 35.0 / 65.0 / 65.3 & 35.5 / 64.6 / 64.7 \\ 
              &           & No                   & --                     &                      & 40.2 / 40.4 / 40.4 & 40.3 / 40.5 / 40.5 & 40.9 / 41.0 / 41.0 \\ 
              &           &                      & $W$                  &                      & 39.8 / 40.0 / 65.6 & 40.1 / 41.7 / 65.8 & 39.3 / 40.6 / 64.8 \\  
              &           &                      &                        & \multicolumn{1}{l}{} &                    & \multicolumn{1}{l}{} & \multicolumn{1}{l}{} \\
              & 2         & Yes                  & $X$                    &                      &  48.7 / 48.7 / 48.7 & 49.8 / 49.8 / 49.8 & 48.8 / 49.0 / 49.0 \\ 
              &           &                      & $X \times W$         &                      & 49.5 / 49.7 / 66.7 & 45.4 / 66.0 / 66.1 & 45.0 / 66.4 / 66.6 \\ 
              &           & No                   & --                     &                      & 49.1 / 49.2 / 49.2 & 49.5 / 49.7 / 49.7 & 48.9 / 48.9 / 48.9 \\ 
              &           &                      & $W$                  &                      & 49.2 / 49.2 / 65.9 & 48.5 / 52.8 / 65.8 & 48.9 / 53.1 / 66.6 \\
              &           &                      &                        & \multicolumn{1}{l}{} &                    & \multicolumn{1}{l}{} & \multicolumn{1}{l}{} \\
              & 3         & Yes                  & $X$                    &                      & 33.2 / 33.5 / 33.5 & 33.6 / 33.6 / 33.6 & 33.9 / 33.6 / 33.6 \\
              &           &                      & $X \times W$         &                      & 33.7 / 33.9 / 41.5 & 32.0 / 41.4 / 41.4 & 32.1 / 41.3 / 41.0 \\ 
              &           & No                   & --                     &                      & 33.8 / 33.8 / 33.8 & 32.9 / 33.1 / 33.1 & 33.4 / 33.5 / 33.5 \\ 
              &           &                      & $W$                  &                      & 33.5 / 33.5 / 41.7 & 33.1 / 34.2 / 40.9 & 33.5 / 34.6 / 41.2 \\ \hline
\end{tabular}}
\begin{tablenotes}
\item Notes: $H_0$ and $H_1$ denote the null and alternative conditions, respectively. $X$ stratified indicates whether $X$ is used for stratification. $S(\cdot)$ are the covariates used for stratification. $X \times W$ indicates that both $X$ and $W$ are used to form the strata. -- indicates that no covariate is used to form the strata. SR, simple randomization; SBR, stratified block randomization; SBCD, stratified biased-coin design. The three numbers in each column correspond to the rejection probabilities for the usual test, modified test, and stratified-adjusted test.
\end{tablenotes}
\label{tab:simulation_unequal_multi}
\end{table}

\section{Clinical Trial Example}\label{section:real_data}
We compare the performances of all three tests on the synthetic data of the Nefazodone cognitive behavioral-analysis system of psychotherapy (CBASP) trial, which was conducted to compare the efficacy of three treatments for chronic depression \citep{keller2000}.  We focus on two of the treatments: Nefazodone and CBASP. The synthetic data are generated following \cite{Liu2023}. The total number of generated units is 600, and the outcome of interest is the final score of the 24-item Hamilton rating scale for depression. Details of the baseline covariates can be found in Table \ref{tab:syn_details}.  

We stratify the units based on GENDER and $\text{HAMD17}^d$, where $\text{HAMD17}^d$ is the categorized HAMD17 score, determined by the relative values of 18 and 21. Moreover, we categorize AGE into $\text{AGE}^d$ based on the relative values of 40 and 49. Subsequently, we implement simple randomization, stratified block randomization, and stratified biased-coin design to obtain the treatment assignments under equal allocation, $\pi = {1}/{2}$. We investigate the interaction of two covariates: the stratification covariate $\text{HAMD17}^d$ and additional covariate $\text{AGE}^d$. The results are summarized in Table \ref{tab:syn}.

Table \ref{tab:syn} shows that for covariate $\text{HAMD17}^d$, all three interaction tests yield small \textit{p}-values, indicating the presence of an interaction effect for $\text{HAMD17}^d$. For covariate $\text{AGE}^d$, the interaction effect is marginally significant. Notably, the stratified-adjusted test produces smaller \textit{p}-values than those obtained using the usual and modified test.

\begin{table}[ht]
	\caption{Description of baseline covariates}
	\centering
  \renewcommand\arraystretch{0.8}
	\begin{tabular}{ll}
		\hline
		Covariate  &  Description \\ 
		\hline
		AGE	& Age of patients (in years)\\ 
		GENDER &  1 female and 0 male\\ 
		HAMA\_SOMATI & HAMA somatic anxiety score\\
		HAMD17 & Total HAMD-17 score\\
		HAMD24 &  Total HAMD-24 score\\
		HAMD COGIN & HAMD cognitive disturbance score\\
		Mstatus2 & Marriage status: 1 if married or living with someone and 0 otherwise\\
		NDE & Number of depressive episodes\\
		TreatPD & Treated for past depression: 1 yes and 0 no\\
		\hline
	\end{tabular}
	\begin{tablenotes}
		\item[1] Notes: HAMD, Hamilton Depression Scale; HAMA, Hamilton Anxiety Scale.
	\end{tablenotes}
	\label{tab:syn_details}
\end{table}

\begin{table}[H]
	\caption{\textit{p}-values (in percentage points) for three interaction tests for synthetic Nefazodone CBASP trial data, $\pi = 1/2$, and sample size $n = 600$}
	\centering
\begin{tabular}{cccc}
\hline
Covariate         & SR                    & SBR                           & SBCD                   \\ \hline
$\text{HAMD17}^d$ &  4.95 / 4.89 / 3.99 & 4.26 / 3.74 / 3.74 & 3.88 / 3.37 / 3.38 \\
$\text{AGE}^d$    & 12.67 / 12.59 / 7.70 & 12.61 / 11.90 / 7.80 & 12.45 / 11.70 / 7.42 \\  \hline
\end{tabular}
	\begin{tablenotes}
		\item[1] Notes: \textit{p}-values (in percentage points) for interaction tests on two covariates ($\text{HAMD17}^d$, $\text{AGE}^d$) under three randomization methods: simple randomization (SR), stratified block randomization (SBR), and stratified biased-coin design (SBCD). Covariates GENDER and $\text{HAMD17}^d$ are used for stratification. The three numbers in each column correspond to the rejection probabilities for the usual test, modified test, and stratified-adjusted test.
	\end{tablenotes}
	\label{tab:syn}
\end{table}

\end{singlespace}

\end{document}